\newcounter{myexamplecounter}
\newcounter{mypropositioncounter}
\newtheorem{proposition}[mypropositioncounter]{Proposition}
\newtheorem{example}[myexamplecounter]{Example}
\newcommand{\definedterm}[1]{} % {\textsl{({#1}).}}
\newcommand{\examplegoal}[1]{} % {\textsl{({#1}).}}
\newcommand{\abbreviationpolice}[1]{{#1}}
\newcommand{\resp}{\abbreviationpolice{resp.}}
\newcommand{\cf}{\abbreviationpolice{cf.}}
\newcommand{\eg}{\abbreviationpolice{e.g.,}}
\newcommand{\etc}{\abbreviationpolice{etc.}}
\newcommand{\ie}{\abbreviationpolice{i.e.,}}
\newcommand{\wrt}{\abbreviationpolice{w.r.t.}}
\newcommand{\st}{\abbreviationpolice{s.t.}}
\newcommand{\vs}{\abbreviationpolice{vs.}}
\newcommand{\inconsistent}{\textsl{inconsistent}}
\newcommand{\implied}{\textsl{implied}}
\newcommand{\determined}{\textsl{determined}}
\newcommand{\dependent}{\textsl{dependent}}
\newcommand{\deepirrelevant}{\textsl{d-irrelevant}}  % Toni
\newcommand{\shallowirrelevant}{\textsl{s-irrelevant}} % Toni
\newcommand{\deepfixable}{\textsl{d-fixable}}
\newcommand{\shallowfixable}{\textsl{s-fixable}}
\newcommand{\deepremovable}{\textsl{d-removable}}
\newcommand{\shallowremovable}{\textsl{s-removable}}
\newcommand{\deepsubstitutable}{\textsl{d-substitutable}}
\newcommand{\shallowsubstitutable}{\textsl{s-substitutable}}
\newcommand{\deepinterchangeable}{\textsl{d-interchangeable}}
\newcommand{\shallowinterchangeable}{\textsl{s-interchangeable}}
\newcommand{\betweenparenth}[1]{%
  \left(\begin{array}{l} #1 \end{array}\right)}
\newcommand{\win}{\textsf{WIN}}
\newcommand{\out}{\textsf{out}}
\newcommand{\sol}{\textsf{sol}}
\newcommand{\sce}{\textsf{sce}}
\newcommand{\domain}{{\mathbb D}}
\newcommand{\implication}{\rightarrow}
\newcommand{\equivalence}{\leftrightarrow}
\title{%
  %On Consistency and Other Properties of Quantified Constraints
  Generalizing Consistency and other Constraint Properties to Quantified Constraints
%  Properties for Quantified Constraints: \\ 
%  Definitions and Complexity
}
\author{
  LUCAS BORDEAUX \\ Microsoft Research \\
  MARCO CADOLI \\ Universit\`a di Roma ``La Sapienza'' \\
  TONI MANCINI \\ Universit\`a di Roma ``La Sapienza''}
\begin{abstract}
  Quantified constraints and Quantified Boolean Formulae are typically
  much more difficult to reason with than classical constraints,
  because quantifier alternation makes the usual notion of
  \emph{solution} inappropriate. As a consequence, basic properties 
  of Constraint Satisfaction Problems (CSP),
  such as consistency or substitutability, are not completely
  understood in the quantified case. These properties are important
  because they are the basis of most of the reasoning methods used to
  solve classical (existentially quantified) constraints, and one would like to
  benefit from similar reasoning methods in the resolution of quantified constraints.

  In this paper, we show that most of the properties that are used by
  solvers for CSP can be generalized to quantified CSP. 
  This requires a re-thinking of a number of basic concepts; in particular, 
  we propose a notion of \emph{outcome} that generalizes the classical notion
  of solution and on which all definitions are based.
  We propose a systematic study of the relations which hold between these
  properties, as well as complexity results regarding the decision of
  these properties.  Finally, and since these problems are typically
  intractable, we generalize the approach used in CSP and propose weaker, 
  easier to check notions based on \emph{locality}, which allow to detect
  these properties incompletely but in polynomial time.
\end{abstract}
\keywords{Constraint Satisfaction, Quantified Constraints, Quantified Boolean Formulae}
\begin{document}
            
\begin{bottomstuff} 
  Authors'address:
  Lucas Bordeaux, Microsoft Research Ltd, Roger
  Needham Building, J J Thomson Avenue, Cambridge CB3 0FB, United
  Kingdom. \newline
  Marco Cadoli and Toni Mancini, Universit\`a di Roma \emph{La Sapienza},
  Dipartimento di Informatica e Sistemistica, Via Salaria 113, 00198 Rome,
  Italy. \newline
  A preliminary version of this paper appears in \emph{Proc. of the 20th 
  National Conf. on Artificial Intelligence}, published by the American
  Association of Artificial Intelligence 
  \cite{Bordeaux-Cadoli-Mancini:AAAI:2005}. The current paper is a revised and
  extended version that includes proofs of all results.
\end{bottomstuff}
            
\maketitle

%%%%%%%%%%%%%%%%%%%%%%%%%%%%%%%%%%%%%%%%%%%%%%%%%%%%%%%%%%%%%%%%%%%%%%%%%%%%%%

\section{Introduction}

  \subsection{Quantified Constraints}

  Quantified Constraint Satisfaction Problems (QCSP) have recently received 
  increasing attention from the Artificial Intelligence community
  \cite{%
    Bordeaux-Monfroy:CP:2002,%
    Borner-Bulatov-Jeavons-Krokhin:CSL:2003,%
    Chen:AAAI:2004,%
    Chen:ECAI:2004,%
    Mamoulis-Stergiou:REPORT:2004,%
    Gent-Nightingale-Rowley:ECAI:2004,%
    Gent-Nightingale-Stergiou:IJCAI:2005,%
    Verger-Bessiere:CP:2006,%
    Benedetti-Lallouet-Vautard:IJCAI:2007,% 
    Bordeaux-Zhang:SAC:2007}.
  A large number of solvers are now available for Quantified Boolean
  Formulae (QBF), which represent the particular case of QCSP where
  the domains are Boolean and the constraints are clauses, see \eg\
  \cite{%
    Buening-Karpinski-Flogel:IC:1995,%
    Cadoli-Giovanardi-Schaerf:AAAI:1999,%
    Cadoli-Schaerf-Giovanardi-Giovanardi:JAR:2002,%
    Rintanen:IJCAI:1999}
  for early papers on the subject, and
  \cite{%
    Benedetti:LPAR:2004,%
    Zhang:AAAI:2006,%
    Samulowitz-etal:CP:2006,%
    Samulowitz-Bacchus:SAT:2006}
  for descriptions of state-of-the-art techniques for QBF. 
  The reason behind this trend
  is that QCSP and QBF are natural generalizations of CSP and SAT
  that allow to model a wide range of problems not directly
  expressible in these formalisms, and with applications in Artificial 
  Intelligence and verification.

  \subsection{Reasoning with Quantified Constraints}

  Quantified constraints are typically much more difficult to reason
  with than classical constraints. To illustrate this difficulty, let us start
  by an example of property we would like to characterize formally, and let us
  suggest why a number of naive attempts to define this property are not
  suitable. Consider the formula:
  \[
  \phi: ~~~ \forall x \in [3,10].~ \exists y \in [1,15]. ~ x = y.
  \] 
  We would like to ``deduce'' in a sense
  that $y \in [1,10]$ or, in other words, that the values $[11, 15]$ are 
  \emph{inconsistent} for $y$. Such a property will in particular be useful to
  a search-based solver: if this inconsistency is revealed, then the solver can
  safely save some effort by skipping the branches corresponding to the values 
  $y \in [11, 15]$.

  A first attempt to define this notion of consistency would be to use an
  implication and to say, for instance, that value
  $a$ is consistent for $y$ iff $\phi \rightarrow (y = a)$. 
  But there is clearly a problem with this approach since the occurrence of $y$ 
  on the right-hand-side of the implication is unrelated to its occurrences
  in formula $\phi$, which fall under the scope of a quantifier.
  One may attempt to circumvent this problem by putting the implication 
  under the scope of the quantifiers, and to say, for instance, that 
  $a$ is consistent for $y$ iff 
  $\forall x \in [3,10].~ \exists y \in [1,15]. ~ (x = y) \rightarrow (y = a$). 
  But with this definition any value would in fact be consistent, even $y=17$. 
  This is because for every $x$, we have a value for $y$ that falsifies the 
  left-hand side of the implication, thereby making the implication true. 

  Another approach that looks tempting at first but is also incorrect is
  to say that $a$ is inconsistent for $y$ iff the formula obtained 
  by fixing the domain of $y$ to $\{a\}$ is false. With this definition
  we would deduce that all values $a \in [1, 15]$ are inconsistent \wrt\ 
  variable $y$, since the formula $\forall x \in [3,10].~ \exists y \in
  [a,a]. ~ x = y$ is false in each and every case.
  Other variants of these definitions can be considered, but one quickly gets
  convinced that there is simply no natural way to define consistency, or any
  other property like \emph{interchangeability}, using implications or instantiations.  
%  The same problem holds for all the other properties used for automated reasoning 
%  on Constraint Satisfaction Problems, for instance interchangeability. 
  To define these notions properly in the case of quantified constraints,
  we need a new framework, which is what this paper proposes.

  \subsection{Overview of our Contributions}

  This paper shows that the definitions of consistency, substitutability, 
  and a wider range of CSP properties can be generalized to quantified constraints.
  Note that all our definitions and results also hold for the
  particular case of Quantified Boolean Formulas.
  These definitions, presented in Section \ref{sec:definitions-of-properties},
  are based on a simple game-theoretic framework and in particular on the new
  notion of \emph{outcome} which we identify as a key to define and
  understand all QCSP properties.
  We then classify these properties in Section \ref{sec:classification} 
  by studying the relationships between them (\eg\ some can be shown to be stronger than 
  others). We investigate the simplifications allowed by these properties in
  Section \ref{sec:simplifications},  and we characterize the
  complexity of their associated decision problem in Section \ref{sec:complexity}.   
  Since, as these complexity results show, determining whether
  any property holds is typically intractable in general, we investigate
  the use of the same tool which is used in classical CSP, namely
  \emph{local reasoning}, and we propose in Section \ref{sec:locality} local versions of 
  these properties that can be decided in polynomial time.
  Concluding comments follow in Section \ref{sec:conclusion}.
  We start (Section \ref{sec:Quantifier-Constraint-Satisfaction-Problems})
  by introducing some material on QCSP.

\section{Quantified Constraint Satisfaction Problems}
\label{sec:Quantifier-Constraint-Satisfaction-Problems}

  In this section, we present all the definitions related to QCSP,
  as well as some ``game-theoretic'' material. 

  %%%%%%%%%%%%%%%%%%%%%%%%%%%%%%%%%%%%%%%%%%%%%%%%%%%%%%%%%%%%%%%%%%%%%%%%%%%%%

  \subsection{Definition of QCSP}
  \label{subsec:Definition-of-QCSP}

  Let $\domain$ be a finite set. 
  Given a finite set $V$ of variables, a $V$-tuple $t$ with components in $\domain$, 
  is a mapping that associates a value
  $t_x \in \domain$ to every $x \in V$; a \emph{$V$-relation} over $\domain$ is a set
  of $V$-tuples with components in $\domain$.

  \begin{definition} \definedterm{QCSP}
    A \emph{Quantified Constraint Satisfaction Problem} (QCSP) is a
    tuple $\phi = \langle X, Q, D, C \rangle$ where: $X = \{x_1,
    \dots, x_n\}$ is a linearly ordered, finite set of
    \emph{variables}; $Q$ associates to each variable $x_i \in X$ a
    \emph{quantifier} $Q_{x_i} \in \{\forall, \exists\}$; $D$
    associates to every variable $x_i \in X$ a \emph{domain} $D_{x_i}
    \subseteq \domain$; and $C = \{c_1,\ldots c_m\}$ is a finite set of \emph{constraints},
    each of which is a $V$-relation with components in $\domain$ for some $V \subseteq X$.
    \label{definition-QCSP}
  \end{definition}
  
  \subsubsection{Notation}
  
  \begin{itemize}
  
  \item
  The notation $\prod_{x \in V} D_x$, where $V \subseteq X$ is a subset of variables, 
  will denote a \emph{Cartesian product} of domains, \ie\ the set of $V$-tuples
  $t$ that are such that $t_x \in D_x$ for each $x \in V$.

  \item
  The notation $t[x:=a]$, where $t$ is an $X$-tuple, $x \in X$ is a variable and 
  $a \in \domain$ is a value, will be used for \emph{instantiation}, \ie\ it denotes the
  tuple $t'$ defined by $t'_x = a$ and $t'_y = t_y$ for each $y \in X \setminus \{x\}$.
  
  \item
  The notation $t|_U$, where $t$ is a $V$-tuple and $U \subseteq V$ is a subset of its 
  variables, will denote the \emph{restriction} of $t$ to $U$, \ie\ the $U$-tuple $t'$
  such that $t'_x = t_x$ for each $x \in U$. (Note that $t$ is undefined on every 
  $y \in V \setminus U$.)
  
  \end{itemize}
  
  We use the following shorthands to denote the set of existential (\resp\
  universal) variables, the set of variables of index $\leq j$,
  and the sets of existential/universal variables of index $\leq j$:
  \[
  \begin{array}{rclrcl}
    & & &  ~~
    X_j & \!\!\!=\!\!\! & \{x_i \in X ~|~ i \leq j\}
    \\
    E   & \!\!\!=\!\!\! & \{x_i \in X ~|~ Q_{x_i} \!=\! \exists\} & 
    E_j & \!\!\!=\!\!\! & E \cap X_j
    \\
    A   & \!\!\!=\!\!\! & \{x_i \in X ~|~ Q_{x_i} \!=\! \forall\} & 
    A_j & \!\!\!=\!\!\! & A \cap X_j
  \end{array}
  \]
  
  \subsubsection{Satisfaction, Solutions and Truth of a QCSP}
  \label{subsec-truth}
  Given a QCSP $\phi = \langle X, Q, D, C \rangle$ as in Definition~\ref{definition-QCSP},
  an $X$-tuple $t$ is
  said to \emph{satisfy} the set of constraints $C$ if $t|_V \in c$
  for each $V$-relation $c \in C$. The set of $X$-tuples satisfying
  all constraints of $\phi$ is called the set of \emph{solutions} to $C$ and is
  denoted by $\sol^\phi$. 
  
  Although QCSPs are defined in a form that closely follows the traditional
  definition of CSPs, the most immediate way to define their semantics is to use
  rudimentary logic with equality. (We shall see in the next section that
  we can in a second step forget about the logic and think alternatively 
  in terms of tuples and functions when this is more convenient.)
  A QCSP $\langle X, Q, D, C \rangle$ represents a logical formula whose vocabulary
  includes $n$ names for the variables (for convenience, we simply denote these names
  as $x_1 \dots x_n$) and $m$ names for the constraints ($c_1 \dots c_m$).
  The formula is defined as:
  \[
  F: ~~
  Q_{x_1} x_1 \in D_{x_1} \dots Q_{x_n} x_n \in D_{x_n} ~ 
  (F_1 \wedge \dots \wedge F_m).
  \]
  where each $F_i$ is obtained from the corresponding $V$-relation $c_i$: let
  $\{y_1, .., y_p\} = V$, then $F_i$ is simply the formula $c_i(y_1, .., y_p)$, 
  \ie\ we apply the name of the constraint to the right argument list.
  Each $D_{x_i}$ explicitly lists the values specified in the QCSP definition,
  for instance $\forall x \in \{a,b\}. \phi$ is a shorthand for 
  $\forall x. (x=a \vee x=b) \rightarrow \phi$.
  
  Let $I$ be the interpretation function that associates to each constraint
  name the corresponding relation; the QCSP is said to be \emph{true} if
  formula $F$ is true in the domain $\domain$ and \wrt\ the interpretation $I$,
  \ie\ iff 
  $
  \langle \domain, I\rangle \models F
  $.

  %%%%%%%%%%%%%%%%%%%%%%%%%%%%%%%%%%%%%%%%%%%%%%%%%%%%%%%%%%%%%%%%%%%%%%%%%%%%%

  \subsection{Game-Theoretic Material}
  \label{subsec:Game-theoretic-material}

  Quantifier alternation is best understood using an ``adversarial'' or
  ``game-theoretic'' viewpoint, where two players interact. One of them is allowed to
  choose the values for the existential variables, and its aim is to
  ultimately make the formula true, while the other assigns the
  universal variables and aims at falsifying it. 
  We introduce several definitions leading to our central notion of
  \emph{outcome}, which will be shown to shed light on the definition
  of properties in the next section. Our presentation of the basic 
  game-theoretic material is 
  inspired from \cite{Chen:ECAI:2004}, who uses a similar notion of winning
  strategy.
  
  The following QCSP (written using the usual, self-explanatory logical notation 
  rather than in the form of a tuple $\langle X, Q, D, C \rangle$) 
  will be used to illustrate the notions throughout this sub-section:
  
  \begin{equation}
  \begin{array}{l}
    \begin{array}{r}
      \exists x_1 \in [1,10]. ~
      \forall x_2 \in [1,10]. ~\exists x_3 \in [1,10]. \\
      \forall x_4 \in [1,10].  ~\exists x_5 \in [1,10].
    \end{array} ~
    \begin{array}{r} ~ \\
      x_1 + x_2 + x_3 + x_4 + x_5 = 30
    \end{array}
  \end{array}
  \label{eq-main-example}
  \end{equation}
  
  This formula can be thought of as a game between two players 
  assigning, respectively, the odd and even variables. The players
  draw in turn between 1 and 10 sticks from a heap containing
  originally 30 sticks; the player who takes the last stick
  wins. 
  
  \subsubsection{Strategies}
  
  The first notion we need is the notion of \emph{strategy}:

  \begin{definition} \definedterm{strategy}
    A strategy is a family $\{s_{x_i} ~|~ x_i \in E\}$ where each $s_{x_i}$
    is a function of signature 
    $\left(\prod_{y \in A_{i-1}} D_y\right) \rightarrow D_{x_i}$.
  \end{definition}

  In other words, a strategy defines for each existential variable $x_i$ a function
  that specifies which value to pick for $x_i$ depending on the values 
  assigned to the universal variables that precede it.
  Note in particular that, if the first $k$ variables of the problem are
  quantified existentially, we have for every $i \leq k$ a constant
  $s_{x_i} \in D_{x_i}$ which defines which value should directly be
  assigned to variable $x_i$.
  \begin{example}%
    A strategy for the QCSP (\ref{eq-main-example})
    can be defined by $s_{x_1}() = 8$; 
    $s_{x_3}$ associates to every $\{x_2\}$-tuple $t$ the value
    $s_{x_3}(t) = 11 - t_{x_2}$ 
    and $s_{x_5}$ associates to every $\{x_2, x_4\}$-tuple $t$ the value
    $s_{x_5}(t) = 11 - t_{x_4}$. 
    This strategy specifies that we first draw 8 sticks, 
    then for the next moves we shall draw 11 minus what the opponent just drew. 
    \label{ex-strategy-example}
  \end{example}

  \subsubsection{Scenarios}

  The tuple of values that will eventually be
  assigned to the variables of the problem depends on two things: 1)
  the strategy we have fixed \emph{a priori}, and 
  2) the sequence of choices of the ``adversary'',
  \ie\ the values that are assigned to the universal variables. Given a 
  particular strategy, a number of potential \emph{scenarios} may therefore
  arise, depending on what the adversary will do. These scenarios are
  defined as follows:

  \begin{definition} \definedterm{scenario}
    The set of scenarios of a strategy $s$ for a QCSP $\phi$, denoted
    $\sce^\phi(s)$, is the set of tuples $t \in \prod_{x \in X} D_x$ 
    such that, for each $x_i \in E$, we have:
    \[
      t_{x_i} = s_{x_i}(t|_{A_{i-1}})
    \]
  \end{definition}

  In other words, the values for the existential variables are
  determined by the strategy in function of the values assigned to the
  universal variables preceding it. There is no restriction, on the
  contrary, on the values assigned to universal variables: this reflects the fact
  that we
  model the viewpoint of the existential player, and the adversary may play
  whatever she wishes to play.
  
  \setcounter{myexamplecounter}{0}
  \begin{example}(Ctd.)
    An example of scenario for the strategy defined previously
    is the tuple defined by $x_1 = 8, x_2 = 4, x_3 = 7, x_4 = 1, x_5 =
    10$. On the contrary, the tuple $x_1 = 8, x_2 = 4, x_3 = 7, x_4 =
    1, x_5 = 5$ is not a scenario since the value 5 for $x_5$ does not
    respect what is specified by $s_{x_5}$.
  \end{example}

  \subsubsection{Winning Strategies}
  
  Of particular interest are the strategies whose scenarios are all solutions. We
  call them \emph{winning strategies}:

  \begin{definition} \definedterm{winning strategy}
    A strategy $s$ is a winning strategy for the QCSP $\phi$ if every
    scenario $t \in \sce^\phi(s)$ satisfies the constraints of $\phi$
    (in other words: if $\sce^\phi(s) \subseteq \sol^\phi$).
  \end{definition}

  We denote by $\win^\phi$ the set of winning strategies of the QCSP
  $\phi$.
  
  \setcounter{myexamplecounter}{0}
  \begin{example}(Ctd.)
    In the strategy $s$ defined in Example \ref{ex-strategy-example}, 
    any scenario $t$ is of the form
    $x_1=8, x_2=a, x_3=11-a, x_4 = b, x_5=11-b$. As a result the sum 
    always evaluates to $8 + a + 11-a + b + 11-b  = 30$ 
    and $s$ is therefore a winning strategy.
    In fact, this strategy is the only winning one; one can check, for instance,
    that the strategy $s'$ defined by $s'_{x_1}() = 7$; $s'_{x_3}(t) = 7$ 
    and $s'_{x_5}(t) = 7$ is not winning.
  \end{example}
  
  The following proposition is essential in that it justifies the use of
  the game-theoretic approach\footnote{%
    Proofs of all propositions can be found in the online Appendix \ref{app-proofs}.
    }:
  
  \begin{proposition}
    A QCSP is true  (as defined in Section \ref{subsec-truth}) 
    iff it has a winning strategy.
    \label{prop-qcsp-true-iff-win}
  \end{proposition}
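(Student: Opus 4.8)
The plan is to prove both directions simultaneously by induction on the number $n$ of quantified variables, peeling off the outermost quantifier $Q_{x_1}$. This is the familiar Skolemization correspondence: truth under the Tarski semantics of a prenex formula matches the existence of Skolem functions (here, a winning strategy) for its existential variables. For each value $a \in D_{x_1}$ I write $\phi_a$ for the $(n-1)$-variable QCSP obtained by dropping $x_1$ from the prefix and substituting $x_1 := a$ into every constraint.

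The base case $n = 0$ is immediate: there are no variables, the unique strategy is the empty family, its only scenario is the empty tuple, and $\phi$ is true exactly when that tuple satisfies the (variable-free) conjunction of constraints, \ie\ exactly when the empty strategy is winning.

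For the inductive step I first record two elementary correspondences. \emph{(i) Satisfaction:} a tuple $t \in \prod_{x \in X} D_x$ with $t_{x_1} = a$ satisfies the constraints of $\phi$ iff its restriction $t|_{X \setminus \{x_1\}}$ satisfies those of $\phi_a$; this is immediate from the definition of substitution into a relation together with the fact that the matrix is a conjunction. \emph{(ii) Scenarios and strategies:} given a strategy $s$ for $\phi$ and a value $a$, define a strategy $s^a$ for $\phi_a$ by evaluating each $s_{x_i}$ on the tuple that agrees with the given argument and additionally assigns $a$ to $x_1$; then $t$ is a scenario of $s$ in $\phi$ with $t_{x_1}=a$ iff $t|_{X \setminus \{x_1\}}$ is a scenario of $s^a$ in $\phi_a$, and conversely a family $\{s^a\}_a$ of strategies for the $\phi_a$ assembles back into a strategy for $\phi$. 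Combining (i) and (ii) yields the key statement: $s$ is winning for $\phi$ iff, for the relevant values of $a$, the induced $s^a$ is winning for $\phi_a$.

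With these in hand the two cases close quickly. If $Q_{x_1} = \exists$, then $x_1$ is a leading existential variable, so $s_{x_1}$ is a constant $a = s_{x_1}()$ and $x_1$ never occurs among the arguments of the later functions; by the Tarski semantics $\phi$ is true iff some $\phi_a$ is true, which by the induction hypothesis and (i)--(ii) is iff $\phi$ has a winning strategy (pick a value $a$ witnessing truth and extend a winning strategy of $\phi_a$ by setting $s_{x_1}()=a$). If $Q_{x_1} = \forall$, then $x_1 \in A_{i-1}$ for every later existential $x_i$, so each $s_{x_i}$ reads off $t_{x_1}$; here $\phi$ is true iff \emph{every} $\phi_a$ is true, which is iff one can pick a winning strategy for each $a$ and glue them, using the value of $x_1$ to select the branch, into a single winning strategy for $\phi$. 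The main obstacle is purely the bookkeeping in (ii): verifying that the glued functions have exactly the required signatures $\left(\prod_{y \in A_{i-1}} D_y\right) \rightarrow D_{x_i}$ and that the scenario correspondence is a genuine bijection on the relevant tuples, together with the degenerate cases $D_{x_1} = \botset$ (a universal $x_1$ then makes $\phi$ vacuously true with no scenarios to check, while an existential $x_1$ admits no constant and hence no strategy, matching falsity).
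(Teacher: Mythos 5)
Your proof is correct, and it takes a genuinely different route from the paper's. The paper's own proof is not an induction: it observes that strategies are exactly interpretations of Skolem functions, and then invokes the classical equivalence between a formula and its second-order Skolemization, applied iteratively to every existential variable, with the bounded quantifiers handled by conjoining $f(x_1,\dots,x_n)\in D_y$; truth of the QCSP then coincides with truth of the Skolem normal form $F'$, which coincides with the existence of a winning strategy. You instead peel off the outermost quantifier and induct on $n$: your restriction construction $s\mapsto s^a$ gives one direction, and your gluing of a family $\{s^a\}_{a\in D_{x_1}}$ into a single strategy gives the other --- and that gluing step in the $\forall$ case is exactly where the existence of Skolem functions is hiding in the paper's argument. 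The trade-off is clear. The paper's route is shorter and conceptually crisp (strategies \emph{are} Skolem functions), but it is avowedly a sketch that leans on a known metatheorem and leaves the iteration and the bounded-quantifier bookkeeping implicit. Your route is fully self-contained within the QCSP formalism, and it surfaces details the paper glosses over, notably the degenerate case $D_{x_1}=\varnothing$ (a universal $x_1$ makes the problem vacuously true and every strategy vacuously winning, while an existential $x_1$ admits no strategy at all, matching falsity); the cost is the signature and scenario-bijection bookkeeping in your step (ii), which you rightly flag as routine but which is the only place where genuine care is needed.
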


  \subsubsection{Outcome}
  
  Whereas the preceding material is well-known and is used, for instance, in
  \cite{Chen:ECAI:2004}, we introduce the following new notion:

  \begin{definition} \definedterm{outcome}
    The set of outcomes of a QCSP $\phi$, denoted $\out^\phi$,
    is the set of all scenarios
    of all its winning strategies, \ie\ it is defined as:
    \[
    \out^\phi
    ~=~ 
    \bigcup_{s \in \win^\phi}   \sce^\phi (s)
    \]
  \end{definition}
  
  \setcounter{myexamplecounter}{0}
  \begin{example}(Ctd.)
    Since our example has a unique winning strategy it is easy to
    characterise its set of outcomes: these are all the tuples of the
    form $x_1=8, x_2=a, x_3=11-a, x_4 = b, x_5=11-b$, 
    with $a,b \in [1,10]$.
  \end{example}

  Outcomes are related to the classical notion of solution in the following
  way: in general any outcome satisfies the set of constraints $C$, so we have
  $\out^\phi \subseteq \sol^\phi$, and the equality  $\out^\phi = \sol^\phi$ holds
  if all variables are existential. On the other hand let us emphasize the fact
  that not all solutions are necessarily outcomes in general: in our example the
  tuple $x_1=6, x_2=6, x_3=6, x_4 = 6, x_5=6$ is for instance a solution as
  it satisfies the unique constraint ($x_1 + x_2 + x_3 + x_4 + x_5 = 30$).
  But there is no winning strategy whose set of scenarios includes this particular
  tuple, and it is therefore not an outcome.
  
  The notion of outcome is a generalization of the notion of
  solution that takes into account the quantifier prefix of the constraints.
  Our claim in the following is that \emph{outcomes play a role as central for
  QCSP as the notion of solution does in CSP, and that most definitions can be
  based on this notion}.
  
  \subsubsection{Summary of the notions and notations}

  To summarize, we have defined 3 sets of tuples ($\sol^\phi$: the set
  of solutions, $\sce^\phi(s)$: the set of scenarios of strategy $s$, 
  and $\out^\phi$: the set of outcomes) and one set of strategies 
  ($\win^\phi$: the set of winning strategies).
  All the game-theoretic notions we have introduced
  are illustrated in Fig. \ref{fig:killer-example}, 
  where %
  %
  %
  % In Fig. \ref{fig:killer-example} 
  we consider the QCSP represented by the 
  logical formula:
  \begin{equation}
  \exists x_1 \in [2,3] ~ \forall x_2 \in [3,4] ~\exists
  x_3 \in [3,6]. ~x_1 + x_2 \leq x_3.
  \label{eq-qcsp-example}
  \end{equation}
  \emph{And} and \emph{or}
  labels on the nodes correspond to universal and existential
  quantifiers, respectively. The solutions are all triples $\langle
  x_1, x_2, x_3 \rangle$ \st\ $x_1 + x_2 \leq x_3$. The only two
  winning strategies assign $x_1$ to $2$: one ($s_1$) systematically assigns $x_3$
  to 6 while the 2nd one ($s_2$) assigns it to $x_2 + 2$ (note that
  each strategy is constrained to choose one unique branch for each
  existential node). The scenarios of $s_1$ and $s_2$ are therefore
  those indicated, while the set of outcomes of the QCSP is the
  union of the scenarios of $s_1$ and $s_2$ (also shown in bold
  line).

 \begin{figure}[h]
  \centering
  \includegraphics[scale=.65, bb=120 282 455 568]{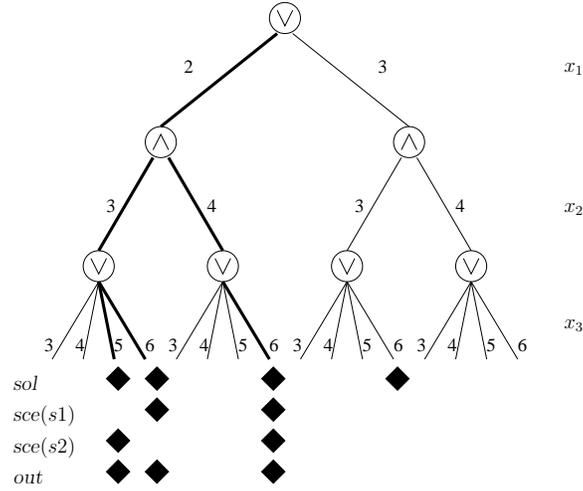}
  \caption{%
      A summary of the game-theoretic notions used in this paper.
   }
  \label{fig:killer-example}
  \end{figure}

\section{Definitions of the CSP Properties}
\label{sec:definitions-of-properties}

  \subsection{Informal Definitions of the Properties}

  A major part of the CSP literature aims at identifying properties of
  particular values of some variables. The goal is typically to
  simplify the problem by ruling out the possibility that a variable
  $x_i$ can be assigned to a value $a$. This can be done when
  one of the following properties holds, with respect to variable $x_i$:
  
  \begin{itemize}
  
  \item Value $a$ is guaranteed not to participate in any solution: $a$ is
  \emph{inconsistent} for $x_i$ \cite{Mackworth:AI:1977}.
  
  \item
  Another value $b$ can replace $a$ in any solution involving it: $a$ is
  \emph{substitutable} to $b$ for $x_i$ \cite{Freuder:AAAI:1991}.
  
  \item All solutions involving $a$ can use another value instead: $a$ is
  \emph{removable} for $x_i$ \cite{Bordeaux-Cadoli-Mancini:LPAR:2004}.
  
  \end{itemize}

  On the contrary, some other properties give an indication that
  instantiating $x_i$ to $a$ is a good idea:
  
  \begin{itemize}
  
  \item All solutions assign value $a$ to variable $x_i$:
  $a$ is  \emph{implied} for $x_i$ 
  \cite{Monasson-Zecchina-Kirkpatrick-Selman-Troyansky:NATURE:1999};
  
  \item We have the guarantee to find a solution with value $a$
  on $x_i$, if a solution exists at all: $a$ is said to be
  \emph{fixable} for $x_i$ \cite{Bordeaux-Cadoli-Mancini:LPAR:2004}. 
  
  \end{itemize}
  
  While all the preceding are properties of particular \emph{values}, 
  related properties of \emph{variables} are also of interest: 
  
  \begin{itemize}
  
  \item
  The value assigned to a variable $x_i$ is forced to a unique
  possibility: $x_i$ is \emph{determined}.
  
  \item
  The value of variable $x_i$ is a function of
  the values of other variables: $x_i$ is \emph{dependent}.
  
  \item 
  Whether a tuple is a solution or not does not depend on the value assigned
  to  variable $x_i$: $x_i$ is \emph{irrelevant}.
  
  \end{itemize}

  In this section, we propose generalizations of the definitions of
  the main CSP properties to quantified constraints. 
  For the sake of homogeneity, we adopt the terminology used in the paper
  \cite{Bordeaux-Cadoli-Mancini:LPAR:2004} for the names of the properties.
  
  We adopt a predicate notation and write, \eg\ $\textsl{p}^\phi(x_i, a)$ for the
  statement ``value $a$ has property \textsl{p} for variable $x_i$ (in
  QCSP $\phi$)''. 
  The superscript $\phi$ will be omitted in order to simplify the notation
  whenever there is no ambiguity regarding which QCSP is considered.
  
  We present our definitions in two steps: 
  Section~\ref{subsec-basic-definitions} introduces
  the basic definitions, which we call \emph{deep} definitions, for reasons that
  will become clear in the rest of this section. We then notice in Section \ref{subsec:shallow}
  that the properties can be made more general, leading to our \emph{shallow} definitions.

  \subsection{Basic Definitions}
  \label{subsec-basic-definitions}

  The first definitions we propose are identified by a \textsl{d} prefix
  and qualified as ``deep'' when an ambiguity with the definitions in forthcoming 
  Section~\ref{subsec:shallow} is possible.
  They are based on directly rephrasing the original CSP definitions, but using
  the notion of outcomes in place of solutions:

  \begin{definition} \definedterm{``Deep'' properties}
    We define the properties of inconsistency, implication, deep fixability,
    deep substitutability, deep removability, deep interchangeability, 
    determinacy, deep irrelevance and dependency, as follows, 
    for all $x_i \in X$, $a, b \in D_{x_i}$, $V \subseteq X$:
    \[
    \begin{array}{rll}
      \inconsistent(x_i, a) & \equiv~~ &
        \forall t \in \out. ~~ t_{x_i} \not= a
      \\
      \implied(x_i, a) & \equiv &
        \forall t \in \out. ~~ t_{x_i} = a
      \\
      \\
      \deepfixable(x_i, a) & \equiv &
        \forall t \in \out. ~~ t[x_i := a] \in \out
      \\
      \\ 
      \deepsubstitutable(x_i, a, b) & \equiv &
        \forall t \in \out. ~~ (t_{x_i} = a) \implication (t[x_i:=b] \in \out)
      \\
      \\
      \deepremovable(x_i, a) & \equiv &
        \forall t \in \out. ~~
          (t_{x_i} = a)  \implication (\exists b \neq a.
          ~~ t[x_i := b] \in \out)
      \\
      \\
      \deepinterchangeable(x_i, a, b) & \equiv & 
         \deepsubstitutable(x_i, a, b) \wedge \deepsubstitutable(x_i, b, a)
      \\
      \\
      \determined(x_i) & \equiv &
         \forall t \in \out. ~~\forall b \not= t_{x_i}. ~~
           t[x_i := b] \not\in \out
      \\
      \deepirrelevant(x_i) & \equiv &
        \forall t \in \out. ~~\forall b \in D_{x_i}. ~ t[x_i := b] \in \out
      \\
      \\
      \dependent(V, x_i) & \equiv &
         \forall t, t' \in \out. ~~
           (t|_V = t'|_V) \implication (t_{x_i} = t'_{x_i})
%         \big( \forall x_j \in V. ~ t_{x_j} = t'_{x_j}\big)
%         \implication (t_{x_i} = t'_{x_i})
    %
    \end{array}
    \]
    \label{def:deep-properties}
  \end{definition}

  We note that the definition of consistency is equivalent to the one proposed 
  in \cite{Bordeaux-Monfroy:CP:2002}; it is nevertheless expressed in a
  simpler and more elegant way that avoids explicitly dealing with
  And/Or trees. All other definitions are new. 
%  We now exemplify some of the definitions:
  
  \begin{example} \examplegoal{Illustration of Def. \ref{def:deep-properties}}
    Consider the QCSP:
    \[
    \exists x_1 \in [2,3] ~ \forall x_2 \in [3,4]
    ~\exists x_3 \in [3,6].  ~x_1 + x_2 \leq x_3
    \]
    (\cf\
    Fig. \ref{fig:killer-example}). We have: 
    $\inconsistent(x_1, 3)$,  
%    $\deepfixable(x_1, 2)$, $\implied(x_1, 2)$,
%    $\deepsubstitutable(x_1, 3, 2)$,
    $\inconsistent(x_3, 3)$,
    $\inconsistent(x_3, 4)$,
    $\deepsubstitutable(x_3, 5, 6)$, $\deepfixable(x_3,6)$,
    $\deepremovable(x_3,5)$, and $\implied(x_1, 2)$.
    \label{example-deep}
  \end{example}

  A choice we made in Definition \ref{def:deep-properties} requires a justification: 
  if we consider, for instance,
  fixability, one may think that a more general definition could be obtained if we
  wrote $\forall t \in \out. ~t[x_i := a] \in \underline{\sol}$ instead of 
  $\forall t \in \out. ~t[x_i := a] \in {\out}$. 
  Similarly, the question arises whether the other definitions that involve 
  the set \out\ in the right-hand side of an implication (either implicitly or 
  explicitly) could be strengthened be using the set \sol\ instead.
  This is not the case: except for one property, namely \emph{determinacy},
  the modified definitions would actually be strictly equivalent:
  
  \begin{proposition}
    Deep fixability could equivalently be defined by the condition
    $\forall t \in \out. t[x_i := a] \in \sol$;
    Deep substitutability could be equivalently defined by 
    $\forall t \in \out.$  $(t_{x_i} = a) \implication (t[x_i:=b] \in \sol)$;
    deep removability by 
    $\forall t \in \out.
          (t_{x_i} = a)  \implication (\exists b \neq a.
          t[x_i := b] \in \sol)$;
    and deep irrelevance by
    $\forall t \in \out. \forall b \in D_{x_i}. ~ t[x_i := b] \in \sol$.
    \label{prop-no-outcome}
  \end{proposition}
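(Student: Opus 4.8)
The plan is to prove each of the four stated equivalences by establishing both implications, observing that one of them is free. Since every outcome is a solution ($\out\subseteq\sol$), any condition of the form ``$t[x_i:=a]\in\out$'' entails the weaker ``$t[x_i:=a]\in\sol$''; hence each deep property, as originally defined with $\out$ on the right of its implication, trivially implies its $\sol$-variant. All the content is therefore in the converse: assuming the $\sol$-variant, I must show that the relevant modified tuple $t':=t[x_i:=b]$ is not merely a solution but an actual outcome. In every case this reduces to the same question: given an outcome $t$ and a value $b\in D_{x_i}$ with $t[x_i:=b]\in\sol$, when is $t[x_i:=b]\in\out$?

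To answer it I would first isolate a structural characterization of membership in $\out$ that avoids reasoning about whole strategies at once. Writing $\psi(t,j,c)$ for the QCSP obtained from $\phi$ by instantiating $x_1,\dots,x_{j-1}$ to their values in $t$, instantiating $x_j:=c$, and leaving $x_{j+1},\dots,x_n$ with their original quantifiers and domains, I claim
\[ t\in\out \iff t\in\sol \text{ and } \psi(t,j,c)\text{ is true for every }x_j\in A\text{ and every }c\in D_{x_j}. \]
This follows from the ``winning subtree'' picture behind Proposition~\ref{prop-qcsp-true-iff-win}: a tuple $t$ is an outcome exactly when the single branch $t$ can be completed to a winning strategy, which forces $t$ to be a solution and forces every universal deviation $x_j:=c$ along the path to lead to a winnable subproblem, namely $\psi(t,j,c)$; conversely, splicing a winning strategy of each $\psi(t,j,c)$ onto the branch $t$ yields a winning strategy whose scenarios include $t$. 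A by-product of this splicing, which I would record as a lemma, is that whenever $t\in\out$, every scenario of every winning strategy of a deviation subgame $\psi(t,j,c)$ is itself an outcome of $\phi$ (it is realized by the spliced strategy).

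For \deepfixable, \deepsubstitutable, and \deepirrelevant{} the converse now goes through uniformly. Fix an outcome $t$ and let $t'=t[x_i:=b]$ be the relevant modified tuple ($b=a$ for fixability; the target value $b$ for substitutability, taken on outcomes $t$ with $t_{x_i}=a$; an arbitrary $b\in D_{x_i}$ for irrelevance). I verify the two conditions of the characterization for $t'$. First, $t'\in\sol$ is exactly what the $\sol$-variant asserts. Second, for each universal $x_j$ and each $c$ I must show $\psi(t',j,c)$ is true. When $j\le i$ the instantiated prefix does not involve $x_i$, so $\psi(t',j,c)=\psi(t,j,c)$, which is true because $t\in\out$. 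When $j>i$, take a winning strategy $\tau$ of $\psi(t,j,c)$; by the by-product lemma each scenario $w$ of $\tau$ is an outcome of $\phi$, and each such $w$ agrees with $t$ at $x_i$. Here the \emph{universal} quantification of the hypothesis over all outcomes is what does the work: applying the $\sol$-variant to each outcome $w$ (the condition $w_{x_i}=a$ needed for substitutability holds since $i<j$) gives $w[x_i:=b]\in\sol$, so $\tau$ — which does not assign $x_i$, as $i<j$ — is also winning for $\psi(t',j,c)$. Hence $t'\in\out$, which is the deep property.

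The step I expect to be the genuine obstacle is \deepremovable. The reduction is identical, but now the $\sol$-variant guarantees only, for each outcome $w$, the existence of \emph{some} value $b_w\ne a$ with $w[x_i:=b_w]\in\sol$, and $b_w$ may depend on $w$. In the $j>i$ case I would need a \emph{single} value $b$ for which $w[x_i:=b]\in\sol$ holds simultaneously for all scenarios $w$ of $\tau$, because $\psi(t',j,c)$ pins $x_i$ to one value throughout; the per-scenario witnesses do not obviously assemble into such a uniform $b$. This is precisely the point at which the transfer argument that settles the other three properties can break down, so it is here that I would look hardest — either for an additional argument forcing the witness to be uniform, or for the counterexample that would show the removability equivalence needs a stronger hypothesis.
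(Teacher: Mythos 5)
Your handling of the fixability, substitutability and irrelevance clauses is correct, but it takes a genuinely different route from the paper's. The paper never isolates a characterization of membership in $\out$: for fixability it takes a winning strategy $s'$ with $t \in \sce(s')$ and performs a \emph{uniform} surgery on it, replacing $s'_{x_i}$ by the constant function with value $a$; then $\sce(s) = \{\tau[x_i := a] : \tau \in \sce(s')\}$, every $\tau \in \sce(s')$ is an outcome, so the $\sol$-variant hypothesis (applied, as in your argument, to \emph{all} outcomes rather than just to $t$) gives $\sce(s) \subseteq \sol$, \ie\ $s$ is winning and $t[x_i := a] \in \out$. Substitutability is the same surgery with $s_{x_i}(u) := b$ whenever $s'_{x_i}(u) = a$, and irrelevance is reduced to fixability via Prop.~\ref{prop-classifications}. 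That argument is shorter than your deviation-subgame characterization plus splicing, but it is only meaningful when $x_i$ is existential (a strategy has no component at a universal variable), whereas your argument never modifies anything at position $i$ and therefore covers universal $x_i$ as well; your characterization lemma is also reusable elsewhere. One detail of yours needs tightening: you invoke the by-product lemma for every $c \in D_{x_j}$, but the splicing you describe only reroutes genuine deviations $c \neq t_{x_j}$; the simplest repair is to state the characterization with deviations $c \neq t_{x_j}$ only — that is all the backward (splicing) direction needs — after which your application is unaffected, since $t'_{x_j} = t_{x_j}$ for every $j \neq i$.

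For removability, what you flagged is not a gap in your proof; it is the correct conclusion, and you should trust it over the paper. The claimed equivalence is \emph{false}. Take
\[
\phi:\quad \exists x_1 \in \{0,1,2\}.~ \forall x_2 \in \{0,1\}.~ c(x_1,x_2),
\qquad
c = \{(0,0),\,(0,1),\,(1,0),\,(2,1)\}.
\]
The only winning strategy sets $x_1 = 0$, hence $\out = \{(0,0),(0,1)\}$. The $\sol$-variant holds for $a = 0$: $(0,0)[x_1:=1] = (1,0) \in \sol$ and $(0,1)[x_1:=2] = (2,1) \in \sol$. But $\deepremovable(x_1,0)$ fails: every outcome assigns $0$ to $x_1$, so no tuple $t[x_1 := b]$ with $b \neq 0$ is an outcome. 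This is exactly the non-uniformity you isolated — the witness $b$ must depend on the universal variable $x_2$, which comes \emph{after} $x_1$, and no single strategy can realize both replacements. The paper's own proof of this clause commits precisely that error: it restates the hypothesis with a choice function $f$ and claims to ``exhibit a strategy in which every scenario $t$ with $t_{x_i}=a$ is changed into $t[x_i := f(t)]$'', but $f(t)$ depends on universal variables following $x_i$, which the strategy component $s_{x_i}$ cannot see, so the modified object is not a strategy at all. (The error propagates: the removability item of Prop.~\ref{prop:quant-properties-are-stronger}, whose proof cites Prop.~\ref{prop-no-outcome}, fails on the same example.) So the correct resolution of this proposition is your three proved clauses together with a counterexample for the fourth.
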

  
  This proposition will play a role in the proof of other results.
  Defining determinacy by $\forall t \in \out. \forall b \not= t_{x_i}.
  t[x_i := b] \not\in \sol$, instead of the definition we used. \ie\ 
  $\forall t \in \out. \forall b \not= t_{x_i}. t[x_i := b] \not\in \out$,
  would on the contrary give a slightly different notion: we note that
  in this case (because of the negation implicitly on the right-hand
  side of the implication, \ie\ $t[x_i := b] \not\in \out$), the definition
  would become \emph{weaker}. For instance, in Fig. \ref{fig:killer-example}, we
  would not have $\determined(x_1, 2)$ because the tuple
  $t = \langle 2, 3, 6\rangle$ is such that $t[x_1 := 3] \in \sol$.

  \subsection{Generalization: Shallow Definitions}
  \label{subsec:shallow}

  The previous definitions are correct in a sense that will be made
  formal in Section~\ref{sec:simplifications}. They are nevertheless
  overly restrictive in some cases, as the following example shows:

  \begin{example}
    Consider the QCSP:
    \[
    \forall x_1 \in [1,2] ~ \exists x_2 \in [3,4]
    ~\exists x_3 \in [4,6].  ~x_1 + x_2 = x_3.
    \] 
    The winning strategies can
    make arbitrary choices for $x_2$ as long as they set $x_3$ to the value
    $x_1 + x_2$, and the outcomes are the triples $\langle
    1,3,4\rangle$, $\langle 1,4,5\rangle$, $\langle
    2,3,5\rangle$, $\langle 2,4,6\rangle$. Note that for variable $x_2$,
    neither values 3 nor 4 are deep-fixable, and none is deep-substitutable
    to the other. This somehow goes against the intuition that we are
    indeed free to choose the value for $x_2$.
    \label{example-shallow}
  \end{example}

  The reason why our previous definition did not capture this case is
  that it takes into account the values of the variables occurring
  \emph{after} the considered variable: values 3 and 4 are
  interchangeable (for instance) only if the QCSPs resulting from
  these instantiations can be solved \emph{using the same strategy}
  for all the subsequent choices---this is why
  we called these definitions \emph{deep} (with a \textsl{d} prefix).
  On the contrary, we can formulate \emph{shallow} definitions of the
  properties, which accept value 4 as a valid substitute for 3 because
  \emph{in any sequence of choices leading to the possibility of
  choosing 3 for $x_2$, value 4 is also a valid option}.

  \begin{definition} \definedterm{``shallow'' properties}
    We define the properties of shallow fixability, 
    substitutability, removability, interchangeability, and
    irrelevance, as follows:
    \noindent
    \[
    \begin{array}{l}
      \shallowfixable(x_i, a) \equiv \\ \hspace*{.5cm}
        \forall t \in \out. ~
        \exists t' \in \out. ~
        \betweenparenth{
           t|_{X_{i-1}} = t'|_{X_{i-1}}  %\\
           \wedge ~~ t'_{x_i} = a
        }
    \end{array}
    \]
    
    \[
    \begin{array}{l}
      \shallowsubstitutable(x_i, a, b) \equiv \\ \hspace*{.5cm}
        \forall t \in \out. ~
        t_{x_i} = a \implication \\ \hspace*{1cm}
        \exists t' \in \out. ~
        \betweenparenth{
          (t|_{X_{i-1}} = t'|_{X_{i-1}})
          ~\wedge~ (t'_{x_i} = b)
        }
    \\ \\
      \shallowremovable(x_i, a) \equiv \\ \hspace*{.5cm}
        \forall t \in \out. ~
        t_{x_i} = a \implication \\ \hspace*{1cm}
        \exists t' \in \out. ~
        \betweenparenth{
          t|_{X_{i-1}} = t'|_{X_{i-1}} \land t'_{x_i} \neq a
        }
    \\ \\
      \shallowinterchangeable(x_i, a, b) \equiv \\ \hspace*{.5cm}
         \shallowsubstitutable(x_i, a, b) \wedge % \\ \hspace*{.5cm}
         \shallowsubstitutable(x_i, b, a)
    \\ \\
      \shallowirrelevant(x_i) \equiv \\ \hspace*{.5cm}
        \forall t \in \out. ~
        \forall b \in D_{x_i}.  \\ \hspace*{1cm}
          \exists t' \in \out. ~
            \betweenparenth{
              (t|_{X_{i-1}} = t'|_{X_{i-1}})
                 ~\wedge~ (t'_{x_i} = b)
            }
    \end{array}
    \]
    \label{def:shallowProperties}
  \end{definition}

  One can check that with these definitions we handle Example 
  \ref{example-shallow} as expected:
  
  \setcounter{myexamplecounter}{2}
  \begin{example}(Ctd.)
    Considering again the QCSP:
    \[
    \forall x_1 \in [1,2] ~ \exists x_2 \in [3,4]
    ~\exists x_3 \in [4,6].  ~x_1 + x_2 = x_3,
    \]
    values 3 and 4 are shallow-interchangeable for variable $x_2$
    (both values are also shallow-fixable, shallow-removable, and variable $x_2$ is in fact
    shallow-irrelevant). The reason is that for each outcome $t$
    that assigns value 3 to $x_2$, there exists a tuple $t'$ such that
    $t'_{x_1} = t_{x_1}$ and $t'_{x_2} = 4$ (to $t = \langle 1,3,4\rangle$
    corresponds  $t' =\langle 1,4,5\rangle$; to 
    $\langle 2,3,5\rangle$ corresponds $\langle 2,4,6\rangle$), and vice-versa.
    
    This can be seen pictorially in Fig. \ref{fig:shallow}. On the
    left-hand side, we see why values 3 and 4 are not (for instance) 
    deep-interchangeable for $x_2$: the outcomes (branches) going through these values are 
    indeed different. Now on the right-hand side we see the viewpoint of
    the \emph{shallow} definitions: the strategy is only considered \emph{up to
    variable $x_2$}, and it is clear, then, that values 3 and 4
    are interchangeable.
  \end{example}

  \begin{figure}[htbp]
  \centering
  \includegraphics[scale=.5]{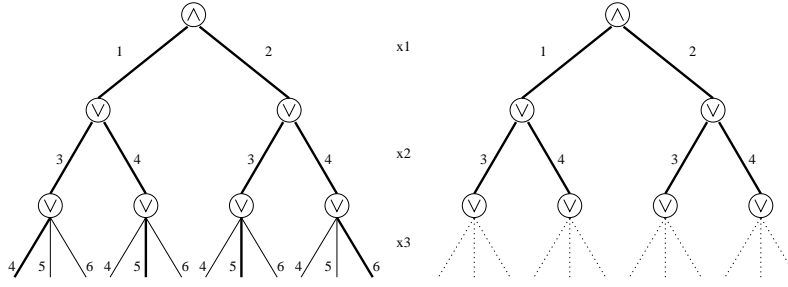}
  \caption{%
      Illustration of the notion of \emph{shallow} properties, as opposed to the \emph{deep}
      definitions.
   }
  \label{fig:shallow}
  \end{figure}
  
  We last remark that the distinction we have introduced between \emph{deep} and
  \emph{shallow} only makes sense for a subset of the properties. It is easy to see,
  for instance, that a shallow definition of \emph{inconsistency} would make no difference:
  this notion is defined by the statement $\forall t \in \out. ~~ t_{x_i} \not= a$, and
  this is equivalent to $\forall t \in \out. ~~ (t|_{X_{i}})_{x_i} \not= a$.

\section{Relations between the Properties}
%\section{Correctness of the definitions and relations between them}
\label{sec:classification}

  This section gives a number of results establishing the relations between
  the classes of properties (\eg\ deep, shallow) and between the properties
  themselves (substitutability, determinacy, \etc). 
  These results will also be used later (Section \ref{sec:simplifications})
  to prove that our definitions are useful, in that they allow 
  to simplify the considered QCSP while preserving some form of equivalence.
  
  \subsection{Relations between Classes of Properties}
  
  The basic relations between classical, deep, and shallow definitions, are the
  following: deep definitions are \emph{more general} than basic,
  existential ones, and the shallow definitions are \emph{more general} than the
  deep ones, in a sense that is explained formally in the following.
  
  \subsubsection{Deep definitions \vs\ classical definitions}

  We first note that, in the particular case where the quantifiers are all
  existential, the deep definitions of the properties (Definition~\ref{def:deep-properties}) 
  correspond to the classical CSP notions, simply because we have \out\ = \sol\ in that 
  case; in other words our definitions truly are \emph{generalizations} of the classical 
  definitions.
  In the general case, when the quantifiers are not restricted to be existential,
  we can still ignore the quantifier prefix and apply the classical definitions to the
  resulting existentially quantified CSP. The relations
  between the original QCSP and the relaxed CSP are the following:
  
  \begin{enumerate}

  \item
  The deductions made using the classical definitions are \emph{correct}:
  a property detected on the existentially quantified CSP, using the classical 
  definitions, will also hold for the QCSP.

  \item 
  This reasoning is \emph{incomplete}:
  if we do not take into account the quantifier prefix as our new definitions do,
  some properties cannot be detected.

  \end{enumerate}
  
  The \emph{correctness} can be stated formally as follows:
  
  \begin{proposition}
    Let $\phi = \langle X, Q, D, C\rangle$ be a QCSP and let 
    $\psi$ be the same QCSP but in which all quantifiers are existential,
    \ie\ $\psi = \langle X, Q', D, C\rangle$, with $Q'_{x} = \exists$, for all
    $x \in X$. We have (forall $x_i, a, b, V$):
    
    \begin{itemize}
    
    \item $\inconsistent^\psi(x_i, a) \implication \inconsistent^\phi(x_i, a)$;
    
    \item $\deepfixable^\psi(x_i, a) \implication \deepfixable^\phi(x_i, a)$;
    
    \item $\deepsubstitutable^\psi(x_i, a, b) \implication \deepsubstitutable^\phi(x_i, a, b)$;
    
    \item $\deepremovable^\psi(x_i, a) \implication \deepremovable^\phi(x_i, a)$;
    
    \item $\deepinterchangeable^\psi(x_i, a, b) \implication \deepinterchangeable^\phi(x_i, a, b)$;
    
    \item $\determined^\psi(x_i) \implication \determined^\phi(x_i)$;
    
    \item $\deepirrelevant^\psi(x_i) \implication \deepirrelevant^\phi(x_i)$;
    
    \item $\dependent^\psi(V, x_i)  \implication \dependent^\phi(V, x_i) $.
    
    \end{itemize}
    \label{prop:quant-properties-are-stronger}
  \end{proposition}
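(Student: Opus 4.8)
The crux of the argument is a single observation relating the outcome sets of $\phi$ and $\psi$. Since every variable of $\psi$ is existential, the remark following the definition of outcome gives $\out^\psi = \sol^\psi$; and since the set of solutions depends only on the constraints and the domains---both identical in $\phi$ and $\psi$, which differ only in $Q$---we have $\sol^\psi = \sol^\phi$. Combining these with the general containment $\out^\phi \subseteq \sol^\phi$ already noted for any QCSP, the chain
\[
\out^\phi ~\subseteq~ \sol^\phi ~=~ \sol^\psi ~=~ \out^\psi
\]
holds. The plan is to derive each of the eight implications from this chain, splitting the properties into two groups according to the shape of their right-hand side.

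The first group consists of the properties whose defining formula, read as a statement ranging over $\out$, has a conclusion that is either a plain (in)equality of values or a \emph{negated} membership: namely $\inconsistent$, $\determined$, and $\dependent$. For these I would argue directly from the inclusion $\out^\phi \subseteq \out^\psi$. A universally quantified statement that holds over the larger set $\out^\psi$ restricts immediately to the smaller set $\out^\phi$, which settles $\inconsistent$ and $\dependent$ at once. For $\determined$ the conclusion is $t[x_i := b] \notin \out$; since $\out^\phi \subseteq \out^\psi$, the $\psi$-hypothesis $t[x_i := b] \notin \out^\psi$ yields $t[x_i := b] \notin \out^\phi$, so the implication runs in the required direction. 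This is precisely the case in which one \emph{cannot} replace $\out$ by $\sol$, consistent with the exception flagged after Proposition~\ref{prop-no-outcome}.

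The second group consists of the ``positive'' properties whose conclusion asserts \emph{membership} in $\out$: $\deepfixable$, $\deepsubstitutable$, $\deepremovable$, $\deepinterchangeable$, and $\deepirrelevant$. Here the difficulty is that a direct inclusion argument would only place the witness tuple in $\out^\psi$, whereas the $\phi$-property demands it in the genuinely smaller, harder-to-characterize set $\out^\phi$. The tool that dissolves this obstacle is Proposition~\ref{prop-no-outcome}: I would first rewrite each of these properties, in both $\phi$ and $\psi$, with $\sol$ in place of $\out$ on the right-hand side. Then, given any $t \in \out^\phi \subseteq \out^\psi$ satisfying the relevant premise, the rewritten $\psi$-property supplies the required tuple inside $\sol^\psi = \sol^\phi$; invoking Proposition~\ref{prop-no-outcome} a second time, now for $\phi$, converts this back into the $\out^\phi$ statement that the $\phi$-property requires. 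Deep interchangeability then follows by applying the substitutability case in both directions.

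The genuinely substantive step is this double use of Proposition~\ref{prop-no-outcome} in the second group: without it one would be stuck trying to establish membership in the complicated set $\out^\phi$, whereas the proposition lets the whole argument pass through the simple, quantifier-insensitive set $\sol$, on which $\phi$ and $\psi$ agree. Everything else is a routine transcription of each definition against the containment chain above, so I expect no further obstacle once the chain and the $\sol$-rewriting are in place.
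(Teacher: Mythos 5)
Your proof is correct and follows essentially the same route as the paper's: both rest on the containment $\out^\phi \subseteq \sol^\phi = \sol^\psi = \out^\psi$, handle \inconsistent, \determined\ and \dependent\ by direct restriction of the universal quantification (with the negated membership running the right way), and handle the membership-type properties (\deepfixable, \deepsubstitutable, \deepremovable, \deepinterchangeable, \deepirrelevant) by routing through \sol\ via Proposition~\ref{prop-no-outcome}. The only cosmetic difference is that you invoke Proposition~\ref{prop-no-outcome} on the $\psi$ side as well, where the paper simply uses $\out^\psi = \sol^\psi$ directly; this changes nothing of substance.
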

  
  We note that the idea of relaxing universal quantifiers and approximating a QCSP 
  by a classical, existential CSP, has been considered implicitly 
  by several authors: the solver presented in 
  \cite{Benedetti-Lallouet-Vautard:IJCAI:2007} is built on top of a classical CP
  solver and its propagation mechanism essentially relies on the classical notion of
  inconsistency; other authors 
  \cite{Mamoulis-Stergiou:REPORT:2004,Gent-Nightingale-Stergiou:IJCAI:2005} have 
  investigated the use of substitutability in QCSP; here again the notion they have used
  was essentially the classical, existential one.
  
  Replacing a universal quantifier by an existential one is but one
  way to obtain a \emph{relaxation} of a QCSP. In \cite{Ferguson-OSullivan:IJCAI:2007},
  a more comprehensive list of relaxation techniques is studied. Interestingly this
  work essentially defines a relaxation as a transformation that guarantees that 
  \emph{if the relaxation is false, then so is the original problem}. In other words,
  the notion of relaxation is based on the truth of the QCSP. Proposition 
  \ref{prop:quant-properties-are-stronger} shows that \emph{quantifier relaxation}
  provides a way to do approximate reasoning on other properties than \emph{truth}.

  The \emph{incompleteness} of the reasoning on the existential relaxation is
  easily seen on an example:
  
  \setcounter{myexamplecounter}{1}
  \begin{example} (Ctd.)
    Consider the QCSP:
    \[
    \exists x_1 \in [2,3] ~ \forall x_2 \in [3,4]
    ~\exists x_3 \in [3,6].  ~x_1 + x_2 \leq x_3
    \]
    (See Fig. \ref{fig:killer-example}.)
    Noticeable properties are: 
    $\inconsistent(x_1, 3) $, 
    $\implied(x_1, 2)$,

    \noindent    
    $\deepfixable(x_1, 2)$, 
    $\deepremovable(x_1, 3)$,
    $\deepsubstitutable(x_1, 3, 2)$,
    $\determined(x_1)$.

    On the contrary if we apply the classical definition or, equivalently,
    consider the CSP 
    $
    \exists x_1 \in [2,3]~ \exists x_2 \in [3,4]~
    \exists x_3 \in [3,6].~ x_1 + x_2 \leq x_3
    $, \emph{none} of the properties holds, because of the tuple 
    $\langle 3, 3, 6 \rangle$. 
  \end{example}
  
  This confirms that the properties we
  have defined are new notions which do make a difference compared to
  classical CSP notions, and which allow a finer reasoning taking into
  account the quantifier prefix as well as the constraints themselves.

  \subsubsection{Shallow properties \vs\ deep properties}

  To complete the picture, we have the following relations between
  deep and shallow notions (the deep ones are more restrictive):
  \begin{proposition}
    For all variables $x_i$ and values $a$ and $b$, we have:
    
    \begin{itemize}
    
    \item $\deepfixable(x_i, a) \implication \shallowfixable(x_i, a)$;
    
    \item $\deepremovable(x_i, a) \implication \shallowremovable(x_i, a)$;
    
    \item $\deepsubstitutable(x_i, a, b) \implication \shallowsubstitutable(x_i, a, b)$;

    \item $\deepinterchangeable(x_i, a, b) \implication \shallowinterchangeable(x_i, a, b)$;
    
    \item $\deepirrelevant(x_i) \implication \shallowirrelevant(x_i)$.
    
    \end{itemize}
    \label{prop-shallow-stronger-than-deep}
  \end{proposition}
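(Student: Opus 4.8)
The plan is to exploit a single elementary observation about instantiation and to apply it uniformly to all five implications. The observation is that for any tuple $t$, any variable $x_i$, and any value $c \in D_{x_i}$, the instantiated tuple $t' = t[x_i := c]$ agrees with $t$ on every variable other than $x_i$; in particular, since $x_i$ has index $i$ and the set $X_{i-1} = \{x_j \mid j \leq i-1\}$ does not contain $x_i$, we have $t|_{X_{i-1}} = t'|_{X_{i-1}}$ together with $t'_{x_i} = c$. Each shallow property asks, for a suitable $t \in \out$, for the existence of a witness $t' \in \out$ that coincides with $t$ on $X_{i-1}$ and takes a prescribed value at $x_i$. The corresponding deep property furnishes precisely such a witness of the form $t[x_i := c]$, and the observation above guarantees it has the two required features.

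First I would treat fixability. Assuming $\deepfixable(x_i, a)$, I take an arbitrary $t \in \out$; the deep hypothesis gives $t[x_i := a] \in \out$, and setting $t' = t[x_i := a]$ yields a witness with $t'|_{X_{i-1}} = t|_{X_{i-1}}$ and $t'_{x_i} = a$, which is exactly what $\shallowfixable(x_i, a)$ demands. The substitutability and removability cases proceed identically: under the relevant deep hypothesis I fix an arbitrary $t \in \out$ with $t_{x_i} = a$ and use the tuple $t[x_i := b]$ (respectively, some $t[x_i := b]$ with $b \neq a$) produced by the deep property as the shallow witness $t'$, again invoking the observation to obtain $t'|_{X_{i-1}} = t|_{X_{i-1}}$. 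Irrelevance is handled the same way, now quantifying over all $b \in D_{x_i}$ and taking $t' = t[x_i := b]$.

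Interchangeability requires no separate argument: both the deep and the shallow versions are, by definition, conjunctions of substitutability in the two directions, so $\deepinterchangeable(x_i, a, b) \implication \shallowinterchangeable(x_i, a, b)$ follows immediately from the substitutability implication applied to the pair $(a,b)$ and to the pair $(b,a)$.

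I do not anticipate a genuine obstacle here; the entire content of the proposition is that a deep witness—an outcome differing from $t$ only at $x_i$—is automatically a shallow witness—an outcome agreeing with $t$ up to $x_i$. The only point demanding a little care is the indexing convention: one must confirm that instantiating $x_i$ does not disturb the restriction to $X_{i-1}$, \ie\ that $x_i \notin X_{i-1}$, which holds precisely because $X_{i-1}$ collects the variables of index strictly below $i$.
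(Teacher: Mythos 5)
Your proof is correct and follows essentially the same route as the paper's: in both, the deep witness $t[x_i := c]$ is observed to agree with $t$ on $X_{i-1}$ (since $x_i \notin X_{i-1}$) and to take the prescribed value at $x_i$, so it serves directly as the shallow witness, with interchangeability reduced to the two substitutability directions. No gaps; your explicit remark that instantiating $x_i$ cannot disturb the restriction to $X_{i-1}$ is exactly the point the paper uses implicitly.
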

  
  Note that whether a property holds is always dependent on
  the quantification order. In the case of shallow definitions, this is 
  even more true, because the ordering matters even within a block of
  variables \emph{of the same nature}, for instance when the quantifiers are
  all existential. To see that, consider the QCSP:
  \[
  \exists x_1 \in [1,2]
  ~ \exists x_2 \in [3,4] ~\exists x_3 \in [4,6].  ~x_1 + x_2 =
  x_3.
  \]
  Value 1 is shallow-substitutable to 2 for $x_1$, and $x_1$ is 
  shallow-irrelevant, while 1 is not deep-substitutable to 2 for $x_1$
  (\ie\ substitutable in the classical sense), nor is $x_1$ deep-irrelevant.
  The intuition behind this is that here we consider that $x_1$ is
  assigned first, and \emph{at this step} the two choices are
  equivalent. In other words, the property holds
  \emph{because we are considering the ordering $x_1, x_2, x_3$}. 
  
  Interestingly, shallow properties, and shallow substitutability in 
  particular, provide a new, general form of properties even for the case
  of classical CSP. These properties are more general because they take into
  account information on a particular variable ordering. An interesting
  question is to determine the variable ordering that allows to detect
  the highest number of substitutability properties in a given CSP.

  \subsection{Relations between Properties}
  
  As in the classical case \cite{Bordeaux-Cadoli-Mancini:LPAR:2004}, 
  we also have relations between the properties, for instance a
  value that is implied is also deep-fixable (and therefore also shallow fixable);
  a variable that is (deep/shallow) irrelevant is also (deep/shallow) fixable
  to any value, \etc\ 
  We list the most remarkable of these relations in the next proposition:

  \begin{proposition}
    The following relations hold between the properties (forall $x_i$,
    $a$ and $b$):
    \begin{itemize}
    
    \item $\inconsistent(x_i, a) \implication 
      \forall b \in D_{x_i}. ~\deepsubstitutable(x_i, a ,b)$;

    \item $\implied(x_i,a) \equivalence
      \forall b \in D_{x_i} \setminus \{a\}. ~\inconsistent(x_i,b)$;

    \item $\implied(x_i,a) \implication
      \deepfixable(x_i, a)$;

    \item $\inconsistent(x_i, a) \implication 
      \deepremovable(x_i,a)$;

    \item $\exists b \in D_{x_i} \setminus \{a\}. ~\deepsubstitutable(x_i,a,b)
      \implication 
      \deepremovable(x_i, a)$;

    \item $\exists b \in D_{x_i} \setminus \{a\}. ~\shallowsubstitutable(x_i,a,b)
      \implication 
      \shallowremovable(x_i, a)$;

    \item $\deepfixable(x_i, b) \equivalence
      \forall a \in D_{x_i}. ~\deepsubstitutable(x_i, a, b)$;

    \item $\shallowfixable(x_i, b) \equivalence
      \forall a \in D_{x_i}. ~\shallowsubstitutable(x_i, a, b)$;

    \item $\deepirrelevant(x_i) \equivalence
      \forall a \in D_{x_i}. ~\deepfixable(x_i, a)$;

    \item $\shallowirrelevant(x_i) \equivalence
      \forall a \in D_{x_i}. ~\shallowfixable(x_i, a)$.

    \end{itemize}
    \label{prop-classifications}
  \end{proposition}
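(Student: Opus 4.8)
The plan is to prove each of the ten relations by directly unfolding the deep and shallow definitions (Definitions \ref{def:deep-properties} and \ref{def:shallowProperties}) and performing elementary logical manipulation; no appeal to the structure of winning strategies is needed, since every statement on both sides stays within \out\ (in particular Proposition \ref{prop-no-outcome} is not required here). The single recurring fact I would isolate at the outset is that every outcome $t$ has $t_{x_i} \in D_{x_i}$, because outcomes are scenarios and hence members of $\prod_{x \in X} D_x$; this is what licenses the backward directions of the biconditionals to instantiate a value-quantifier at $a := t_{x_i}$.

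I would then group the relations by the technique they use. The two inconsistency relations (first and fourth) both follow by \emph{vacuity}: if $\inconsistent(x_i, a)$ holds then no outcome $t$ satisfies $t_{x_i} = a$, so any implication whose antecedent is $t_{x_i} = a$ is vacuously true, immediately giving $\deepsubstitutable(x_i, a, b)$ for every $b$ and giving $\deepremovable(x_i, a)$. The $\implied$/$\inconsistent$ equivalence (second) and the relation $\implied(x_i,a) \implication \deepfixable(x_i, a)$ (third) both rest on the observation that an outcome $t$ with $t_{x_i} = a$ satisfies $t[x_i := a] = t$; for the backward direction of the equivalence I would use the domain-membership fact to conclude that an outcome whose $x_i$-component avoids every $b \neq a$ in $D_{x_i}$ must take the value $a$.

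The substitutability-implies-removability relations (fifth and sixth) are witness arguments: given the witnessing $b \neq a$, any outcome $t$ with $t_{x_i} = a$ is handled by the tuple that substitutability produces, which already differs from $a$ on $x_i$ and so serves directly as the removability witness. The four biconditionals linking fixability to substitutability (seventh and eighth) and irrelevance to fixability (ninth and tenth) follow a common scheme. In the forward direction one weakens or specializes a quantifier: fixability already asserts the substitution for \emph{every} outcome, so it yields substitutability once we restrict to those $t$ with $t_{x_i} = a$, while irrelevance yields fixability by specializing its value-quantifier to $b := a$. In the backward direction one reinstates the dropped quantifier on a given outcome $t$: for the fixability/substitutability pair I would apply $\deepsubstitutable(x_i, a, b)$ at $a := t_{x_i}$ (legitimate by the domain-membership fact), and for the irrelevance/fixability pair I would apply $\deepfixable(x_i, b)$ directly to the tested value $b \in D_{x_i}$.

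I expect the only real obstacle to be organizational rather than mathematical. The deep cases and their shallow counterparts (the fifth/sixth, seventh/eighth, and ninth/tenth pairs) share the same logical skeleton: passing from deep to shallow replaces the clause ``$t[x_i := b] \in \out$'' by ``$\exists t' \in \out.~ t|_{X_{i-1}} = t'|_{X_{i-1}} \wedge t'_{x_i} = b$'' uniformly in both hypothesis and conclusion. Because each proof simply feeds the tuple furnished by its hypothesis into the matching slot of its conclusion, and no step relies on the substituted tuple being literally $t[x_i := b]$ rather than merely an outcome agreeing with $t$ up to $x_i$ and carrying value $b$ on $x_i$, the shallow arguments are obtained from the deep ones by this single substitution. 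I would therefore present each deep version in full and derive the shallow one by reference to it.
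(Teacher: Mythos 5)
Your proof is correct and follows essentially the same route as the paper's: direct unfolding of the deep and shallow definitions, with vacuity arguments for the two inconsistency relations, witness arguments for substitutability-implies-removability, and quantifier specialization/instantiation (licensed by $t_{x_i} \in D_{x_i}$ for every outcome $t$) for the fixability and irrelevance biconditionals. If anything you are slightly more thorough than the paper, whose proof writes out only the forward directions of the last four equivalences and leaves the backward instantiation $a := t_{x_i}$ implicit.
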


\section{Simplifications Allowed When the Properties Hold} 
\label{sec:simplifications}
%  \subsection{Correctness of the Definitions}

  The goal of reasoning on the properties of a QCSP is typically to 
  simplify the problem. In the cases we are interested in, this can
  be done in two ways: (1) by removing an element from the list of values 
  to consider for one of the variables, or (2) by instantiating a variable to a 
  particular value. Such simplifications are helpful for backtrack search 
  algorithms, which are typically considered when solving QCSP.
  
  We now show that the properties we defined allow simplifications
  that are \emph{correct}, in the sense that they do not alter the truth of the QCSP:
%  We say that the definitions are \emph{correct} if these simplifications do not alter the 
%  truth of the QCSP. We now check that our definitions are correct in this very
%  sense, more precisely: 

  \begin{itemize}
  
  \item
  If a value is removable for a given variable, then removing the
  value from the domain of that variable does not change the truth of the problem.
  
  \item
  If a value is fixable to a particular value for a given variable, 
  then instantiating the variable to this value does not change
  the truth of the problem.
  
  \end{itemize}
  
  The interest of the other properties lies essentially in their relation with the 
  two fundamental properties of removability and fixability, as expressed by
  Prop. \ref{prop-classifications}. For instance, an implied value is of
  interest essentially because it is fixable, and an irrelevant variable is
  of interest essentially because it is fixable to any value of
  its domain. Similarly, the interest of, e.g., inconsistent and substitutable 
  values is that they are removable. We therefore focus on proving the
  correctness of the two notions of removability and  fixability, and we will 
  consider their shallow forms: 
  recall that, by Prop. \ref{prop-shallow-stronger-than-deep}, the shallow are
  the stronger ones; a value which is deep-removable or deep-fixable is also
  shallow-removable or shallow-fixable, respectively.

  \subsection{Simplifying Existental Variables}

  Our whole game-theoretic approach is
  naturally biased towards existential variables: the notion of strategy
  considers that the values for the universal variables can be arbitrary, and
  specifies the values that should be taken for the existential ones.
  As a consequence, the approach is more naturally fitted to make deductions
  on the existential variables, and we first focus on this case.

  The simplifications allowed for an existential variable 
  when the removability property holds rely on the following Proposition:

  \begin{proposition}
    Let $\phi = \langle X, Q, D, C\rangle$ be a QCSP in which value $a
    \in D_{x_i}$ is shallow-removable for an existential variable $x_i$, 
    and let $\phi'$ denote the
    same QCSP in which value $a$ is effectively removed (\ie\ $\phi' =
    \langle X, Q, D', C\rangle$ where $D'_{x_i} = D_{x_i} \setminus
    \{a\}$ and $D'_{x_j} = D_{x_j}, \forall j \not= i$). Then $\phi$ is true
    iff $\phi'$ is true.
    \label{prop:correctness-of-removability}
  \end{proposition}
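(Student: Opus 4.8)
The plan is to use the characterization of truth via winning strategies (Proposition~\ref{prop-qcsp-true-iff-win}) and to treat the two directions separately. Throughout, I would exploit the fact that, since $D'_{x_j} = D_{x_j}$ for every $j \neq i$ and $D'_{x_i} = D_{x_i} \setminus \{a\} \subseteq D_{x_i}$, the two problems share exactly the same constraints and the same universal domains, while $\sol^{\phi'} = \sol^\phi$ (the set of solutions depends only on $C$). Consequently, every strategy for $\phi'$ is also a valid strategy for $\phi$ with an identical set of scenarios, and conversely a strategy for $\phi$ all of whose choices for $x_i$ avoid $a$ is a valid strategy for $\phi'$.

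The direction ``$\phi'$ true $\implication$ $\phi$ true'' is immediate: if $s' \in \win^{\phi'}$, then, by the remark above, $s'$ is a strategy for $\phi$ with $\sce^\phi(s') = \sce^{\phi'}(s') \subseteq \sol^{\phi'} = \sol^\phi$, so $s' \in \win^\phi$ and $\phi$ is true.

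The substantial direction is ``$\phi$ true $\implication$ $\phi'$ true'', which I would reduce to the claim that a true $\phi$ in which $a$ is shallow-removable for the existential variable $x_i$ admits a winning strategy that never assigns $a$ to $x_i$; such a strategy is at once a winning strategy for $\phi'$. To prove the claim, I would start from an arbitrary $s \in \win^\phi$ and repair it only on the branches where it plays $a$ at $x_i$. Fix a universal prefix $u \in \prod_{y \in A_{i-1}} D_y$ and let $p$ be the $X_{i-1}$-assignment that $s$ produces on $u$ (namely $u$ together with the existential values $s_{x_j}(u|_{A_{j-1}})$ for $j < i$). If $s_{x_i}(u) \neq a$, I leave $s$ unchanged on the whole subtree below $u$. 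If $s_{x_i}(u) = a$, then any scenario of $s$ extending $p$ is an outcome with $x_i = a$ and restriction $p$ on $X_{i-1}$; shallow-removability then supplies an outcome $t'$ with $t'|_{X_{i-1}} = p$ and $t'_{x_i} \neq a$, and since $t' \in \out^\phi$ it is a scenario of some winning strategy $s'^{(u)} \in \win^\phi$. I would then define the repaired strategy $\hat s$ to coincide with $s$ on $x_1, \dots, x_{i-1}$ and, on each subtree below a prefix $u$, to follow $s$ when $s_{x_i}(u) \neq a$ and to follow $s'^{(u)}$ when $s_{x_i}(u) = a$. Because the choice of mode depends only on $u = w|_{A_{i-1}}$, which is determined by the universal prefix $w$, this prescription defines a single well-formed strategy.

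The remaining and, I expect, principal obstacle is to verify that in the switched branches the scenario produced by $\hat s$ genuinely coincides with a scenario of the globally winning $s'^{(u)}$, so that it is a solution. This follows because $t'|_{X_{i-1}} = p$ forces $s'^{(u)}$ to reproduce, on universal input $u$, exactly the existential choices of $s$ for every $j < i$: each such value equals $t'_{x_j} = s'^{(u)}_{x_j}(u|_{A_{j-1}}) = s_{x_j}(u|_{A_{j-1}})$. Hence the prefix agrees, $\hat s$ sets $x_i$ to $t'_{x_i} \neq a$, and the continuation is that of $s'^{(u)}$, so the full scenario of $\hat s$ along any play extending $u$ is a scenario of $s'^{(u)}$. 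Therefore, for every complete universal play, the scenario of $\hat s$ is either a scenario of $s$ (unchanged branches) or of some $s'^{(u)}$ (switched branches), hence a solution, and $x_i \neq a$ throughout. Thus $\hat s \in \win^\phi$ avoids $a$, is a winning strategy for $\phi'$, and $\phi'$ is true. Combining the two directions yields the stated equivalence.
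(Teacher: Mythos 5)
Your proof is correct and follows essentially the same route as the paper's: the easy direction is handled identically, and your repair of a winning strategy via shallow removability --- splicing in, on each offending branch, a winning strategy witnessing the replacement outcome, and using the prefix agreement $t'|_{X_{i-1}} = p$ to show that the spliced scenarios are genuine scenarios of that witnessing strategy, hence solutions with $x_i \neq a$ --- is exactly the paper's core argument. The only difference is organizational: the paper corrects one offending branch at a time and iterates, counting down the (finite) number of bad scenarios, whereas you perform all corrections simultaneously, indexed by the universal prefixes $u \in \prod_{y \in A_{i-1}} D_y$, which sidesteps the termination/counting step entirely.
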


  The simplifications allowed for an existential variable 
  when the fixability property holds rely on the following Proposition:

  \begin{proposition}
    Let $\phi = \langle X, Q, D, C\rangle$ be a QCSP in which value $a
    \in D_{x_i}$ is shallow-fixable for an existential variable $x_i$, 
    and let $\phi'$ denote the same
    QCSP in which value $a$ is effectively fixed (\ie\ $\phi' =
    \langle X, Q, D', C\rangle$ where $D'_{x_i} = \{a\}$ and $D'_{x_j} =
    D_{x_j}, \forall j \not= i$). Then $\phi$ is true iff $\phi'$ is
    true.
    \label{prop:correctness-of-fix}
  \end{proposition}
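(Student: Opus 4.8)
The plan is to invoke Proposition~\ref{prop-qcsp-true-iff-win} and replace ``true'' by ``has a winning strategy'' in both directions of the equivalence.

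The direction ``$\phi'$ true $\implication$ $\phi$ true'' does not use shallow-fixability. Since $\phi$ and $\phi'$ differ only in that $D'_{x_i}=\{a\}\subseteq D_{x_i}$, while every other domain---in particular every universal domain---is unchanged, any winning strategy $s'$ of $\phi'$ is also a legal strategy of $\phi$: its component at $x_i$ is the constant $a\in D_{x_i}$, and all other components already map into the unchanged domains. Because $s'$ forces $x_i=a$ in every play and the universal domains coincide, $s'$ has exactly the same scenarios in $\phi$ as in $\phi'$; as $C$ is unchanged these scenarios all satisfy $C$, so $s'$ is winning for $\phi$.

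For the converse ``$\phi$ true $\implication$ $\phi'$ true'' I would fix a winning strategy $s$ of $\phi$ and build a winning strategy $s'$ of $\phi'$ by \emph{patching} strategies below $x_i$. For each assignment $u\in\prod_{y\in A_{i-1}}D_y$ of the universal variables preceding $x_i$ (there are finitely many), complete $u$ arbitrarily on the universal variables after $x_i$ and read the existential values off $s$; this gives a scenario of $s$, \ie\ an outcome $t^{*}$ with $t^{*}|_{A_{i-1}}=u$. Applying $\shallowfixable(x_i,a)$ to $t^{*}$ yields an outcome ${t'}^{u}$ with ${t'}^{u}|_{X_{i-1}}=t^{*}|_{X_{i-1}}$ and ${t'}^{u}_{x_i}=a$; by definition of $\out^{\phi}$ there is a winning strategy ${s''}^{u}\in\win^{\phi}$ having ${t'}^{u}$ among its scenarios. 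I then define $s'$ by copying $s$ on the existential variables before $x_i$, letting $s'_{x_i}$ be the constant $a$, and on each existential $x_j$ with $j>i$ following ${s''}^{u}$, where $u$ is the restriction to $A_{i-1}$ of the universal values seen so far. Every component maps into the correct domain (only the one at $x_i$ being constrained to $\{a\}$), so $s'$ is a legal strategy for $\phi'$.

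The crux, and the step I expect to be the main obstacle, is to verify that $s'$ is \emph{winning} for $\phi'$. Given any scenario $t$ of $s'$ in $\phi'$, set $u=t|_{A_{i-1}}$; I would show that $t$ is in fact a scenario of the winning strategy ${s''}^{u}$ of $\phi$, whence $t$ satisfies $C$. Below $x_i$ the two strategies agree by construction, and at $x_i$ one uses ${t'}^{u}_{x_i}=a$ to get ${s''}^{u}_{x_i}(u)=a=t_{x_i}$. The delicate part is above $x_i$: there one must check that the existential values prescribed by $s$ coincide with those prescribed by ${s''}^{u}$ on the relevant contexts. This holds because ${t'}^{u}$ coincides on $X_{i-1}$ with a scenario of $s$ and is itself a scenario of ${s''}^{u}$, and because each strategy function for an $x_j$ with $j<i$ depends only on $A_{j-1}\subseteq A_{i-1}$ and is therefore pinned down by $u$ alone. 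Concluding that every scenario of $s'$ satisfies $C$ produces a winning strategy for $\phi'$, so $\phi'$ is true. Throughout I assume, as is standard, that all domains are non-empty, so that winning strategies have scenarios and every context $u$ is actually realised.
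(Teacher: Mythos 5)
Your proof is correct, and it rests on the same core mechanism as the paper's: using $\shallowfixable(x_i,a)$ to produce, for a given context up to $x_i$, a witness outcome with value $a$ at $x_i$, and then splicing the winning strategy containing that outcome into the original strategy below $x_i$ while keeping the original strategy above $x_i$. The difference is organizational. The paper proceeds iteratively: it locates \emph{one} incorrect scenario $t$ (with $t_{x_i}\neq a$) of the winning strategy $s^1$, splices a single witness strategy $s^2$ on the single branch $t|_{X_{i-1}}$, argues that the number of incorrect scenarios strictly decreases, and concludes by repeating the correction finitely many times. You instead perform all corrections at once, indexing the witness strategies ${s''}^{u}$ by \emph{every} universal context $u\in\prod_{y\in A_{i-1}}D_y$ and defining $s'$ in one shot (copy $s$ before $x_i$, constant $a$ at $x_i$, follow ${s''}^{u}$ after $x_i$). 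What your route buys is the elimination of the counting/termination argument; what it costs is the more delicate verification that you correctly identify as the crux and discharge: every scenario $t$ of $s'$ is a scenario of ${s''}^{u}$ for $u=t|_{A_{i-1}}$, since for existential $j<i$ both $s_{x_j}$ and ${s''}^{u}_{x_j}$ are evaluated at $t|_{A_{j-1}}=u|_{A_{j-1}}$ and agree there because ${t'}^{u}$ agrees with the $s$-scenario $t^{*}$ on $X_{i-1}$, at $x_i$ one has ${s''}^{u}_{x_i}(u)={t'}^{u}_{x_i}=a$, and for $j>i$ agreement holds by construction; hence $t\in\sol$ because ${s''}^{u}$ is winning. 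Your uniform patching also rewrites branches that were already correct, which is harmless for exactly this reason. Finally, the non-emptiness assumption you flag is implicitly needed by the paper's argument too (every universal context must be realized by some scenario), so nothing is lost there.
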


  \subsection{Simplifying Universal Variables}

  To allow a proper, symmetric treatment of all variables of QCSPs
  it is necessary to also define how to make deductions on universal variables.
  The way this can be done has been suggested by several authors in the literature
  and is developed, for instance, in  \cite{Bordeaux-Zhang:SAC:2007}: 
  to make deductions on the universal variables, which represent the 
  ``moves of the opponent'', we have to reason on the negation of the 
  formula, which captures the ``winning strategies of the opponent''. 

  We say that a value is \emph{dual-}shallow-removable if it is 
  shallow-removable in the negation of the considered QCSP, and that it is
  \emph{dual}-shallow-fixable if it is shallow-fixable in this negation.
  The simplifications allowed for a universal variable 
  when the removability property holds rely on the following Proposition:

  \begin{proposition}
    Let $\phi = \langle X, Q, D, C\rangle$ be a QCSP in which value $a
    \in D_{x_i}$ is dual-shallow-removable for a universal variable $x_i$, 
    and let $\phi'$ denote the
    same QCSP in which value $a$ is effectively removed (\ie\ $\phi' =
    \langle X, Q, D', C\rangle$ where $D'_{x_i} = D_{x_i} \setminus
    \{a\}$ and $D'_{x_j} = D_{x_j}, \forall j \not= i$). Then $\phi$ is true
    iff $\phi'$ is true.
    \label{prop:correctness-of-dual-removability}
  \end{proposition}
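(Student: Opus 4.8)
The plan is to reduce the statement to the existential case already established in Proposition~\ref{prop:correctness-of-removability}, using the duality between $\forall$ and $\exists$ under negation. Let $\psi$ denote the negation of $\phi$: as a QCSP it keeps the same variables and the same domains $D$, flips every quantifier, and has as its matrix the complement of the set of solutions of $\phi$. In particular the variable $x_i$, universal in $\phi$, is \emph{existential} in $\psi$, and by the very definition of dual-shallow-removability the hypothesis on $\phi$ says precisely that $a$ is \shallowremovable{} for the existential variable $x_i$ in $\psi$.

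First I would make the passage to the negation precise and check that it commutes with the removal of a value. Since $\forall x \in D_x.\,G$ abbreviates $\forall x.\,(x \in D_x) \implication G$, its negation is $\exists x \in D_x.\,\neg G$; thus negating a QCSP leaves every domain untouched and only swaps the quantifiers and negates the matrix. Consequently, writing $\psi'$ for the QCSP obtained from $\psi$ by deleting $a$ from $D_{x_i}$, the QCSP $\psi'$ is \emph{identical} to the negation of $\phi'$: both have the flipped prefix, the domains $D'$, and the same complemented matrix.

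Then I would apply Proposition~\ref{prop:correctness-of-removability} to $\psi$, in which $a$ is \shallowremovable{} for the existential variable $x_i$; it gives that $\psi$ is true iff $\psi'$ is true. Combining this with the elementary fact that a QCSP is true iff its negation is false (a consequence of Proposition~\ref{prop-qcsp-true-iff-win} together with the classical semantics of negation), I chain the equivalences: $\phi$ is true iff $\psi$ is false; iff $\psi'$ is false, by Proposition~\ref{prop:correctness-of-removability}; iff $\phi'$ is true, since $\psi'$ is the negation of $\phi'$. This is exactly the claim.

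The main obstacle is the bookkeeping of the first two paragraphs rather than any substantial new argument: one must verify that the negation of a QCSP is again a QCSP in the sense of Definition~\ref{definition-QCSP}, with the \emph{same} domains, so that ``\shallowremovable{} in the negation'' is well-defined and so that removal and negation commute exactly. The mild subtlety is that the matrix of $\phi$ is a \emph{conjunction} of $V$-relations, whose negation is a disjunction and hence not literally a conjunction of constraints; this is harmless, since that disjunction is itself a single $V$-relation (the complement, within the appropriate Cartesian product of domains, of the solution set of $\phi$), so $\psi$ is again a legitimate QCSP carrying one constraint. Once this is in place, the result is a mechanical transport of Proposition~\ref{prop:correctness-of-removability} through the negation.
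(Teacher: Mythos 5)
Your proof is correct and takes essentially the same route as the paper's: the paper likewise derives the result as a direct consequence of Proposition~\ref{prop:correctness-of-removability}, observing that dual-shallow-removability means $a$ is shallow-removable for the (now existential) variable $x_i$ in $\neg\phi$, and chaining the equivalences $\phi$ true iff $\neg\phi$ false iff $\neg\phi'$ false iff $\phi'$ true. Your extra bookkeeping---that negation preserves the domains, commutes with removal of the value $a$, and yields a legitimate QCSP whose single constraint is the complemented solution set---is left implicit in the paper but is precisely what makes that one-line argument sound.
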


  The simplifications allowed for a universal variable 
  when the fixability property holds rely on the following Proposition:

  \begin{proposition}
    Let $\phi = \langle X, Q, D, C\rangle$ be a QCSP in which value $a
    \in D_{x_i}$ is dual-shallow-fixable for an universal variable $x_i$, 
    and let $\phi'$ denote the same
    QCSP in which value $a$ is effectively fixed (\ie\ $\phi' =
    \langle X, Q, D', C\rangle$ where $D'_{x_i} = \{a\}$ and $D'_{x_j} =
    D_{x_j}, \forall j \not= i$). Then $\phi$ is true iff $\phi'$ is
    true.
    \label{prop:correctness-of-dual-fix}
  \end{proposition}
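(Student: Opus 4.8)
The plan is to reduce this statement to its existential analogue, Proposition~\ref{prop:correctness-of-fix}, by passing to the negation of the problem. Let $\psi = \neg\phi$ denote the QCSP that captures the winning strategies of the opponent: $\psi$ has the same linearly ordered variable set $X$ and the same domains $D$ as $\phi$, every quantifier is flipped ($Q^\psi_x = \forall$ whenever $Q_x = \exists$, and conversely), and its matrix is the logical negation $\neg(F_1 \wedge \dots \wedge F_m)$ of the matrix of $\phi$. In tuple form this means that a tuple in $\prod_{x\in X} D_x$ satisfies the constraints of $\psi$ exactly when it fails to satisfy those of $\phi$, so that $\sol^\psi = \left(\prod_{x\in X} D_x\right) \setminus \sol^\phi$. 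This is a legitimate QCSP in the sense of Definition~\ref{definition-QCSP} (one may take its single $X$-relation to be the complement of $\sol^\phi$), which is all that is needed, since the results invoked below depend only on solutions, outcomes and truth, and not on the syntactic form of $C$.

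Since $x_i$ is universal in $\phi$, it is existential in $\psi$; and by the very definition of \emph{dual}-shallow-fixability, the value $a$ is \shallowfixable\ for $x_i$ in $\psi$. I would therefore apply Proposition~\ref{prop:correctness-of-fix} to the problem $\psi$ and its existential variable $x_i$: writing $\psi'$ for $\psi$ with the domain of $x_i$ fixed to $\{a\}$, we conclude that $\psi$ is true iff $\psi'$ is true, and, equivalently, that $\psi$ is false iff $\psi'$ is false.

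It remains to connect this with the problem $\phi'$ of the statement, which rests on two bookkeeping facts. First, the negation operation and the domain-fixing operation commute: negation only flips the quantifiers and negates the (domain-independent) matrix, whereas fixing $D_{x_i}$ to $\{a\}$ only alters the domain component, so the two act on disjoint parts of $\langle X, Q, D, C\rangle$ and hence $\psi' = \neg(\phi')$. Second, truth is inverted by negation: for every QCSP $\chi$, $\chi$ is true iff $\neg\chi$ is false --- the game-theoretic content being that a winning strategy for one player is precisely the absence of a winning strategy for the other (\cf\ Proposition~\ref{prop-qcsp-true-iff-win}). Chaining these gives: $\phi$ true $\iff$ $\psi=\neg\phi$ false $\iff$ $\psi'$ false $\iff$ $\neg(\phi')=\psi'$ false $\iff$ $\phi'$ true, which is the desired equivalence. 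The main obstacle is exactly this last step: one must verify that the negation used to define dual-fixability is the same negation under which truth is inverted and under which $x_i$ switches quantifier, and that it genuinely commutes with fixing the single domain $D_{x_i}$; once these points are pinned down, the existential result transfers with no further work.
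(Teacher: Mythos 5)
Your proof is correct and follows exactly the paper's approach: the paper likewise derives this as a direct consequence of Proposition~\ref{prop:correctness-of-fix} applied to the negated QCSP, chaining $\phi$ true iff $\neg\phi$ false iff $\neg\phi'$ false iff $\phi'$ true. Your write-up merely makes explicit the bookkeeping (the construction of $\neg\phi$, the commutation of negation with domain-fixing, and the inversion of truth under negation) that the paper's two-line proof leaves implicit.
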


\section{Complexity results}
\label{sec:complexity}

  In this section, we study the complexity of the problem of determining
  whether the properties defined in Definitions~\ref{def:deep-properties}
  and~\ref{def:shallowProperties} hold. As was to be expected, our 
  results show that the problem  is in general intractable, 
  and we essentially obtain PSPACE-completeness results. In other words
  the complexity of checking one of the properties is typically the same
  as the complexity of determining whether the QCSP is true 
  \cite{Papadimitriou:BOOK:1994,Stockmeyer-Meyer:STOC:1973}.

  \subsection{Encoding Issues}
  \label{subsec-encoding-issues}
  
  To analyze the complexity, a few words are needed on the encoding of the QCSP
  $\langle X, Q, D, C\rangle$. Def. \ref{definition-QCSP} did not specify
  anything on this issue, because the encoding did not have any consequence on 
  the results of previous sections. We assume that $X$ and $Q$
  are encoded in the natural way, \ie\ as a list. For the set of domains $D$, two 
  choices may be considered: a domain can be encoded as a list of
  allowed values or as an interval, in which case its two bounds need to be
  encoded. Our results will hold independently of whether the interval
  or domain representation is chosen. The main question is how the
  constraints are defined. Some examples of representation 
  formalisms are the following:

  \begin{enumerate}

  \item[I] The domain is Boolean, \ie\ $B = \{0,1\}$, and $C$ is defined 
    as a Boolean circuit. 
    
  \item[II] The domain is Boolean, \ie\ $B = \{0,1\}$, and $C$ is put
    in Conjunctive Normal Form, \ie\ it is a conjunction of clauses 
    (disjunctions of literals, each of which
    is a variable or its negation).
  
  \item[III] $C$ is a conjunction of constraints, each of which is 
    represented in extension as a table (\eg\ binary) which 
    lists all tuples that are accepted.

  \item[IV] $C$ is a conjunction of constraints, each of which
    is represented by a numerical (linear or polynomial) equality or
    inequality.

  \item[V] $C$ is a polynomial-time \emph{program} (written in any universal
    language, for instance the Turing machine) which, given a tuple $t$,
    determines whether $t \in \sol$. 
  
  \end{enumerate}

  In all cases we impose the restriction that testing whether $t \in \sol$ 
  be feasible in polynomial time. The fifth encoding represents the most general
  possible encoding satisfying this restriction: we shall consider it when we want to 
  check that a result holds for any encoding in which testing whether $t \in \sol$ 
  can be done in polynomial time.
  
  Using encoding (V) to capture the notion of ``most general encoding'' is therefore
  convenient, but an important point is that the 4 other formalisms are essentially 
  as concise as formalism (V). If the domain is Boolean, then if $\sol$ can be
  represented by a program $P$ (in the sense that $P(t) = 1$ iff $t \in \sol$)
  and if the execution of $P$ requires a memory bounded by $S$ and a time bounded by $L$, 
  then the set $\sol$ can be also represented by a Boolean circuit of size polynomial
  in $S$, $L$, and the length of the text of the program $P$, using the technique 
  used by Cook in proving that SAT is NP-complete. In other words, for Boolean domains,
  formalism (I) is as expressive as formalism (V). 
  Now the relations between formalism
  (I) and formalisms (II) to (IV) are well-known: we can reduce a circuit to a CNF 
  involving only clauses of size at most three (3CNF)
  by introducing existential variables, and it is straightforward to reduce a 3CNF
  to formalism (III) or formalism (IV).
  The complexities of our problems for (I) to (V) will therefore be
  equivalent except for minor refinements occurring at intermediate levels of the 
  polynomial hierarchy (Prop \ref{prop-ph-levels}), where introducing existential
  variables makes a little difference.

  \subsection{A Common Upper Bound: PSPACE}

  The most difficult side of our complexity characterizations is to prove 
  \emph{membership} in PSPACE. It is indeed not completely obvious at first
  that the properties we have studied can be verified in polynomial space.
  The key point is to notice that a polynomial space algorithm exists 
  to recognize the set of outcomes. Considering representation (V), we
  have the following:
    
  \begin{proposition}
    Let $\phi = \langle X, Q, D, C\rangle$ be a QCSP.
    Given a tuple $t \in \prod_{x \in X} D_x$,
    we denote by $B$ the conjunction of constraints:
    \begin{equation}
    \bigwedge_{x_i \in E}
    \left(
      \left(\bigwedge_{y \in A_{i-1}} y = t_y\right) 
      \rightarrow (x_i = t_{x_i})
    \right) 
    \label{eq-additional-constraints-for-outcome}
    \end{equation}
    The QCSP $\psi = \langle X, Q, D, B \cup C\rangle$ is true
    iff $t \in \out^\phi$.
    \label{prop-trick-for-pspace-upper-bound}
  \end{proposition}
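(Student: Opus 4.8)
The plan is to reduce both sides of the equivalence to the existence of a suitable winning strategy, exploiting Proposition~\ref{prop-qcsp-true-iff-win}, and to rely on one crucial observation: the set of scenarios $\sce(s)$ of a strategy $s$ depends only on $s$ and on the domains $D$, not on the constraint set. Consequently, for any strategy $s$ we have $\sce^\psi(s) = \sce^\phi(s)$, and a strategy is winning for $\psi$ iff it is winning for $\phi$ \emph{and} all of its scenarios additionally satisfy the auxiliary constraint $B$. The whole argument then amounts to reading $B$ as the statement ``whenever the adversary assigns the universal variables exactly as in $t$, the existential player is forced to reproduce $t$''.

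For the forward direction, suppose $t \in \out^\phi$. By definition of outcome there is a winning strategy $s \in \win^\phi$ with $t \in \sce^\phi(s)$. I claim this same $s$ is winning for $\psi$. Since $\sce^\psi(s) = \sce^\phi(s) \subseteq \sol^\phi$, every scenario of $s$ already satisfies $C$, so it suffices to check that each scenario $t' \in \sce^\phi(s)$ satisfies $B$. Fix $x_i \in E$ and suppose $t'|_{A_{i-1}} = t|_{A_{i-1}}$; because both $t$ and $t'$ are scenarios of $s$, we have $t'_{x_i} = s_{x_i}(t'|_{A_{i-1}}) = s_{x_i}(t|_{A_{i-1}}) = t_{x_i}$, which is exactly the implication required by $B$. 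Hence $s$ is winning for $\psi$ and, by Proposition~\ref{prop-qcsp-true-iff-win}, $\psi$ is true.

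For the converse, suppose $\psi$ is true, so by Proposition~\ref{prop-qcsp-true-iff-win} it has a winning strategy $s$. Since $\sce^\phi(s) = \sce^\psi(s) \subseteq \sol^\psi \subseteq \sol^\phi$, the strategy $s$ is also winning for $\phi$. It remains to exhibit $t$ as one of its scenarios. Build the tuple $t^\ast$ by letting the adversary play the universal values of $t$ and then following $s$: set $t^\ast_y = t_y$ for $y \in A$ and $t^\ast_{x_i} = s_{x_i}(t^\ast|_{A_{i-1}})$ for $x_i \in E$ (well-defined by induction on $i$, since $A_{i-1}$ is already fixed when $x_i$ is reached). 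By construction $t^\ast \in \sce^\phi(s) = \sce^\psi(s)$ and $t^\ast|_A = t|_A$, so in particular $t^\ast|_{A_{i-1}} = t|_{A_{i-1}}$ for every $x_i \in E$. Since $s$ is winning for $\psi$, $t^\ast$ satisfies $B$, and each of its implications therefore fires, yielding $t^\ast_{x_i} = t_{x_i}$ for all $x_i \in E$. Combined with $t^\ast|_A = t|_A$ this gives $t^\ast = t$, so $t \in \sce^\phi(s)$ with $s$ winning for $\phi$, \ie\ $t \in \out^\phi$.

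The step I expect to require the most care is the converse: the point is not merely that $B$ constrains scenarios, but that the single scenario $t^\ast$ in which the adversary mimics the universal moves of $t$ is guaranteed to be a scenario of $s$, and that $B$ then collapses it to $t$ itself. The independence of $\sce(s)$ from the constraints is what makes the two strategy spaces interchangeable and keeps both directions symmetric.
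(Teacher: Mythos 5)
Your proof is correct and follows essentially the same route as the paper's: both directions transfer a single winning strategy between $\phi$ and $\psi$ (using the fact that scenarios depend only on the strategy and the domains), with $B$ read as forcing the unique scenario that agrees with $t$ on the universal variables to coincide with $t$. Your treatment is in fact slightly cleaner in two spots — the observation that each implication in $B$ fires uniformly because its premise mentions only universal variables replaces the paper's case analysis on the lowest differing index and its appeal to induction — but this is a streamlining of the same argument, not a different one.
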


  Note that $B \cup C$ can be expressed concisely
  in formalism (V). The conjunction of constraints added in~\eqref{eq-additional-constraints-for-outcome}
  makes sure that
  any winning strategy of $\psi$ contains $t$ as a scenario.

  A direct corollary of Prop. \ref{prop-trick-for-pspace-upper-bound} 
  is that checking whether a particular tuple $t$
  belongs to the set of outcomes of a QCSP $\phi$ can be done in polynomial
  space, simply by solving $\psi$. This is true for any representation 
  of the constraints that respects the restriction that testing whether 
  $t \in \sol$ be feasible in polynomial time\footnote{In fact this condition 
  could itself be considerably relaxed: the PSPACE membership result holds under the
  very general condition that testing whether $t \in \sol$ be
  feasible in polynomial \emph{space}.}. Now being able to test in polynomial space
  whether a tuple is an outcome, the membership in PSPACE of all properties
  becomes clear: for instance if we consider inconsistency
  ($\forall t \in \out. ~ t_{x_i} \not= a$) we can enumerate
  all tuples in lexicographical order, determine whether each of them is
  an outcome, and whether it satisfies the implication 
  $t \in \out. ~ t_{x_i} \not= a$. The precise list of results will be given
  in the next section, where we state completeness results (including both hardness and
  membership for the considered class).

  \setcounter{myexamplecounter}{3}
  \begin{example}
    Let us illustrate the idea of Prop. \ref{prop-trick-for-pspace-upper-bound} 
    on a simple example. Consider the QCSP
    $\exists x_1. ~ \forall y_1. ~ \exists x_2.~ \forall y_2. ~ \exists x_3. ~ C$, 
    where the domain of each variable
    is, for instance $\{0,1\}$. We want to determine whether the tuple
    $\langle x_1=0, y_1=0, x_2=0, y_2=0, x_3=0 \rangle$ is an outcome of the QCSP. This can be done
    by solving the QCSP in which the constraints of~\eqref{eq-additional-constraints-for-outcome}
    are added:
    \[
    \exists x_1.  \forall y_1.  \exists x_2. \forall y_2.  \exists x_3. ~ C \land 
    (x_1=0 \wedge (y_1=0\rightarrow x_2=0) \wedge ((y_1=0\wedge y_2=0)\rightarrow x_3=0)) 
    .\]
  \end{example}

  It might be useful to mention a possible source of confusion: it is the case
  that our PSPACE membership results hold for formalism (4), since it respects our
  restriction. This is true even if the domains $D_x$ are represented by intervals:
  even though an interval whose bounds are $n$-bit integers represents in general a  
  set of values of cardinality exponential in $n$, we can always iterate on these
  values using polynomial space.
  This should be contrasted with classical complexity results related to 
  arithmetics: in general deciding the truth of quantified linear constraints
  is extremely complex (hard for \textsf{NDTIME($2^{2^n}$)} by the 
  Fischer-Rabin theorem \cite{Fischer-Rabin:CHAPTER:1974}, 
  and therefore provably not in PSPACE $\subseteq$ EXPTIME), and if we
  consider quantified polynomial constraints the problem becomes undecidable
  (G\"{o}del's theorem). The key point is that in these cases the values of the
  variables can grow extremely large; as long as we bound the domains explicitly
  this problem does not arise, which is why we remain within PSPACE.

  \subsection{Complexity Characterizations}

  We now list the complexity results we obtain. These results hold
  for any of the 5 representations we have mentioned.

  \begin{proposition}
    Given a QCSP $\phi = \langle X, Q, D, C \rangle$, the problems of
    deciding whether:
    
    \begin{itemize}
    
    \item value $a \in D_{x_i}$ is \deepfixable, \deepremovable,
    \inconsistent, \implied\ for variable $x_i \in X$,
    
    \item value $a \in D_{x_i}$ is \deepsubstitutable\ to or
    \deepinterchangeable\ with $b \in D_{x_i}$ for variable $x_i \in X$,
    
    \item variable $x_i \in X$ is \dependent\ on variables $V \subseteq X$,
    or is \deepirrelevant\ 
    
    \end{itemize}

    \noindent
    are PSPACE-complete.
    \label{prop:dproperties-PSPACEc}  
  \end{proposition}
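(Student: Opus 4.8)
The proof has two halves for each property: membership in PSPACE and PSPACE-hardness. For membership, I would lean entirely on Proposition~\ref{prop-trick-for-pspace-upper-bound} and the discussion following it. Each of the deep properties is defined (Definition~\ref{def:deep-properties}) as a first-order statement of the form $\forall t \in \out. ~ \Psi(t)$ or $\forall t, t' \in \out. ~ \Psi(t,t')$, where $\Psi$ involves polynomially many membership tests ``$t[x_i:=b] \in \out$'' together with polynomial-time-checkable comparisons of tuple components. Since by Proposition~\ref{prop-trick-for-pspace-upper-bound} the test ``$t \in \out^\phi$'' reduces to deciding the truth of a single QCSP $\psi$ of polynomial size (and QCSP truth is in PSPACE), each such membership test is a PSPACE oracle call. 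The outer quantification ranges over all tuples $t \in \prod_{x \in X} D_x$, which can be enumerated in lexicographic order using only polynomial space (as stressed in the remark on interval-encoded domains). Thus I would describe a procedure that, for each tuple (and each pair of tuples, for \dependent), tests outcome-membership and checks $\Psi$, reusing space across iterations; since PSPACE is closed under polynomially many oracle calls to PSPACE and under polynomial-space enumeration, all the listed problems are in PSPACE.

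For hardness, the plan is to reduce from the truth of a QCSP, which is PSPACE-complete \cite{Papadimitriou:BOOK:1994,Stockmeyer-Meyer:STOC:1973}. The cleanest route is to show that a trivial property of a carefully padded QCSP encodes truth. For instance, to reduce to \inconsistent: given an arbitrary QCSP $\chi$, I would build $\phi$ by prepending a fresh existential variable $x_0$ with domain $\{a, a'\}$ and adding a constraint forcing $x_0 = a$ to be viable precisely when $\chi$ is true, so that $\inconsistent(x_0, a)$ holds iff $\chi$ is false (or the negation, depending on the exact gadget). A uniform and robust way to do this is: a value $a$ is consistent for $x_i$ exactly when there is some winning strategy whose scenario assigns $a$ to $x_i$, and by Proposition~\ref{prop-trick-for-pspace-upper-bound} this existence is itself equivalent to the truth of an augmented QCSP. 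Running that equivalence backwards turns QCSP-truth into a consistency (hence via the relations of Proposition~\ref{prop-classifications}, a fixability/substitutability/etc.) query. The analogous gadgets for \implied, \determined, \deepirrelevant, and \dependent follow the same template, each forcing the target property to coincide with truth of the underlying QCSP.

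The main obstacle I anticipate is bookkeeping rather than any deep idea: I must verify that each gadget preserves the intended property \emph{only} for the truth value being encoded, i.e. that the padding variable genuinely isolates the property so that no spurious outcomes sneak in to break the equivalence. In particular, for the two-value properties (\deepsubstitutable, \deepinterchangeable) and for \dependent($V, x_i$), the reduction must arrange the outcome set so that the relevant pairs of tuples exist or fail to exist exactly as truth dictates; this requires care because these properties quantify over pairs and interact with the full quantifier prefix. I would handle this by making the gadget variables purely existential and placed at the front of the prefix, so that their outcome behavior is controlled directly by whether the remaining QCSP is true, and then invoke Proposition~\ref{prop-trick-for-pspace-upper-bound} to formalize the correspondence. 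Once one property is shown PSPACE-hard this way, the remaining ones follow either by an identical gadget or by the inter-property implications already established in Proposition~\ref{prop-classifications}.
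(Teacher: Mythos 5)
Your membership argument is exactly the paper's: enumerate the tuples in lexicographic order, use Proposition~\ref{prop-trick-for-pspace-upper-bound} to test $t \in \out$ in polynomial space, and check the defining condition of each property on each tuple (with a double loop for \dependent). Your hardness template---pad the input QCSP with a fresh existential variable so that the target property holds for it iff the original QCSP is false, exploiting closure of PSPACE under complementation---is also the paper's reduction.

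The one step of your plan that would fail is the proposed shortcut of deriving hardness of the remaining properties ``via the relations of Proposition~\ref{prop-classifications}''. Implications between properties do not transfer hardness: knowing $P \Rightarrow Q$ gives no reduction from deciding $P$ to deciding $Q$; what you need is that \emph{on your gadget instances} the property $Q$ is equivalent to falsity of the input QCSP $\chi$. Concretely, a single gadget does not serve all properties. If you pad $\chi$ with an unconstrained existential $x_0$ over $\{0,1\}$ (which does work for inconsistency), then value $0$ is \deepsubstitutable\ to $1$ for $x_0$, and $x_0$ is \deepirrelevant\ and \deepfixable\ to $1$, \emph{regardless} of whether $\chi$ is true, since any winning strategy can be rerouted through $x_0 = 1$; these properties then encode nothing. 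The paper handles this with three distinct gadgets: a fresh existential variable with singleton domain $\{a\}$ for inconsistency and removability; domain $\{0,1\}$ together with an added constraint $x = 0$ for fixability, implication, substitutability, interchangeability and irrelevance (the constraint makes $t[x := 1] \in \out$ impossible, so the property can hold only when $\out = \emptyset$); and domain $\{0,1\}$ with \emph{no} added constraint for determinacy and dependence, where a symmetry argument between the two values forces $\out = \emptyset$. Your instinct that the pair properties ``require care'' is right, and the care takes precisely this form. A smaller point: ``running Proposition~\ref{prop-trick-for-pspace-upper-bound} backwards'' is not literally available (it reduces outcome-membership to truth, not conversely); it is the padding gadget itself, not that proposition, that carries the hardness direction.
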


  \noindent
  An analogous result holds for the shallow properties:

  \begin{proposition}
  
    Given a QCSP $\phi = \langle X, Q, D, C \rangle$ , the problems of
    deciding whether:
    
    \begin{itemize}
    \item value $a \in D_{x_i}$ is \shallowfixable, \shallowremovable\ for
    variable $x_i \in X$,

    \item value $a \in D_{x_i}$ is \shallowsubstitutable\ to or
    \shallowinterchangeable\ with $b \in D_{x_i}$ for variable $x_i \in X$,

    \item variable $x_i \in X$ is \shallowirrelevant\ 
    
    \end{itemize}

    \noindent
    are PSPACE-complete.
    \label{prop:sproperties-PSPACEc}
  \end{proposition}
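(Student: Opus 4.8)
The plan is to prove membership and hardness separately, exactly as one would for Proposition~\ref{prop:dproperties-PSPACEc}, and to reuse as much of that machinery as possible.

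For \textbf{membership in PSPACE}, I would rely on Proposition~\ref{prop-trick-for-pspace-upper-bound}, which gives a polynomial-space test for whether a tuple $t$ is an outcome. Each shallow property from Definition~\ref{def:shallowProperties} is a first-order statement over outcomes with a bounded quantifier alternation of the form $\forall t \in \out.\, (\cdots \implication \exists t' \in \out.\, (\cdots))$, where the inner matrix compares $t$ and $t'$ only through equalities of the form $t|_{X_{i-1}} = t'|_{X_{i-1}}$ and $t'_{x_i} = b$ (or $t'_{x_i} \neq a$). I would argue that such a statement is decidable in polynomial space by the standard counting argument: enumerate all candidate $t \in \prod_{x\in X} D_x$ in lexicographic order, and for each one that passes the outcome test, enumerate all candidate $t'$ and check the outcome test and the matrix condition on $t'$. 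Since the outcome test is in PSPACE and PSPACE is closed under such polynomially-nested bounded quantification (we reuse space across iterations, keeping only the current $t$ and $t'$ plus the workspace of one invocation of the test of Proposition~\ref{prop-trick-for-pspace-upper-bound}), each of shallow-fixability, shallow-removability, shallow-substitutability and shallow-irrelevance lies in PSPACE. Shallow-interchangeability is a conjunction of two shallow-substitutability checks, hence also in PSPACE. As noted after Proposition~\ref{prop-trick-for-pspace-upper-bound}, this works for all five representations, since each respects the restriction that testing $t \in \sol$ is polynomial-time.

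For \textbf{PSPACE-hardness}, I would reduce from the truth of an arbitrary QCSP, which is PSPACE-complete. The cleanest approach is to exhibit, for each property, a polynomial-time construction that turns a QCSP $\phi$ into a QCSP $\phi'$ together with a distinguished variable $x_i$ and value(s) such that the chosen shallow property holds in $\phi'$ iff $\phi$ is true (or iff $\phi$ is false, using the contrapositive). The natural device is to introduce a fresh \emph{leading} existential variable $x_0$ whose position makes the ``shallow'' prefix $X_{i-1}$ trivial, so that the shallow condition collapses to a statement purely about the existence of outcomes of $\phi$ with a prescribed value on $x_0$; since a QCSP has an outcome iff it has a winning strategy iff it is true (Proposition~\ref{prop-qcsp-true-iff-win}), this links the property directly to the truth of $\phi$. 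I would likely be able to adapt the very gadgets already used to establish Proposition~\ref{prop:dproperties-PSPACEc}: on a variable placed first, the deep and shallow definitions coincide (the restriction $X_{i-1}$ is empty, so ``$t'|_{X_{i-1}} = t|_{X_{i-1}}$'' is vacuous and shallow-fixability reduces to ``$\exists t' \in \out.\, t'_{x_i}=a$'', and similarly for the others), which lets the hardness reductions for the deep properties transfer essentially verbatim.

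The \textbf{main obstacle} I anticipate is the hardness direction for the properties whose shallow and deep versions genuinely differ on interior variables — ensuring the reduction places the distinguished variable at a position where the shallow condition faithfully encodes QCSP truth, while keeping the constraints expressible within the fixed representation and the reduction polynomial. In particular, shallow-removability involves the existence of an outcome $t'$ agreeing with $t$ on the strict prefix but differing at $x_i$, so the gadget must force the domain of $x_i$ to contain a spurious alternative value exactly when $\phi$ is true; getting the bookkeeping right so that no accidental alternative outcome spoils the equivalence is the delicate step. Once the leading-variable trick aligns the shallow condition with plain outcome-existence, however, membership and hardness combine to give PSPACE-completeness for each property in the list.
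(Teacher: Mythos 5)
Your membership argument is essentially the paper's own: enumerate tuples in lexicographic order, use Proposition~\ref{prop-trick-for-pspace-upper-bound} to test outcome-membership in polynomial space, and handle the $\forall t \in \out \ldots \exists t' \in \out \ldots$ pattern of the shallow definitions by a nested loop that reuses space across iterations. Nothing to add there.

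For hardness the paper also recycles the reductions of Proposition~\ref{prop:dproperties-PSPACEc}, but it places the fresh existential variable at the \emph{tail} of the prefix, not at the head. At the tail the key equivalence is a genuine general fact: when $x_i$ is the last variable, the condition $t|_{X_{i-1}} = t'|_{X_{i-1}}$ forces $t' = t[x_i := b]$, so every shallow property literally coincides with its deep counterpart and the deep reductions transfer verbatim. Your head placement also yields correct reductions, but not for the reason you state: the claim that ``on a variable placed first, the deep and shallow definitions coincide'' is false in general. The paper's own example $\exists x_1 \in [1,2].~\exists x_2 \in [3,4].~\exists x_3 \in [4,6].~x_1 + x_2 = x_3$ shows that value 1 is shallow-substitutable to 2 for the \emph{first} variable $x_1$ while not being deep-substitutable to it. What is true, and what your reduction actually needs, is the weaker collapse you also invoke: with $X_{i-1} = \emptyset$, shallow fixability of $x$ to $1$ reads $\forall t \in \out^\psi.~\exists t' \in \out^\psi.~t'_x = 1$, i.e., ``$\out^\psi = \emptyset$ or some outcome assigns $1$ to $x$'' (note the vacuous disjunct, which you elide but which is precisely what makes the direction ``$\phi$ false $\Rightarrow$ property holds'' go through). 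Since the gadget constraint $x=0$ (respectively the singleton domain, for removability) excludes any outcome carrying the alternative value, the shallow property holds iff $\out^\psi = \emptyset$ iff $\phi$ is false, exactly as in the deep case; checking this for each of the five properties is routine. So your reduction is sound, but you should replace the coincidence claim by this emptiness argument---or simply move the fresh variable to the end of the prefix, as the paper does, where the deep/shallow coincidence is genuine and the transfer really is verbatim.
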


  As usual when considering quantified constraints, 
  the complexity increases with the number of quantifier alternations, more
  precisely each additional alternation brings us one level higher in the Polynomial
  Hierarchy \cite{Stockmeyer:TCS:1976}. The precise level that is reached is
  dependent on the considered property and
  on many details, including the formalism used for the encoding of the QCSP. 
  We shall not list all results but instead we characterize, as an example, the complexity
  obtained in a particular setting, \ie\ for the ``deep'' definitions of the properties,
  in the case where the QCSP starts with an existential quantifiers, and where
  its constraints are encoded as a Boolean circuit.
  
  We call $\Sigma_k$QCSPs the QCSPs with at most $k$ quantifier alternations  
  and whose first variables are existential. We have the following results:

  \begin{proposition}
    Given a $\Sigma_k$QCSP $\phi = \langle X, Q, D, C \rangle$ encoded using
    Formalism (I), the problems of deciding whether:
    
    \begin{itemize}
    
    \item value $a \in D_{x_i}$ is \textsl{deep-fixable},
    \textsl{deep-removable}, \inconsistent, \implied\ for
    variable $x_i \in X$,
    
    \item value $a \in D_{x_i}$ is \textsl{deep-substitutable} to or
    \textsl{deep-interchangeable} with $b \in D_{x_i}$ for variable
    $x_i \in X$,
    
    \item variable $x_i \in X$ is \dependent\ on variables $V \subseteq X$,
    or is \textsl{deep-irrelevant},
    
    \end{itemize}

    \noindent    
    are $\Pi_k^p$-hard and belong to $\Pi_{k+1}^p$. Moreover, for deep inconsistency, 
    implication, determinacy and dependence, the problems are more precisely 
    $\Pi_k^p$-complete.
    \label{prop:bounded:deepAndShallowProperties}
    \label{prop-ph-levels}
  \end{proposition}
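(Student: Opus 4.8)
The plan is to establish the two claims — membership in $\Pi_{k+1}^p$ together with $\Pi_k^p$-hardness for all listed properties, and the sharper $\Pi_k^p$-completeness for inconsistency, implication, determinacy and dependence — by (i) a uniform membership argument based on quantifier counting, and (ii) a pair of small gadget reductions for hardness. Throughout I use that under Formalism~(I) the domains are Boolean, so a tuple is a polynomial-length bit-string that can be guessed by a single quantifier block.

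For membership, the crucial ingredient is Proposition~\ref{prop-trick-for-pspace-upper-bound}: for a fixed tuple $t$, the test ``$t \in \out^\phi$'' is equivalent to the truth of the QCSP $\psi = \langle X, Q, D, B\cup C\rangle$, which has the \emph{same} quantifier prefix as $\phi$ (the added constraints $B$ introduce no new variables) and whose constraints are still expressible as a Boolean circuit. Hence for a $\Sigma_k$QCSP the predicate ``$t\in\out$'' is decidable by the truth of a $\Sigma_k$QCSP, and therefore lies in $\Sigma_k^p$. I would then read the complexity of each property off its definition. For inconsistency, determinacy and dependence the negated statement has the shape $\exists \bar t\,(\,t\in\out \wedge \cdots \wedge t'\in\out\,)$, where every occurrence of the outcome-test is \emph{positive}; since $\Sigma_k^p$ is closed under conjunction and absorbs a leading polynomial block of existential guesses, the negation is in $\Sigma_k^p$ and the property is in $\Pi_k^p$ (implication reduces to inconsistency of the complementary value by Proposition~\ref{prop-classifications}, so it is $\Pi_k^p$ as well). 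For the remaining properties the definition is of the form $\forall t\,(t\in\out \implication \cdots\in\out)$; here the antecedent contributes a \emph{negative} occurrence $\neg(t\in\out)$, which is a $\Pi_k^p$ predicate, beside a positive $\Sigma_k^p$ consequent, so the body lies in $\Pi_{k+1}^p$ and the outer $\forall$ keeps it there, yielding the $\Pi_{k+1}^p$ bound (closure under conjunction then covers deep-interchangeability).

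For $\Pi_k^p$-hardness I would reduce from the \emph{falsity} of a $\Sigma_k$QBF $\Psi = \exists Y_1\forall Y_2\cdots F$, which is $\Pi_k^p$-complete. Two gadgets suffice, both obtained by prepending a fresh Boolean variable $x$ in front of the prefix of $\Psi$ (so that $\exists x\,\exists Y_1$ merges into one block and the result is still a $\Sigma_k$QCSP with circuit constraints). Gadget~A uses the constraint $(x=1)\wedge F$: when $\Psi$ is true every outcome sets $x=1$, whereas when $\Psi$ is false there are no outcomes at all. One then checks that each of $\inconsistent(x,1)$, $\deepfixable(x,0)$, $\deepremovable(x,1)$, $\deepsubstitutable(x,1,0)$, $\deepinterchangeable(x,1,0)$ and $\deepirrelevant(x)$ holds \emph{iff} $\Psi$ is false — in the ``true'' case each fails because the flipped tuple $t[x:=0]$ violates $x=1$ and is not even a solution, and in the ``false'' case each holds vacuously over the empty outcome set. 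Gadget~B instead leaves $x$ unconstrained ($C=F$): when $\Psi$ is true, any winning strategy for the $Y$-part yields two outcomes differing only in $x$, so $\determined(x)$ and $\dependent(\botset, x)$ fail, while both hold vacuously when $\Psi$ is false; this supplies the hardness for determinacy and dependence.

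The main obstacle is the membership bookkeeping rather than the reductions: one must verify that ``$t\in\out$'' is genuinely $\Sigma_k^p$ (which requires that Proposition~\ref{prop-trick-for-pspace-upper-bound} preserve both the Boolean-circuit encoding and the number of alternations) and, above all, diagnose precisely why inconsistency, determinacy and dependence collapse to $\Pi_k^p$ while fixability, substitutability, removability, interchangeability and irrelevance do not. The distinction is exactly the polarity of the outcome-tests: the four tight properties assert the pure non-existence of a ``bad'' outcome (or outcome pair), so all tests occur positively and merge into a single $\Sigma_k^p$ guess; the others guess one outcome and then, under an implication, \emph{assert} that a derived tuple is again an outcome, forcing a $\Pi_k^p$ layer and a $\Sigma_k^p$ layer to coexist with no apparent way to merge them — which is what leaves a genuine gap between the $\Pi_k^p$ lower bound and the $\Pi_{k+1}^p$ upper bound.
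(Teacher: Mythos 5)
Your proposal is correct and follows essentially the same route as the paper: membership rests on the Proposition~\ref{prop-trick-for-pspace-upper-bound} outcome test (whose prefix keeps the same $k$ alternations) combined with merging the tuple guesses into the leading existential block --- which you phrase abstractly as closure of $\Sigma_k^p$ under conjunction and leading existential quantification, while the paper performs the block-merging explicitly on the QBF prefixes --- and your polarity diagnosis (all outcome tests positive for inconsistency, implication, determinacy, dependence, versus a mixed positive/negative pair for the others) is exactly the paper's reason for the gap between $\Pi_k^p$ and $\Pi_{k+1}^p$. Your hardness gadgets are the paper's reductions with the fresh existential variable placed in the first block, up to trivial variations ($x=1$ in place of $x=0$, dependence on $V=\varnothing$ instead of $V=X$, and implication covered through the Boolean equivalence $\implied(x,0) \equivalence \inconsistent(x,1)$ rather than by a direct check).
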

  
  In particular, it was reported in 
  \cite{Bordeaux-Cadoli-Mancini:LPAR:2004} that these problems are
  coNP-complete for purely existential QCSPs. 
  
  Why the precise results are less regular than in previous cases is 
  because the precise number of quantifier alternations is impacted by 
  many factors. 
  For instance, if we consider a Quantified
  Boolean Formula $\exists X. ~\forall Y. ~ F(X, Y)$, where $X$ and $Y$ are vectors
  of Boolean variables and $F$ is a Boolean circuit, then putting $F$ into CNF
  will produce a formula of the form $\exists X. ~\forall Y. ~\exists Z. ~ G(X, Y, Z)$,
  and this sometimes incurs a difference of one level in the polynomial hierarchy
  between Formalism (I) and Formalisms (II) to (IV). 
  Similarly, there is a difference between shallow and deep properties in that 
  shallow properties are themselves usually stated with more quantifier
  alternations, a typical form being ``forall outcomes, there exists an outcome''.  
  What is obviously true for all properties in any case, however, is if we consider
  QCSPs with a limited number of quantifier alternations, the level reached in the
  polynomial hierarchy is also bounded.

\section{Local reasoning}
\label{sec:locality}

  The previous section shows that all of the properties we are
  interested in are computationally difficult to detect---in fact
  as difficult as the resolution of the QCSP problem itself. 
  There are nonetheless particular cases where a property can be cheaply 
  revealed. In CSP solvers the most widely used way of detecting properties
  cheaply is by using \emph{local} reasoning: instead of analysing the
  whole problem at once, thereby facing its full complexity, we analyse it
  bit by bit (typically constraint by constraint). Depending on the property
  we know how deductions made on the bits generalize to the whole QCSP.
  For instance:
  
  \begin{itemize}
  
  \item In the case of inconsistency, a deduction made on one single constraint
  generalizes to the whole CSP. For instance, if we have a CSP
  $\exists x \in [0,5]. ~  y \in [0,5]. ~ x > y \wedge C$, we can deduce from
  the constraint $x > y$ that value 0 is inconsistent for $x$, without having to
  worry of which other constraints are present in $C$.
  
  \item In the case of substitutability, a deduction is valid for the whole QCSP if
  it can be checked independently for each and every constraint. For instance
  if we have the CSP $\exists x \in [0,5]. ~  y \in [0,5]. ~ x > 1 \wedge x \leq y$, we
  can deduce that value value 3 is substitutable to 2 for $x$. This is the case because
  the substitutability property holds for both constraints $x > 1$ and $x \leq y$.
  If, however, there were a third constraint, we would have to make sure that the property holds
  for it as well before deducing that it holds for the whole CSP. The
  situation is slightly less advantageous than for inconsistency because we have to 
  consider each constraint before making a deduction, but it is nevertheless of
  interest---analysing the constraints one by one is typically much cheaper than
  analysing the whole CSP at once.
  
  \end{itemize}
  
  Following the classical CSP approach, we investigate the use of
  local reasoning as a means to cheaply detect the properties we have proposed.

  \subsection{Positive Results}

  Our first result is that using local reasoning allows to detect the
  deep properties except removability. 
  Depending on the property one of the two forms of
  generalization mentioned before is correct.

  \begin{proposition}
    Let $\phi = \langle X, Q, D, C \rangle$ be a QCSP where $C =
    \{c_1, \dots, c_m\}$. We denote by $\phi_{k}$ the QCSP $\langle
    X, Q, D, \{c_k\} \rangle$ in which only the $k$-th constraint is
    considered. We have, for all $x_i \in X$, $V \subseteq X$, 
    and $a, b \in D_{x_i}$:

    \begin{itemize}
    
    \item $\betweenparenth{\bigvee_{k \in 1 .. m} \inconsistent^{\phi_k}(x_i, a)}
      \implication  \inconsistent^{\phi}(x_i, a)$;

    \item $\betweenparenth{\bigvee_{k \in 1 .. m} \implied^{\phi_k}(x_i, a)}
      \implication  \implied^{\phi}(x_i, a)$;
      
    \item $\betweenparenth{\bigwedge_{k \in 1 .. m} \deepfixable^{\phi_k}(x_i, a)}
      \implication  \deepfixable^{\phi}(x_i, a)$;
      
    \item $\betweenparenth{\bigwedge_{k \in 1 .. m} \deepsubstitutable^{\phi_k}(x_i, a, b)}
      \implication  \deepsubstitutable^{\phi}(x_i, a, b)$;
      
    \item $\betweenparenth{\bigwedge_{k \in 1 .. m} \deepinterchangeable^{\phi_k}(x_i, a, b)}
      \implication  \deepinterchangeable^{\phi}(x_i, a, b)$;
      
    \item $\betweenparenth{\bigvee_{k \in 1 .. m} \determined^{\phi_k}(x_i)}
      \implication  \determined^{\phi}(x_i)$;
      
    \item $\betweenparenth{\bigwedge_{k \in 1 .. m} \deepirrelevant^{\phi_k}(x_i)}
      \implication  \deepirrelevant^{\phi}(x_i)$;
      
    \item $\betweenparenth{\bigvee_{k \in 1 .. m} \dependent^{\phi_k}(V, x_i)}
      \implication  \dependent^{\phi}(V, x_i)$.
      
    \end{itemize} 
  \end{proposition}

%\toni{REPETITIVE: could be removed}
%  Local reasoning can therefore be used to check that some property
%  holds by inspecting the constraints one by one without considering
%  the problem as a whole, for instance that a value is substitutable
%  to another just because this property holds for each constraint. In
%  some other cases like inconsistency, it also allows to determine
%  that a property holds just because one particular constraint has the
%  property.

  \subsection{Negative Results}

  It was noticed in \cite{Bordeaux-Cadoli-Mancini:LPAR:2004} that, even in 
  the non-quantified case, deep removability is not as well-behaved as the other deep
  properties since it is not possible to detect it using local reasoning. 
  This was seen on an example, which we borrow from this paper:
  
  \begin{example}
    Consider the CSP 
    \[
    \exists x \in \{1,2,3\}. ~ \exists y \in \{1,2,3\}. ~
    (x \leq y, y \leq x, x \not= 1, x \not= 3)
    \]
    If we consider each of the four constraints, then we find that
    value 2 is removable for $x$. But obviously value 2 is \emph{not}
    removable for the CSP as the only solution is indeed $x=2, y=2$.
  \end{example}

  A similar problem occurs when we consider the shallow definitions: it is incorrect,
  in general, to use local reasoning to detect these versions of the properties\footnote{%
    This corrects an error in \cite{Bordeaux-Cadoli-Mancini:AAAI:2005}, where we
    wrongly stated that local reasoning is valid for all properties.
  }.
  Here again this can be seen on a simple example:
  
  \begin{example}
    Consider the (Q)CSP
    \[
    \exists x_1 \in \{0,1\}. ~ \exists x_2 \in \{0,1\}. ~ 
    (x_1 = x_2 \wedge x_2 = 1)
    \]
    It is the case that variable $x_1$ is shallow-fixable to value 0 \wrt\ constraint $x_1=x_2$;
    and variable $x_1$ is also shallow-fixable to value 0 \wrt\ constraint $x_2=1$. 
    Despite of that, $x_1$ is not shallow-fixable to 0 in the QCSP, as there is simply no solution
    with $x_1=0$. 
  \end{example}
  
  The shallow definitions therefore have to be considered carefully: they are more general than the
  deep properties, but they have to be detected by other means than local reasoning.
  This is somewhat reminiscent of what happens with the
  removability property, whose generality comes at the price of being a less well-behaved
  property than substitutability or inconsistency.

\section{Concluding Remarks}
\label{sec:conclusion}

  \subsection{Related Works}

  A number of works related to Quantified CSP have considered particular
  cases of the properties we have attempted to study systematically in this 
  paper. Most of these works have been mentioned throughout the paper, 
  notably \cite{Mamoulis-Stergiou:REPORT:2004} for their use of substitutability; we 
  also note the work done by Peter Nightingale in his thesis, which 
  devotes large parts to the consistency property \cite{Nightingale:CP:2005}.
  The notions considered in these works are related to our proposals but typically 
  less general, because our definitions finely take into account the quantifiers. 
  For substitutability for instance, the definition used in \cite{Mamoulis-Stergiou:REPORT:2004}
  was essentially the classical (existential) definition.
  For consistency, our definition subsumes the notions proposed by
  \cite{Bordeaux-Monfroy:CP:2002} or \cite{Nightingale:CP:2005}. 
  Our general definition nevertheless leaves open the question of how to efficiently detect
  inconsistent values, and these proposals can be seen as  
  particular ways of using local reasoning to detect inconsistent values. This situation is
  quite closely related to works in CSP, where many notions of local consistency can be 
  defined. These notions have different merits that can be evaluated experimentally,
  but they all share the basic property of being ways to detect (globally) inconsistent 
  values, which explains why they are correct. 

  We also note that more advanced studies are available for the particular case of
  Boolean quantified constraints. In these works some techniques have been proposed
  that specifically take into account the quantifier prefix. 
  However, contrary to ours, these proposals are restricted to Boolean domains. 
  For instance in \cite{Rintanen:IJCAI:1999,Cadoli-Schaerf-Giovanardi-Giovanardi:JAR:2002},
  several techniques are proposed to fix and remove values.  These
  works have shown that detecting properties is essential and can lead
  to a consistent pruning of the search space, but no clear and
  general framework to understand these properties was available. 

  An interesting, recent related
  work is  \cite{Audemard-Jabbour-Sais:IJCAI:2007}, which initiates the study 
  of \emph{symmetries} in Quantified Boolean Formulae.
  Symmetries are related to the notion of interchangeability but are in a sense
  a more general concept. Our feeling is that the idea of using the 
  notion of outcome to define constraint properties may be applicable to
  this class of properties as well. Symmetries are a complex and fascinating topic; 
  an interesting perspective for future work will be to see if our framework 
  can help understanding them in the general context of quantified CSP.
  
  \subsection{Conclusion}

  A primary goal of our work was to state the definitions in a way that
  is formal and amenable to proofs. In previous QCSP literature, it is fair to say that
  formal proofs were scarce, probably because facts that are trivial to prove in CSP
  tend to become complex to write formally when quantifiers come into play.
  Quantifiers can be complex to reason with, and it is sometimes easy to make 
  wrong assumptions on some properties, as we saw ourselves when finding the error we 
  made in the preliminary version of this paper (Section 6). 
  Because of this difficulty, we wanted in this work to build solid foundations on which
  the deductions made in QCSP solvers can rely.

\bibliographystyle{acmtrans}
\bibliography{biblio}

\begin{received}
  Received XXXXX;
  accepted XXXXX
\end{received}

%%%%%%%%%%%%%%%%%%%%%%%%%%%%%%%%%%%%%%%%%%%%%%%%%%%%%%%%%%%%%%%%%%%%%%%%%%%%%%%
\elecappendix

  \section{Proofs of the Main Propositions}
  \label{app-proofs}

  \setcounter{mypropositioncounter}{0} 

  \begin{proposition}
    A QCSP is true  (as defined in Section \ref{subsec-truth}) 
    iff it has a winning strategy.
  \end{proposition}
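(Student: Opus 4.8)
The plan is to prove both directions at once by induction on the number of variables $n = |X|$, peeling off the outermost variable $x_1$ and relating the game on $\phi$ to the games on its \emph{reducts}. For a value $a \in D_{x_1}$, let $\phi|_{x_1:=a}$ denote the QCSP over the variables $X \setminus \{x_1\}$ that keeps the remaining quantifiers and domains and replaces each constraint $c \in C$ over $V$ by its reduct: if $x_1 \in V$, retain only the tuples of $c$ whose $x_1$-component equals $a$ and project them onto $V \setminus \{x_1\}$; if $x_1 \notin V$, leave $c$ unchanged. A direct check shows that a tuple $u$ over $X \setminus \{x_1\}$ satisfies the reduced constraints iff $u[x_1 := a]$ satisfies $C$. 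Since a domain $Q_{x_1} x_1 \in D_{x_1}$ abbreviates a guarded quantifier, the Tarskian semantics unfold recursively: if $Q_{x_1} = \exists$ then $\phi$ is true iff $\phi|_{x_1:=a}$ is true for \emph{some} $a \in D_{x_1}$, and if $Q_{x_1} = \forall$ then $\phi$ is true iff $\phi|_{x_1:=a}$ is true for \emph{every} $a \in D_{x_1}$.

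The core step is a decomposition lemma relating $\win^\phi$ to the $\win^{\phi|_{x_1:=a}}$, and it hinges on how the index set $A_{i-1}$ behaves when $x_1$ is removed. If $Q_{x_1} = \exists$, then $x_1 \notin A$, so $A_{i-1}$ computed in $\phi$ coincides with $A_{i-1}$ computed in each reduct; moreover $s_{x_1}$ is a constant $a_0 \in D_{x_1}$ (as $A_0 = \botset$), and the remaining functions of any strategy $s$ are \emph{exactly} a strategy for $\phi|_{x_1:=a_0}$. Each scenario $t \in \sce^\phi(s)$ has $t_{x_1} = a_0$ and $t|_{X\setminus\{x_1\}} \in \sce^{\phi|_{x_1:=a_0}}$, and by the reduct property it is a solution of $\phi$ iff its restriction solves the reduct; hence $s$ is winning iff the induced strategy is. Thus $\win^\phi \neq \botset$ iff $\win^{\phi|_{x_1:=a}} \neq \botset$ for some $a$. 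If $Q_{x_1} = \forall$, then $x_1 \in A$ and, for every existential $x_i$ with $i \geq 2$, $A_{i-1}$ in $\phi$ equals $\{x_1\}$ together with $A_{i-1}$ in the reduct; currying each $s_{x_i}$ on its $x_1$-coordinate therefore identifies a strategy $s$ for $\phi$ with a family $\{s^a\}_{a \in D_{x_1}}$ of strategies, one per reduct. A scenario $t$ with $t_{x_1}=a$ restricts to a scenario of $s^a$, and $s$ is winning iff every $s^a$ is winning; hence $\win^\phi \neq \botset$ iff $\win^{\phi|_{x_1:=a}} \neq \botset$ for \emph{every} $a$.

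Combining the decomposition lemma with the quantifier semantics and the induction hypothesis closes each case: in the existential case both truth and non-emptiness of $\win$ reduce to an existential over $a$, and in the universal case both reduce to a universal over $a$. For the base case $n = 0$ the formula is the conjunction of nullary constraints under $I$; the only strategy is the empty family with the single scenario consisting of the empty tuple, which is winning iff that tuple satisfies every constraint, i.e.\ iff the formula is true. I expect the main obstacle to be purely the bookkeeping: verifying that constraint reduction commutes with satisfaction and, above all, that the global, set-based notions of strategy and scenario (Definitions of strategy, scenario and winning strategy) curry cleanly along the $x_1$-coordinate so that they align with the recursive, logical definition of truth. Once that alignment is established, the induction itself is routine.
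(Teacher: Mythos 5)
Your proof is correct, but it takes a genuinely different route from the paper's. The paper does not argue from scratch: it observes that strategies are exactly interpretations of Skolem functions, passes to the Skolem normal form $F'$ of the logical representation $F$ (taking care of the bounded quantifiers via the guard $f(x_1\dots x_n)\in D_y$), and concludes that a winning strategy exists iff the second-order Skolemized formula, hence the QCSP itself, is true. You instead give a self-contained induction on the number of variables: you peel off $x_1$, define reducts $\phi|_{x_1:=a}$, and prove a decomposition lemma --- when $Q_{x_1}=\exists$, a strategy is a constant $a_0$ together with a strategy for $\phi|_{x_1:=a_0}$, and when $Q_{x_1}=\forall$, currying on the $x_1$-coordinate identifies a strategy with a family of strategies, one per reduct, winning iff all members are --- which you then match against the recursive unfolding of the guarded quantifier semantics. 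Your key steps (the strategy/scenario correspondences in both quantifier cases, the finite-choice assembly in the universal case, and the $n=0$ base case) are all sound, modulo the index-shifting bookkeeping you yourself flag. What the paper's route buys is brevity and the conceptual identification of strategies with Skolem functions; what yours buys is a fully elementary, from-scratch argument that never leaves the paper's finite combinatorial definitions, avoids any appeal to second-order logic, and whose recursive structure mirrors the way QCSP solvers actually explore the quantifier prefix.
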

  
    \begin{proof}
      Instead of proving this result from scratch we sketch its connection to 
      classical logical results and simply note that the functions used in 
      the definition of the notion of strategy are essentially Skolem functions: 
      it is well-known that, starting from a formula
      $
      \forall x_1 \dots x_n. \exists y. ~F(x_1, \dots x_n, y)
      $
      with an existentially quantified variable $y$, we can replace $y$ by
      a function and obtain a second-order formula that is equivalent:
      $
      \exists f. ~ \forall x_1 \dots x_n. ~F(x_1, \dots x_n, f(x_1 \dots x_n))
      $. 
      
      If the domain $\domain$ is additionally fixed and each quantifier is
      bounded, \ie\ if we have a formula of the form:
      $
      \forall x_1 \in D_{x_1} \dots \forall x_n \in D_{x_n}. \exists y \in D_y. ~F(x_1, \dots x_n, y)
      $,
      then the formula is equivalent to:
      \[
      \exists f. ~
      \forall x_1 \in D_{x_1} \dots \forall x_n \in D_{x_n}. 
      \left(
      f(x_1 \dots x_n) \in D_y\wedge
      ~F(x_1, \dots x_n, f(x_1 \dots x_n))\right)
      \]
      and any interpretation $I$ verifying:
      \[
      \langle \domain, I \rangle ~\models~ 
      \forall x_1 \in D_{x_1} \dots \forall x_n \in D_{x_n}.
      \left(
      f(x_1 \dots x_n) \in D_y\wedge
      ~F(x_1, \dots x_n, f(x_1 \dots x_n))\right)
      \]
      is such that the function $I(f)$ is of signature 
      $
      \left(\prod_{x_i \in \{x_1 \dots x_n\}} D_{x_i}\right) \rightarrow D_y.
      $
      
      Now given a QCSP, let $F$ be its logical representation as defined in Section
      \ref{subsec-truth}, and let $F'$ be the Skolem normal form of $F$,
      obtained by iteratively applying the process described above, for all existential
      variables. 
      The strategies of the QCSP are exactly the possible interpretations of the Skolem 
      functions of $F'$. Furthermore, a strategy is winning (all outcomes are true) iff
      the first-order (universally quantified) part of the formula is true. 
      Consequently a winning strategy exists for 
      the QCSP iff the model-checking problem $\langle \domain, I \rangle \models F'$ is true,
      \ie\ iff the QCSP is true.
    \end{proof}

  \begin{proposition}   
    Deep fixability could equivalently be defined by the condition
    $\forall t \in \out. t[x_i := a] \in {\sol}$;
    Deep substitutability could be equivalently defined by 
    $\forall t \in \out.$  $(t_{x_i} = a) \implication (t[x_i:=b] \in \sol)$;   
    deep removability by 
    $\forall t \in \out.
          (t_{x_i} = a)  \implication (\exists b \neq a.
          t[x_i := b] \in \sol)$;
    and deep irrelevance by
    $\forall t \in \out. \forall b \in D_{x_i}. ~ t[x_i := b] \in \sol$.
  \end{proposition}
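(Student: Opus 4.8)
The plan is to prove each equivalence by showing that the apparently weaker ``$\sol$'' condition implies the original ``$\out$'' condition; the reverse implication is immediate, since every outcome is a solution ($\out \subseteq \sol$), so replacing $\in \out$ by $\in \sol$ on the right-hand side of an implication can only relax the requirement. The substance is therefore to \emph{upgrade} a solution obtained by modifying an outcome into a genuine outcome, and the tool I would use throughout is \emph{strategy surgery}: starting from a winning strategy that produces the outcome $t$ at hand, I locally override its decision at variable $x_i$ and argue that the modified strategy is still winning.

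Concretely, for deep fixability I would proceed as follows. Fix $t \in \out$ and a winning strategy $s$ with $t \in \sce(s)$. Let $p_0 = t|_{A_{i-1}}$ be the universal prefix that leads to $x_i$ in $t$, and define $s'$ to agree with $s$ everywhere except that $s'_{x_i}(p_0) := a$. First I would check that $t[x_i := a] \in \sce(s')$: since strategies read only universal variables and $x_i$ is existential, modifying $x_i$ changes no $t|_{A_{j-1}}$, so every other existential keeps its value while $x_i$ now takes $a$ as prescribed. The key step is that $s'$ is winning. Its scenarios split in two: those whose prefix differs from $p_0$ coincide with scenarios of $s$ (hence are solutions), and those sharing the prefix $p_0$. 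For such a scenario $r$, I would consider its \emph{parent} $\bar r := r[x_i := s_{x_i}(p_0)]$, which follows $s$ throughout and is therefore itself an outcome ($\bar r \in \sce(s) \subseteq \out$). The \emph{uniform} hypothesis $\forall u \in \out.\ u[x_i := a] \in \sol$, applied to $\bar r$, then yields $r = \bar r[x_i := a] \in \sol$. Hence all scenarios of $s'$ are solutions, $s'$ is winning, and $t[x_i := a] \in \out$. Deep substitutability and deep irrelevance would follow by exactly the same surgery: for substitutability the parent satisfies $\bar r_{x_i} = s_{x_i}(p_0) = t_{x_i} = a$, so the substitutability hypothesis applies to it; for irrelevance one runs the fixability argument once per value $b \in D_{x_i}$.

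The hard part will be deep removability, and this is where I would concentrate the effort. The same surgery requires choosing, for the whole prefix $p_0$, a \emph{single} replacement value $b \neq a$, because $s'_{x_i}$ may depend only on $A_{i-1}$ and hence must be constant across all completions of the universal variables occurring after $x_i$. The hypothesis, however, only guarantees for each outcome $\bar r$ (with $\bar r_{x_i} = a$ and prefix $p_0$) the existence of \emph{some} witness $b_{\bar r} \neq a$ with $\bar r[x_i := b_{\bar r}] \in \sol$, and these witnesses may a priori vary as the later universal values vary. The delicate point is therefore to secure a witness that is uniform over all outcomes sharing the prefix $p_0$; I would examine whether the interplay of the removability hypothesis with the winning-strategy structure forces such a common witness, and this uniformity-of-witness question is the crux on which the removability case turns.
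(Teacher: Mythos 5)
Your three completed cases (fixability, substitutability, irrelevance) are correct, and your argument is essentially the paper's own: the paper performs the same strategy surgery, only globally rather than locally --- it redefines $s_{x_i}(\tau) := a$ (or replaces $a$ by $b$) for \emph{every} prefix $\tau$, so that $\sce(s) = \{\tau[x_i := a] : \tau \in \sce(s')\}$ --- whereas you override the single prefix $p_0$. Both variants go through, and for exactly the reason you identify: in these three cases the replacement value is fixed before the surgery, so the modified choice at $x_i$ is a legitimate function of $A_{i-1}$; and irrelevance reduces to fixability once per value, just as the paper does via its classification proposition.

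Your worry about removability, however, is not an obstacle to be overcome --- it is fatal, and you have in fact located an error in the paper. The uniform witness you ask for need not exist, and the removability clause of the proposition is false as stated. Writing tuples as $(t_x,t_y)$, consider
\[
\phi:\ \exists x \in \{0,1,2\}.\ \forall y \in \{0,1\}.\ c(x,y),
\qquad
c = \{(0,0),\,(0,1),\,(1,0),\,(2,1)\}.
\]
The only winning strategy sets $x=0$, hence $\out^\phi = \{(0,0),(0,1)\}$. The $\sol$-version of removability of value $0$ for $x$ holds, but only with a witness that varies with the universal variable $y$ quantified \emph{after} $x$: for the outcome $(0,0)$ take $b=1$ (since $(1,0)\in\sol$), for $(0,1)$ take $b=2$ (since $(2,1)\in\sol$). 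Yet $\deepremovable(x,0)$ fails, since no outcome at all assigns $x \neq 0$. (As a cross-check, deleting value $0$ from $D_x$ turns $\phi$ from true to false, so by Propositions~\ref{prop-shallow-stronger-than-deep} and~\ref{prop:correctness-of-removability} that value cannot be deep-removable.) The paper's proof of this clause breaks exactly where you predicted: it restates the hypothesis with a choice function $f$ satisfying $t[x_i := f(t)] \in \sol$ and then claims to ``exhibit a strategy in which every scenario $t$ with $t_{x_i}=a$ is changed into $t[x_i := f(t)]$''; but $s_{x_i}$ may only read $t|_{A_{i-1}}$, while $f(t)$ may depend on universal variables occurring after $x_i$ (as it must in the example above), so no such strategy exists in general. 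This is consistent with the paper's own observation elsewhere that removability is the one property that also resists local reasoning. The correct course for you is to prove the other three clauses as planned and to drop the removability clause, or to restore it only under the extra hypothesis that the witness $b$ can be chosen constant on each class of outcomes agreeing on $A_{i-1}$.
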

    
    \begin{proof}
      We consider fixability and we prove that
      $
      \forall t \in \out. ~t[x_i := a] \in {\out}
      $
      holds iff
      $
      \forall t \in \out. ~t[x_i := a] \in \underline{\sol}
      $
      does.
      The $\rightarrow$ implication is straightforward ($\out \subseteq \sol$); 
      we prove the $\leftarrow$ implication. In the case where the QCSP is false
      (no winning strategy) the implication trivially holds, since
      \out\ is then empty. Let us therefore prove it in the case where 
      the QCSP is true. 

      We assume that $\forall t \in \out. ~t[x_i := a] \in \sol$. Let $t \in \out$; 
      it is clear that the tuple  $t[x_i := a]$ belongs to \sol;
      we have to prove that $t[x_i := a]$ also belongs to \out. For that purpose, we
      exhibit a winning strategy $s$ such that $t[x_i := a] \in \sce(s)$. 

      Let $s'$ be a winning strategy such that $t \in \sce(s')$. Such a strategy exists
      since $t$ is an outcome. The strategy $s$ will be obtained by modifying $s'$
      so that all its outcomes assign value $a$ to variable $x_i$.
      More formally, the functions $s_{x_j}$ are defined,
      for each $x_j \in E$, as follows:
      
      \begin{itemize}
      
      \item If $j = i$ then $s_{x_j}(\tau) \doteq a$, for each
         tuple $\tau \in \prod_{y \in A_{j-1}}D_y$;
      
      \item Otherwise $s_{x_j}$ is simply defined as the function $s'_{x_j}$.
      
      \end{itemize}
      
      One can now verify that $\sce(s) = \{\tau[x_i := a] ~:~ \tau \in \sce(s')\}$.
      Two consequences are $t[x_i := a] \in \sce(s)$, and $\sce(s) \subseteq \sol$, 
      which show that $s$ is a winning strategy such that $t[x_i := a] \in \sce(s)$.
      
      Similarly, for substitutability we can exhibit a strategy 
      $s$ in which every $t \in \sce(s')$ such that $t_{x_i} = a$ is changed into the
      scenario $t[x_i := b]$.
      
      For removability it is convenient to restate the property: removability holds if
      there exists a function $f$ that associates to every $X$-tuple $t$ a value 
      $f(t) \not= a$, and such that $\forall t \in \out.
          (t_{x_i} = a)  \implication (t[x_i := f(t)] \in \out)$. 
      We can exhibit a strategy $s$ in which every $t \in \sce(s)$ such that 
      $t_{x_i} = a$ is changed into the scenario $t[x_i := f(t)]$.
      
      For irrelevance we can use the fact that a variable is irrelevant iff it can
      be fixed to any value of its domain (Prop. \ref{prop-classifications}).
    \end{proof}

  \begin{proposition}
    Let $\phi = \langle X, Q, D, C\rangle$ be a QCSP and let 
    $\psi$ be the same QCSP but in which all quantifiers are existential,
    \ie\ $\psi = \langle X, Q', D, C\rangle$, with $Q'_{x} = \exists$, for all
    $x \in X$. We have (forall $x_i, a, b, V$):
    
    \begin{itemize}
    
    \item $\inconsistent^\psi(x_i, a) \implication \inconsistent^\phi(x_i, a)$;
    
    \item $\deepfixable^\psi(x_i, a) \implication \deepfixable^\phi(x_i, a)$;
    
    \item $\deepsubstitutable^\psi(x_i, a, b) \implication \deepsubstitutable^\phi(x_i, a, b)$;
    
    \item $\deepremovable^\psi(x_i, a) \implication \deepremovable^\phi(x_i, a)$;
    
    \item $\deepinterchangeable^\psi(x_i, a, b) \implication \deepinterchangeable^\phi(x_i, a, b)$;
    
    \item $\determined^\psi(x_i) \implication \determined^\phi(x_i)$;
    
    \item $\deepirrelevant^\psi(x_i) \implication \deepirrelevant^\phi(x_i)$;
    
    \item $\dependent^\psi(V, x_i)  \implication \dependent^\phi(V, x_i) $.
    
    \end{itemize}
  \end{proposition}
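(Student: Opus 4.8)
The plan is to reduce all eight implications to two elementary facts about the solution and outcome sets, after which each case becomes bookkeeping. First I would record the identity $\out^\psi = \sol^\psi$, which holds because every quantifier of $\psi$ is existential (this is exactly the remark, made just after the definition of outcome, that $\out = \sol$ when all variables are existential). Since the set of solutions is determined solely by $D$ and $C$ and is insensitive to the quantifier assignment $Q$, we also have $\sol^\psi = \sol^\phi$, and therefore $\out^\psi = \sol^\phi$. Second I would invoke the general inclusion $\out^\phi \subseteq \sol^\phi$. Combining the two gives the single containment $\out^\phi \subseteq \out^\psi$ on which the whole argument rests.

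For the properties whose right-hand side asserts membership in the outcome set --- deep fixability, deep substitutability, deep removability, deep interchangeability and deep irrelevance --- I would first rewrite the $\phi$-version using Proposition \ref{prop-no-outcome}, replacing the inner $\out$ by $\sol$. For instance $\deepfixable^\phi(x_i,a)$ becomes $\forall t \in \out^\phi.\ t[x_i:=a]\in\sol^\phi$. The hypothesis $\deepfixable^\psi(x_i,a)$, after substituting $\out^\psi=\sol^\phi$, reads $\forall t\in\sol^\phi.\ t[x_i:=a]\in\sol^\phi$. As this universal statement ranges over the larger set $\sol^\phi\supseteq\out^\phi$, restricting the outer quantifier to $\out^\phi$ yields precisely the rewritten $\phi$-property. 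The same domain-shrinking step handles substitutability, removability and irrelevance verbatim, since the inner existential witness $b$ (for removability) and the inner universally quantified value $b\in D_{x_i}$ (for irrelevance) are left untouched; interchangeability then follows as the conjunction of two substitutabilities.

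For inconsistency and implication the conclusion is a condition on $t$ alone ($t_{x_i}\neq a$, resp.\ $t_{x_i}=a$) with no occurrence of $\out$, so I would simply shrink the domain of quantification from $\sol^\phi=\out^\psi$ down to $\out^\phi$. Dependency is identical, with a pair $t,t'$ ranging over $\out^\phi\times\out^\phi\subseteq\out^\psi\times\out^\psi$ and the implication $(t|_V=t'|_V)\implication(t_{x_i}=t'_{x_i})$ carried along unchanged.

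The one case needing a separate argument is determinacy, because Proposition \ref{prop-no-outcome} explicitly does \emph{not} apply to it (replacing $\out$ by $\sol$ there strictly changes the notion). Here the inner conclusion is \emph{negative}, $t[x_i:=b]\notin\out$, and I would exploit that the containment propagates in the favorable direction: given $t\in\out^\phi\subseteq\sol^\phi$ and $b\neq t_{x_i}$, the hypothesis $\determined^\psi(x_i)$, namely $\forall t\in\sol^\phi.\ \forall b\neq t_{x_i}.\ t[x_i:=b]\notin\sol^\phi$, gives $t[x_i:=b]\notin\sol^\phi$, whence $t[x_i:=b]\notin\out^\phi$ because $\out^\phi\subseteq\sol^\phi$. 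Thus both the shrinking of the outer quantifier and the negated inner membership cooperate rather than conflict. I expect this determinacy case to be the only genuine subtlety; everything else is a routine application of the single inclusion $\out^\phi\subseteq\out^\psi$ together with Proposition \ref{prop-no-outcome}.
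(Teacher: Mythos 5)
Your proposal is correct and follows essentially the same route as the paper's proof: both arguments rest on the containment $\out^\phi \subseteq \sol^\phi$ (which, via $\out^\psi = \sol^\psi = \sol^\phi$, is your inclusion $\out^\phi \subseteq \out^\psi$), both invoke Proposition \ref{prop-no-outcome} to handle the properties whose conclusion asserts membership in $\out$, and both treat determinacy by the same two-step argument ($t[x_i:=b] \notin \sol^\phi$ hence $t[x_i:=b] \notin \out^\phi$). The only difference is presentational: you make explicit the identification of the classical properties of $\psi$ with $\sol^\phi$-statements, which the paper leaves implicit.
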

    
    \begin{proof}
      All the results rely essentially on the fact that $\out \subseteq \sol$.
      For the properties of inconsistency, implication, determinacy and dependence,
      the proof directly follows:
      classical inconsistency means that $\forall t \in \sol. ~ t_{x_i} \not= a$, 
      which implies the deep property  $\forall t \in \out. ~ t_{x_i} \not= a$;
%      similarly classical implication means that $\forall t \in \sol. ~ t_{x_i}= a$,
%      which implies the deep property $\forall t \in \out. ~ t_{x_i}= a$;
      classical determinacy means that 
      $\forall t \in \sol. ~ \forall b \not=t_{x_i}. ~ t[x_i := b] \not\in \sol$, which
      implies $\forall t \in \out. ~ \forall b \not=t_{x_i}. ~ t[x_i := b] \not\in \sol$,
      which implies the deep property
      $\forall t \in \out. ~ \forall b \not=t_{x_i}. ~ t[x_i := b] \not\in \out$.
      The cases of implication and dependence are similar.
    
      For the other properties we additionally use Proposition \ref{prop-no-outcome}:
      classical fixability means that $\forall t \in \sol. ~ t[x_i := a] \in \sol$. This
      implies $\forall t \in \out. ~ t[x_i := a] \in \sol$ which, by Proposition 
      \ref{prop-no-outcome}, is equivalent to the deep property 
      $\forall t \in \out. ~ t[x_i := a] \in \out$.
      The cases of substitutability, removability, interchangeability and irrelevance are 
      similar.
    \end{proof}

  \begin{proposition}
    For all variables $x_i$ and values $a$ and $b$, we have:
    
    \begin{itemize}
    
    \item $\deepfixable(x_i, a) \implication \shallowfixable(x_i, a)$;
    
    \item $\deepremovable(x_i, a) \implication \shallowremovable(x_i, a)$;
    
    \item $\deepsubstitutable(x_i, a, b) \implication \shallowsubstitutable(x_i, a, b)$;

    \item $\deepinterchangeable(x_i, a, b) \implication \shallowinterchangeable(x_i, a, b)$;
    
    \item $\deepirrelevant(x_i) \implication \shallowirrelevant(x_i)$.
    
    \end{itemize}
  \end{proposition}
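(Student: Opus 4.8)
The plan is to observe that, in each of these five implications, the witness demanded by the shallow definition can be taken to be the very tuple that the deep definition already places in $\out$. The single fact that makes this work is that $x_i \notin X_{i-1}$, since $X_{i-1} = \{x_j \in X \mid j \leq i-1\}$ and $i > i-1$; consequently, for any value $c$, the instantiation $t[x_i := c]$ agrees with $t$ on every variable of $X_{i-1}$, that is, $(t[x_i := c])|_{X_{i-1}} = t|_{X_{i-1}}$. This identity, together with the fact that $(t[x_i := c])_{x_i} = c$, is all that the shallow conditions need on the ``agreement'' side.

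First I would handle fixability. Assuming $\deepfixable(x_i,a)$, let $t \in \out$ be arbitrary. The deep property gives $t[x_i := a] \in \out$, so taking $t' := t[x_i := a]$ yields $t'|_{X_{i-1}} = t|_{X_{i-1}}$ by the observation above and $t'_{x_i} = a$, which is exactly what $\shallowfixable(x_i,a)$ requires. The cases of substitutability and removability follow the same pattern. For substitutability, given $t \in \out$ with $t_{x_i} = a$, the deep property yields $t[x_i := b] \in \out$, and $t' := t[x_i := b]$ witnesses the shallow condition. For removability, given $t \in \out$ with $t_{x_i} = a$, the deep property supplies some $b \neq a$ with $t[x_i := b] \in \out$; again $t' := t[x_i := b]$ agrees with $t$ on $X_{i-1}$ and satisfies $t'_{x_i} = b \neq a$.

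Interchangeability then follows immediately: both the deep and the shallow versions are, by definition, the conjunction of the two directions of substitutability, and the implication has just been established for each direction separately. Finally, irrelevance is the fixability argument applied uniformly to every value: for each $b \in D_{x_i}$, the deep property gives $t[x_i := b] \in \out$, and $t' := t[x_i := b]$ is the required witness, establishing $\shallowirrelevant(x_i)$.

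I do not expect a genuine obstacle here, as each case is a one-line instantiation of the deep hypothesis. The only point that must be stated with care---and the place where an inattentive argument could silently go wrong---is the claim that $t[x_i := c]$ and $t$ coincide on $X_{i-1}$; this is precisely where the strict index bound $i-1$ (rather than $i$) appearing in the shallow definitions is essential, since replacing it by $i$ would force agreement on $x_i$ itself and collapse the distinction between the shallow and deep notions.
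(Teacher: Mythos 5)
Your proof is correct and follows essentially the same route as the paper's: in each case the tuple $t[x_i := c]$ supplied by the deep property is used directly as the witness $t'$ for the shallow condition, relying on the fact that it agrees with $t$ on $X_{i-1}$. Your explicit remark about why the strict bound $i-1$ makes this agreement hold is a nice clarification of a point the paper leaves implicit.
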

  
    \begin{proof}
      If deep fixability holds, \ie\ we have $\forall t \in \out. ~~ t[x_i := a] \in \out$,
      then for each $t \in \out$ the tuple $t'=t[x_i := a]$ is such that 
      $t|_{X_{i-1}} = t'|_{X_{i-1}} \wedge t'_{x_i} = a$, and we therefore have
      $\forall t \in \out. ~
        \exists t' \in \out.  ~
        (t|_{X_{i-1}} = t'|_{X_{i-1}} \wedge  t'_{x_i} = a)$, which means
      $\shallowfixable(x_i, a)$. 
      The proof is similar for irrelevance.
      
      If deep removability holds, \ie\ 
      $\forall t \in \out. ~(t_{x_i} = a)  \implication (\exists b \neq a.
      ~t[x_i := b] \in \out)$, then for each $t \in \out$ such that 
      $t_{x_i} = a$, the tuple $t' = t[x_i := b]$ is such that 
      $t|_{X_{i-1}} = t'|_{X_{i-1}} \wedge t'_{x_i} = b$, and we have
      $\shallowremovable(x_i, a)$. The proof is similar for substitutability, 
      which also uses a bounded quantification, and the result follows for interchangeability.
    \end{proof}

  \begin{proposition}
    The following relations hold between the properties (forall $x_i$,
    $a$ and $b$):
    \begin{enumerate}
    
    \item $\inconsistent(x_i, a) \implication 
      \forall b \in D_{x_i}. ~\deepsubstitutable(x_i, a ,b)$;

    \item $\implied(x_i,a) \equivalence
      \forall b \in D_{x_i} \setminus \{a\}. ~\inconsistent(x_i,b)$;

    \item $\implied(x_i,a) \implication
      \deepfixable(x_i, a)$;

    \item $\inconsistent(x_i, a) \implication 
      \deepremovable(x_i,a)$;

    \item $\exists b \in D_{x_i} \setminus \{a\}. ~\deepsubstitutable(x_i,a,b)
      \implication 
      \deepremovable(x_i, a)$;

    \item $\exists b \in D_{x_i} \setminus \{a\}. ~\shallowsubstitutable(x_i,a,b)
      \implication 
      \shallowremovable(x_i, a)$;

    \item $\deepfixable(x_i, b) \equivalence
      \forall a \in D_{x_i}. ~\deepsubstitutable(x_i, a, b)$;

    \item $\shallowfixable(x_i, b) \equivalence
      \forall a \in D_{x_i}. ~\shallowsubstitutable(x_i, a, b)$;

    \item $\deepirrelevant(x_i) \equivalence
      \forall a \in D_{x_i}. ~\deepfixable(x_i, a)$;

    \item $\shallowirrelevant(x_i) \equivalence
      \forall a \in D_{x_i}. ~\shallowfixable(x_i, a)$.
    \end{enumerate}
  \end{proposition}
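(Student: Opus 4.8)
The plan is to prove each of the ten relations by unfolding the definitions from Definitions~\ref{def:deep-properties} and~\ref{def:shallowProperties} and carrying out elementary first-order manipulations; no game-theoretic reasoning beyond what is already contained in the definition of \out\ is needed. The relations fall into a few natural groups. Items~1 and~4 are vacuous implications: if $\inconsistent(x_i,a)$ holds then $t_{x_i}\neq a$ for every $t\in\out$, so in both $\deepsubstitutable(x_i,a,b)$ and $\deepremovable(x_i,a)$ the antecedent $t_{x_i}=a$ is never met and the implication holds trivially. Items~9 and~10 are pure quantifier rearrangements, since the definitions of \deepirrelevant\ and \shallowirrelevant\ are obtained from $\forall a\in D_{x_i}.~\deepfixable(x_i,a)$ and $\forall a\in D_{x_i}.~\shallowfixable(x_i,a)$, respectively, merely by commuting the two leading universal quantifiers and renaming $b$ to $a$; both equivalences therefore hold by inspection.

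Items~3, 5 and~6 are handled by reusing a witness supplied by the hypothesis. For item~3, if $\implied(x_i,a)$ then every $t\in\out$ already satisfies $t_{x_i}=a$, so $t[x_i:=a]=t\in\out$, which is exactly $\deepfixable(x_i,a)$. For items~5 and~6 I would fix the value $b\neq a$ witnessing substitutability and reuse it for removability: given $t\in\out$ with $t_{x_i}=a$, substitutability produces a tuple whose $x_i$-component equals $b$ (namely $t[x_i:=b]$ in the deep case, or the quantified $t'$ with $t'_{x_i}=b$ in the shallow case), and since $b\neq a$ this tuple is exactly the witness removability asks for.

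The remaining cases are the equivalences, items~2, 7 and~8, each of which I would prove in both directions. The forward directions are immediate: for item~7, $\deepfixable(x_i,b)$ gives $t[x_i:=b]\in\out$ for \emph{every} outcome, in particular for those with $t_{x_i}=a$, so $\deepsubstitutable(x_i,a,b)$ follows for each $a$ (the shallow case~8 is identical via the existential witness $t'$), and for item~2 the implication from $\implied(x_i,a)$ to $\forall b\neq a.~\inconsistent(x_i,b)$ is a one-line quantifier shuffle. The single subtle point of the whole proposition lies in the converse directions: given, say, $\forall a\in D_{x_i}.~\deepsubstitutable(x_i,a,b)$ and an arbitrary $t\in\out$, I would instantiate the universal quantifier at $a:=t_{x_i}$, so that the implication $t_{x_i}=a\implication t[x_i:=b]\in\out$ fires and yields $\deepfixable(x_i,b)$; similarly, the converse of item~2 concludes $t_{x_i}=a$ because no value of $D_{x_i}\setminus\{a\}$ is admissible for $t_{x_i}$. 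Each such step is legitimate precisely because every outcome is a tuple of $\prod_{x\in X}D_x$, so that $t_{x_i}\in D_{x_i}$ and the range $a\in D_{x_i}$ may be specialised to $t_{x_i}$. Apart from this recurring appeal to $t_{x_i}\in D_{x_i}$ in the backward directions of items~2, 7 and~8, I expect no genuine obstacle: every case is either vacuous, a witness reuse, or a mechanical commutation of quantifiers.
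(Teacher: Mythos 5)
Your proof is correct and follows essentially the same route as the paper's: a case-by-case unfolding of the definitions in which items~1 and~4 are vacuous, items~3, 5, 6 reuse the witness supplied by the hypothesis, and the equivalences reduce to quantifier manipulation plus the instantiation $a := t_{x_i}$, justified by $t_{x_i} \in D_{x_i}$. You are in fact slightly more thorough than the paper, which writes out only the forward directions of items~7--10 and leaves the converse directions (including precisely the instantiation step you single out as the one subtle point) implicit.
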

        
    \begin{proof}
      (1) Assume inconsistency holds.
      If we consider an arbitrary $t \in \out$, then $t_{x_i} \not= a$, which
      falsifies the left side of the implication 
      $(t_{x_i} = a) \implication (t[x_i:=b] \in \out)$, for any $b$,
      and deep substitutability therefore holds.
 
      (2) If value $a$ is implied for $x_i$, \ie\ $\forall t \in \out. ~ t_{x_i} = a$, 
      then for every value $b \not= a$ we have $\forall t \in \out. ~ t_{x_i} = a \not= b$, 
      \ie\ $b$ is inconsistent. If all values $b \not= a$ are inconsistent, \ie\ 
      $\forall t \in \out. ~ t_{x_i} \not= b$, then any $t \in \out$ is such that
      $\forall b \not= a. ~t_{x_i} \not= b$ and $t_{x_i} \in D_{x_i}$, so $t_{x_i} = a$\
      \ie\ $a$ is implied.

      (3) If $a$ is implied for $x_i$, then any $t \in \out$ is such that $t_{x_i} = a$, 
      and we therefore have $t[x_i := a] = t \in \out$.
      
      (4) If $a$ is inconsistent for $x_i$, \ie\ $\forall t \in \out. ~ t_{x_i} \not= a$,
      then the left-hand side of the implication 
      $(t_{x_i} = a)  \implication (\exists b \neq a.
          ~ t[x_i := b] \in \out)$ is false for every $t \in \out$.
          
      (5) If $a$ is deep-substitutable to a certain value $b \not= a$, then for every 
      $t \in \out$ verifying $t_{x_i} = a$ we have $t[x_i:=b] \in \out$.
      This implies $\exists b \neq a.~ t[x_i := b] \in \out$. 
      
      (6) If $a$ is shallow-substitutable to a certain value $b \not= a$, then for every  
      $t \in \out$ verifying $t_{x_i} = a$, we have
      $\exists t' \in \out. ~((t|_{X_{i-1}} = t'|_{X_{i-1}}) \wedge (t'_{x_i} = b))$.
      This implies $\exists t' \in \out. ~(t|_{X_{i-1}} = t'|_{X_{i-1}} \land t'_{x_i} \neq a)$.

      (7) If $b$ is deep-fixable for $x_i$, \ie\ $\forall t \in \out. ~ t[x_i := b] \in \out$,
      then the right-hand side of the implication $(t_{x_i} = a) \implication (t[x_i:=b] \in \out)$
      is true for all $t \in \out$.

      (8) If $b$ is shallow-fixable for $x_i$ \ie\ $\forall t \in \out. 
      ~\exists t' \in \out. ~(t|_{X_{i-1}} = t'|_{X_{i-1}} \wedge t'_{x_i} = b)$,
      then the right-hand side of the implication
      $t_{x_i} = a \implication \exists t' \in \out. ~((t|_{X_{i-1}} = t'|_{X_{i-1}})
      \wedge (t'_{x_i} = b)$ is true for all $t \in \out$.

      (9) If $x_i$ is deep-irrelevant, \ie\ 
      $\forall t \in \out. ~\forall a \in D_{x_i}. ~ t[x_i := a] \in \out$, then
      for any $a \in D_{x_i}$ we have $\forall t \in \out. ~ t[x_i := a] \in \out$.

      (10) If $x_i$ is shallow-irrelevant, \ie\ 
      $\forall t \in \out. ~\forall a \in D_{x_i}. ~\exists t' \in \out. ~
      (t|_{X_{i-1}} = t'|_{X_{i-1}}) \wedge (t'_{x_i} = a)$, then
      for any $a \in D_{x_i}$ we have
      $\forall t \in \out. 
      ~\exists t' \in \out. ~(t|_{X_{i-1}} = t'|_{X_{i-1}} \wedge t'_{x_i} = a)$.
    \end{proof}

  \begin{proposition}
    Let $\phi = \langle X, Q, D, C\rangle$ be a QCSP in which value $a
    \in D_{x_i}$ is shallow-removable for an existential variable $x_i$, and let $\phi'$ denote the
    same QCSP in which value $a$ is effectively removed (\ie\ $\phi' =
    \langle X, Q, D', C\rangle$ where $D'_{x_i} = D_{x_i} \setminus
    \{a\}$ and $D'_{x_j} = D_{x_j}, \forall j \not= i$). Then $\phi$ is true
    iff $\phi'$ is true.
  \end{proposition}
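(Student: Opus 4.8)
The plan is to invoke Proposition~\ref{prop-qcsp-true-iff-win}, reducing ``$\phi$ is true'' and ``$\phi'$ is true'' to the existence of winning strategies, and to treat the two directions separately. The easy direction shows that if $\phi'$ is true then $\phi$ is true. Since $x_i$ is existential and only its domain is shrunk (all universal domains are left untouched), any winning strategy $s' \in \win^{\phi'}$ assigns to every existential variable a value already available in $\phi$ (in particular $s'_{x_i}$ takes values in $D'_{x_i} \subseteq D_{x_i}$), so $s'$ is also a strategy for $\phi$. Moreover $\sce^{\phi'}(s') = \sce^{\phi}(s')$, because the universal domains coincide and the existential values are forced identically by $s'$; hence every scenario of $s'$ in $\phi$ satisfies $C$, i.e.\ lies in $\sol^\phi$, and $s'$ is winning for $\phi$. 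Thus $\phi$ is true, and by contraposition the falsity of $\phi$ forces the falsity of $\phi'$.

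For the hard direction I would assume $\phi$ true and build a winning strategy $s^*$ for $\phi$ \emph{none of whose scenarios assigns $a$ to $x_i$}; such an $s^*$ takes values in $D'$ and is therefore immediately a winning strategy for $\phi'$. Starting from any $s \in \win^\phi$, fix a universal context $u \in \prod_{y \in A_{i-1}} D_y$: running $s$ under $u$ determines the values of all variables in $X_{i-1}$, yielding an $X_{i-1}$-tuple $w^u$ with $w^u|_{A_{i-1}} = u$, and let $c = s_{x_i}(u)$. If $c \neq a$, let $s^*$ follow $s$ on this context. If $c = a$, extend $u$ arbitrarily to a full universal assignment and run $s$ to get a scenario $t$ with $t|_{X_{i-1}} = w^u$ and $t_{x_i} = a$; this $t$ is an outcome, so \shallowremovable$(x_i,a)$ provides an outcome $t'$ with $t'|_{X_{i-1}} = w^u$ and $t'_{x_i} = b \neq a$. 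Being an outcome, $t'$ is a scenario of some $\sigma^u \in \win^\phi$, and I \emph{graft} $\sigma^u$ from $x_i$ onward on this context. Formally, for existential $x_j$ with $j<i$ set $s^*_{x_j} \doteq s_{x_j}$; set $s^*_{x_i}(u) \doteq \sigma^u_{x_i}(u)$; and for $k>i$ set $s^*_{x_k}(\tau) \doteq \sigma^{\tau|_{A_{i-1}}}_{x_k}(\tau)$ whenever $s_{x_i}(\tau|_{A_{i-1}}) = a$, otherwise $s^*_{x_k}(\tau) \doteq s_{x_k}(\tau)$. This is well defined because a strategy function at $x_k$ may depend on the whole context $A_{k-1} \supseteq A_{i-1}$, hence on $u$.

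It then remains to check that $s^*$ is winning and never plays $a$ at $x_i$. The latter is immediate, since $s^*_{x_i}(u)$ equals $c$ when $c\neq a$ and equals $b\neq a$ otherwise. For the former, take any scenario $\tau^*$ of $s^*$ and put $u = \tau^*|_{A_{i-1}}$. If $s_{x_i}(u)\neq a$, then $s^*$ coincides with $s$ all along $\tau^*$, so $\tau^*$ is a scenario of $s$, hence a solution. If $s_{x_i}(u)=a$, I claim $\tau^*$ is in fact a scenario of $\sigma^u$, hence a solution as $\sigma^u$ is winning: for $k>i$ this holds by construction; at $x_i$ we have $\tau^*_{x_i} = \sigma^u_{x_i}(u)$; and for $j<i$ we must have $s_{x_j}(u|_{A_{j-1}}) = \sigma^u_{x_j}(u|_{A_{j-1}})$, which holds because both sides equal $w^u_{x_j}$ --- the $s$ side by definition of $w^u$, and the $\sigma^u$ side because its scenario $t'$ satisfies $t'|_{X_{i-1}} = w^u$.

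The hard part will be exactly this last verification, and it is where the \emph{shallow} hypothesis is essential. The graft yields genuine scenarios of the replacement strategies $\sigma^u$ only because shallow-removability supplies a replacement outcome agreeing with $t$ on the \emph{entire} prefix $X_{i-1}$ --- including the existential block $E_{i-1}$, whose values $\sigma^u$ must reproduce in order to be consistent with the prefix inherited from $s$. A notion constraining only the universal prefix $A_{i-1}$ would leave the prefixes of $s$ and $\sigma^u$ possibly disagreeing on $E_{i-1}$, and the stitching argument would fail.
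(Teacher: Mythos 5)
Your proof is correct, and it shares the paper's key surgical idea: use shallow removability to obtain a replacement outcome agreeing with the offending scenario on all of $X_{i-1}$, pick a winning strategy through that outcome, keep the original strategy before $x_i$, and switch to the replacement strategy from $x_i$ onward on the matching branch; your verification that the grafted scenarios are genuine scenarios of the replacement strategy (because its moves on $E_{i-1}$ are pinned down by $t'|_{X_{i-1}} = w^u$) is exactly the verification the paper performs. Where you genuinely differ is the decomposition: the paper corrects \emph{one} offending branch at a time, using a single auxiliary strategy $s^2$ per step, and argues that each correction strictly decreases the number of scenarios with $t_{x_i} = a$, so that after finitely many iterations a winning strategy avoiding $a$ remains; you instead perform \emph{all} grafts simultaneously, indexing a family of replacement strategies $\sigma^u$ by the universal contexts $u \in \prod_{y \in A_{i-1}} D_y$ at which $s$ plays $a$, and you build the final strategy $s^*$ in one shot. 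Your version buys a cleaner argument --- no counting or termination step is needed, and the case analysis on a scenario $\tau^*$ of $s^*$ is a single dichotomy on $s_{x_i}(\tau^*|_{A_{i-1}})$ --- at the modest price of carrying one chosen replacement strategy per bad context rather than one per iteration. Your closing observation about why agreement on the whole prefix $X_{i-1}$ (rather than just $A_{i-1}$) is indispensable is precisely the point the paper exploits implicitly when it checks $t'_{x_i} = s^2_{x_i}(t'|_{A_{i-1}}) = \theta_{x_i}$ and, for existential $j < i$, that the prefix values of the grafted scenarios follow $s^2$; making that dependence explicit is a genuine clarification.
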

    
    \begin{proof} 
      If $\phi'$ has a winning strategy then the same strategy is also winning 
      for $\phi$; having $\phi'$ true therefore implies that $\phi$ is also true.
      
      On the other hand, assume that $\phi$ has
      a winning strategy $s^1$. Since $a \in D_{x_i}$ is shallow-removable for $x_i$,
      we have:
      \[
      \forall t \in \out. ~ t_{x_i} = a \implication \exists t' \in \out. ~
      (t|_{X_{i-1}} = t'|_{X_{i-1}} \land t'_{x_i} \neq a.)
      \]
      
      We show that if $s^1$ has a scenario $t \in \sce(s^1)$
      such that $t_{x_i} = a$, then we can ``correct'' this and exhibit another winning 
      strategy $s$ whose scenarios are the same as those of $s^1$ except 
      that all scenarios $\lambda$ such that $\lambda|_{X_{i-1}} = t|_{X_{i-1}}$
      have been replaced by tuples $t'$ with $t'_{x_i} \not= a$.
      (Intuitively we replace the ``sub-tree'' corresponding to the branch
      $t|_{X_{i-1}}$ by a new branch which does not involve the choice $x_i = a$ anymore.)
      More precisely, every scenario $t' \in \sce(s)$ will satisfy:
      \begin{itemize}
      
      \item If $t'|_{X_{i-1}} \not= t|_{X_{i-1}}$ then $t' \in \sce(s^1)$.
      
      \item If $t'|_{X_{i-1}} = t|_{X_{i-1}}$ then $t'_{x_i} \not= a$.
      
      \end{itemize}
      
      This will prove the result: in showing how to construct $s$ we show that,
      starting from any winning strategy $s^1$ containing a number $n > 0$ of
      ``incorrect'' scenarios $t'$ with $t'_{x_i} = a$, we can always exhibit a winning
      strategy with at most $n-1$ such scenarios, and repeating the correction $n$
      times we construct a 
      winning strategy in which no tuple $t'$ is such that $t'_{x_i}=a$.
      
      Let us now see how to construct $s$ starting from $s^1$. The outcome $t \in \sce(s^1)$
      that needs to be replaced is such that $t_{x_i} = a$ and, using the shallow 
      removability property, we conclude that there exists another outcome $\theta \in \out$ such that
      $\theta|_{X_{i-1}} = t|_{X_{i-1}} \wedge \theta_{x_i} \neq a$. This outcome belongs to
      at least one winning strategy. We choose one of these strategies, which we
      call $s^2$. To define the new strategy $s$ we must define the functions $s_{x_j}$, for
      each $x_j \in E$. These functions are defined as follows:
      
      \begin{itemize}
      
      \item if $j < i$ then $s_{x_j}$ is defined as $s^1_{x_j}$ (\eg\ we follow
      the strategy $s^1$ for the first variables, until variable $x_i$, excluded); 

      \item for the following variables, \ie\ when $j \geq i$, we define the value of 
      $s_{x_j}(\tau)$, for each $\tau \in \prod_{y \in A_{j-1}}D_y$, as follows:

        \begin{itemize}
        
        \item if $\tau|_{X_{i-1}} = t|_{X_{i-1}}$, then 
        $s_{x_j}(\tau) = s^2_{x_j}(\tau)$;
        
        \item if $\tau|_{X_{i-1}} \not= t|_{X_{i-1}}$, then 
        $s_{x_j}(\tau) = s^1_{x_j}(\tau)$;
        
        \end{itemize}
      
      \end{itemize}

      The proof is completed by checking that every scenario $t' \in \sce(s)$
      satisfies the two desired properties:

      \begin{itemize}
      
      \item If $t'|_{X_{i-1}} \not= t|_{X_{i-1}}$ then $t' \in \sce(s^1)$, because, for each
      $x_j \in E$, we have $t'_{x_j} = s_{x_j}(t'|_{A_{j-1}}) = s^1_{x_j}(t'|_{A_{j-1}})$ 
      in this case.
      
      \item If $t'|_{X_{i-1}} = t|_{X_{i-1}}$ then $t'_{x_i} \not= a$, because 
      $t'_{x_i} = s_{x_i}(t'|_{A_{i-1}}) = s^2_{x_i}(t'|_{A_{i-1}}) 
      = s^2_{x_i}(t|_{A_{i-1}})  = s^2_{x_i}(\theta|_{A_{i-1}}) = \theta_{x_i} \not= a$.
      
      \end{itemize}
      
      Furthermore, every $t' \in \sce(s)$ with $t'|_{X_{i-1}} = t|_{X_{i-1}}$
      belongs to $\sce(s^2)$, and $s$ is therefore a winning strategy:  
      $\sce(s) \subseteq (\sce(s^1) \cup \sce(s^2)) \subseteq \out$.
    \end{proof}

  \begin{proposition}
    Let $\phi = \langle X, Q, D, C\rangle$ be a QCSP in which value $a
    \in D_{x_i}$ is fixable for an existential variable $x_i$, and let $\phi'$ denote the same
    QCSP in which value $a$ is effectively fixed (\ie\ $\phi' =
    \langle X, Q, D', C\rangle$ where $D'_{x_i} = \{a\}$ and $D'_{x_j} =
    D_{x_j}, \forall j \not= i$). Then $\phi$ is true iff $\phi'$ is
    true.
  \end{proposition}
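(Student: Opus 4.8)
The plan is to prove the two implications separately, following the same template as the proof of Proposition~\ref{prop:correctness-of-removability}. The easy direction is that $\phi'$ true implies $\phi$ true: since $D'_{x_i} = \{a\} \subseteq D_{x_i}$ and all other domains (in particular every universal one) are unchanged, any winning strategy for $\phi'$ is also a strategy for $\phi$ with the very same set of scenarios, all of which are solutions, hence it is winning for $\phi$. The substance is therefore in the converse. Assuming $\phi$ is true, I would build, from an arbitrary winning strategy $s^1 \in \win^\phi$ and using shallow-fixability, a winning strategy $s$ for $\phi$ that always assigns $a$ to $x_i$; such an $s$ is then a valid strategy for $\phi'$ (its only value for $x_i$ lies in $D'_{x_i}$) and witnesses that $\phi'$ is true.

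To construct $s$ I would index the modification by the possible plays of the adversary up to $x_i$, i.e.\ by the tuples $u \in \prod_{y \in A_{i-1}} D_y$. Fix such a $u$ and let $t$ be a scenario of $s^1$ with $t|_{A_{i-1}} = u$; note that any two scenarios of $s^1$ agreeing on $A_{i-1}$ also agree on the whole of $X_{i-1}$, since every existential variable of index $<i$ is a function of universal variables of index $<i$, so $t|_{X_{i-1}}$ is well defined by $u$. Because $t \in \out^\phi$, shallow-fixability yields an outcome $t'_u \in \out^\phi$ with $t'_u|_{X_{i-1}} = t|_{X_{i-1}}$ and $(t'_u)_{x_i} = a$, and $t'_u$ belongs to some winning strategy which I call $s^u$. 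I then define $s$ as follows: for existential $x_j$ with $j<i$, $s_{x_j} \doteq s^1_{x_j}$; for $x_i$, $s_{x_i}(\tau) \doteq a$ for every $\tau$; and for existential $x_j$ with $j>i$, $s_{x_j}(\tau) \doteq s^{u}_{x_j}(\tau)$ where $u = \tau|_{A_{i-1}}$.

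The crux of the argument, and the step I expect to be the main obstacle, is to check that this piecewise definition glues into a genuinely winning strategy, concretely that every scenario $t^* \in \sce^\phi(s)$ is in fact a scenario of the single strategy $s^u$ with $u = t^*|_{A_{i-1}}$, hence a solution. The key observation making this work is the agreement $t'_u|_{X_{i-1}} = t|_{X_{i-1}}$: along the branch $u$ the early existential choices of $s^1$ coincide with those of $s^u$ (both equal $t'_u|_{E_{i-1}}$), and $s^u$ itself plays $a$ at $x_i$ on that branch (since $(t'_u)_{x_i}=a$ and $t'_u$ is a scenario of $s^u$, so $s^u_{x_i}(u)=a$). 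Thus for $j<i$, $j=i$, and $j>i$ alike one gets $t^*_{x_j} = s^u_{x_j}(t^*|_{A_{j-1}})$, so $t^* \in \sce^\phi(s^u) \subseteq \sol^\phi$. Since $s$ assigns $a$ to $x_i$ in every scenario, it is a winning strategy for $\phi'$, completing the converse. I would finish by remarking that the care needed lies exactly in showing that the branches indexed by distinct $u$ do not interfere (they live in disjoint subtrees determined by $A_{i-1}$) and that replacing $s^1$ by $s^u$ after $x_i$ does not break the scenario condition for the variables preceding $x_i$, which is precisely what the prefix-agreement of $t'_u$ guarantees.
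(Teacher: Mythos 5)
Your proof is correct, and it rests on the same core mechanism as the paper's: use shallow-fixability to obtain a witness outcome that agrees with the current play on $X_{i-1}$ and takes value $a$ at $x_i$, pick a winning strategy containing that witness, and graft it onto the subtree determined by the adversary's choices before $x_i$, keeping $s^1$ elsewhere. The difference is in how the graft is organized. The paper corrects one ``bad'' branch at a time and closes the argument with a finite-descent step (from a winning strategy with $n>0$ incorrect scenarios it builds one with at most $n-1$, and iterates), whereas you perform the graft simultaneously on all branches, indexing the replacement strategies $s^u$ by the adversary prefixes $u \in \prod_{y\in A_{i-1}} D_y$ and verifying directly that every scenario of the glued strategy is a scenario of the corresponding single $s^u$, hence a solution. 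Your one-shot construction buys two things: it removes the iteration and counting argument entirely, and it forces you to state explicitly a point the paper leaves implicit, namely that scenarios of $s^1$ agreeing on $A_{i-1}$ agree on all of $X_{i-1}$ (so the branch, and hence the witness $t'_u$ and the strategy $s^u$, are well defined from $u$ alone); this also tidies the paper's slight abuse of notation, where the condition $\tau|_{X_{i-1}} = t|_{X_{i-1}}$ is written for a tuple $\tau$ defined only on universal variables and really means $\tau|_{A_{i-1}} = t|_{A_{i-1}}$. The paper's iterative version, for its part, needs only one auxiliary strategy per step rather than a whole family $(s^u)_u$, and reuses verbatim the structure of the removability proof (Prop.~\ref{prop:correctness-of-removability}), which is why the authors present it that way.
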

    
    \begin{proof} 
      If $\phi'$ has a winning strategy then the same strategy is also winning 
      for $\phi$; having $\phi'$ true therefore implies that $\phi$ is also true.
      
      On the other hand suppose that $\phi$ has
      a winning strategy $s^1$. That $a \in D_{x_i}$ is shallow-fixable for $x_i$
      means that we have:
      \[
      \forall t \in \out. ~\exists t' \in \out. ~
      (t|_{X_{i-1}} = t'|_{X_{i-1}} \wedge t'_{x_i} = a)
      \]
      
      The proof is similar to the one already detailed for Prop. 
      \ref{prop:correctness-of-removability}: 
      we show that if $s^1$ has a scenario $t \in \sce(s^1)$
      such that $t_{x_i} \not= a$, then we can ``correct'' this and exhibit another winning 
      strategy $s$ whose scenarios are the same as those of $s^1$ except 
      that all scenarios $\lambda$ such that $\lambda|_{X_{i-1}} = t|_{X_{i-1}}$
      have been replaced by tuples $t'$ with $t'_{x_i} = a$.
      More precisely, every scenario $t' \in \sce(s)$ will satisfy:

      \begin{itemize}
      
      \item If $t'|_{X_{i-1}} \not= t|_{X_{i-1}}$ then $t' \in \sce(s^1)$.
      
      \item If $t'|_{X_{i-1}} = t|_{X_{i-1}}$ then $t'_{x_i} = a$.
      
      \end{itemize}
      
      This will prove the result: in showing how to construct $s$ we show that,
      starting from any winning strategy $s^1$ containing a number $n > 0$ of
      ``incorrect'' scenarios $t'$ with $t'_{x_i} \not= a$, we can always exhibit a winning
      strategy with at most $n-1$ such scenarios. This shows that there exists a 
      winning strategy in which no tuple $t'$ is such that $t'_{x_i} \not= a$.
      
      Let us now see how to construct $s$ starting from $s^1$. The outcome $t \in \sce(s^1)$
      needs to be replaced. Using the shallow 
      fixability property, we know that there exists another outcome $\theta \in \out$ such that
      $\theta|_{X_{i-1}} = t|_{X_{i-1}} \wedge \theta_{x_i} = a$. This outcome belongs to
      at least one winning strategy. We choose one of these strategies, which we
      call $s^2$. To define the new strategy $s$ we must define the functions $s_{x_j}$, for
      each $x_j \in E$. These functions are defined as follows:
      
      \begin{itemize}
      
      \item if $j < i$ then $s_{x_j}$ is defined as $s^1_{x_j}$ (\eg\ we follow
      the strategy $s^1$ for the first variables, until variable $x_i$, excluded); 

      \item for the following variables, \ie\ when $j \geq i$, we define the value of 
      $s_{x_j}(\tau)$, for each $\tau \in \prod_{y \in A_{j-1}}D_y$, as follows:

        \begin{itemize}
        
        \item if $\tau|_{X_{i-1}} = t|_{X_{i-1}}$, then 
        $s_{x_j}(\tau) = s^2_{x_j}(\tau)$;
        
        \item if $\tau|_{X_{i-1}} \not= t|_{X_{i-1}}$, then 
        $s_{x_j}(\tau) = s^1_{x_j}(\tau)$;
        
        \end{itemize}
      
      \end{itemize}

      The proof is completed by checking that every scenario $t' \in \sce(s)$
      satisfies the two desired properties:

      \begin{itemize}
      
      \item If $t'|_{X_{i-1}} \not= t|_{X_{i-1}}$ then $t' \in \sce(s^1)$, because, for each
      $x_j \in E$, we have $t'_{x_j} = s_{x_j}(t'|_{A_{j-1}}) = s^1_{x_j}(t'|_{A_{j-1}})$ 
      in this case.
      
      \item If $t'|_{X_{i-1}} = t|_{X_{i-1}}$ then $t'_{x_i} = a$, because 
      $t'_{x_i} = s_{x_i}(t'|_{A_{i-1}}) = s^2_{x_i}(t'|_{A_{i-1}}) 
      = s^2_{x_i}(t|_{A_{i-1}})  = s^2_{x_i}(\theta|_{A_{i-1}}) = \theta_{x_i} = a$.
      
      \end{itemize}
      
      Furthermore, every $t' \in \sce(s)$ with $t'|_{X_{i-1}} = t|_{X_{i-1}}$
      belongs to $\sce(s^2)$, and $s$ is therefore a winning strategy:  
      $\sce(s) \subseteq (\sce(s^1) \cup \sce(s^2)) \subseteq \out$.
    \end{proof}

  \begin{proposition}
    Let $\phi = \langle X, Q, D, C\rangle$ be a QCSP in which value $a
    \in D_{x_i}$ is dual-shallow-removable for a universal variable $x_i$, 
    and let $\phi'$ denote the
    same QCSP in which value $a$ is effectively removed (\ie\ $\phi' =
    \langle X, Q, D', C\rangle$ where $D'_{x_i} = D_{x_i} \setminus
    \{a\}$ and $D'_{x_j} = D_{x_j}, \forall j \not= i$). Then $\phi$ is true
    iff $\phi'$ is true.
  \end{proposition}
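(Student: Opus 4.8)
The plan is to reduce this statement to its existential counterpart, Proposition~\ref{prop:correctness-of-removability}, by passing to the negation of the QCSP. The guiding idea is exactly the one motivating the \emph{dual} properties: reasoning on a universal variable of $\phi$ is the same as reasoning on the corresponding existential variable of $\neg\phi$, the formula that captures the winning strategies of the opponent. So the first thing I would do is fix a precise notion of $\neg\phi$: the QCSP obtained from $\phi$ by flipping every quantifier ($\forall \leftrightarrow \exists$) and replacing the constraint set by a single constraint denoting the complement of $\sol^\phi$. With this definition a standard semantic duality gives ``$\phi$ is true iff $\neg\phi$ is false'', which I would record first (it follows from Proposition~\ref{prop-qcsp-true-iff-win} together with the observation that a winning strategy for the opponent in $\neg\phi$ witnesses the falsity of $\phi$). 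Crucially, because $x_i$ is universal in $\phi$, it is existential in $\neg\phi$.

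Next I would unfold the hypothesis. By definition, $a$ being \emph{dual}-shallow-removable for the universal variable $x_i$ in $\phi$ means precisely that $a$ is shallow-removable for the \emph{existential} variable $x_i$ in $\neg\phi$. This is the point at which Proposition~\ref{prop:correctness-of-removability} becomes applicable: applied to $\neg\phi$, it yields that $\neg\phi$ is true iff $(\neg\phi)'$ is true, where $(\neg\phi)'$ denotes $\neg\phi$ with $a$ deleted from the domain of $x_i$.

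The remaining ingredient is a commutation observation: deleting $a$ from the domain of $x_i$ acts on the domains only, whereas negation acts on the quantifiers and on the matrix, so the two operations commute and $(\neg\phi)' = \neg(\phi')$. Granting this, I would chain the equivalences: $\phi$ is true iff $\neg\phi$ is false; by Proposition~\ref{prop:correctness-of-removability} (in contrapositive form) $\neg\phi$ is false iff $(\neg\phi)'$ is false; and by the commutation $(\neg\phi)'$ is false iff $\neg(\phi')$ is false iff $\phi'$ is true. Composing these gives ``$\phi$ is true iff $\phi'$ is true'', as desired.

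I expect the only delicate point to be the bookkeeping around negation: one must check that the chosen encoding of $\neg\phi$ is a legitimate QCSP to which Proposition~\ref{prop:correctness-of-removability} applies, and that complementing the matrix interacts correctly with the interval/list domains so that $(\neg\phi)' = \neg(\phi')$ holds on the nose. Everything else is a routine transcription of the existential result through the duality, so no new strategy-surgery argument (of the kind used in the proof of Proposition~\ref{prop:correctness-of-removability}) is needed here.
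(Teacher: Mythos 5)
Your proposal is correct and follows essentially the same route as the paper's own proof, which likewise reduces the statement to Proposition~\ref{prop:correctness-of-removability} applied to the negated QCSP $\neg\phi$ (where the universal $x_i$ becomes existential and dual-shallow-removability becomes shallow-removability) and then chains the equivalences $\phi$ true iff $\neg\phi$ false iff $\neg(\phi')$ false iff $\phi'$ true. Your treatment is merely more explicit about the bookkeeping the paper leaves implicit, namely the precise definition of $\neg\phi$ and the commutation $(\neg\phi)' = \neg(\phi')$.
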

  
  \begin{proof}
    Direct consequence of Prop. \ref{prop:correctness-of-removability}: 
    the hypothesis is that the dual-shallow-removability holds, \ie\ 
    $a$ is removable for $x_i$ \wrt\ the negated QCSP $\neg \phi$; 
    then $\phi$ is true iff $\neg \phi$ is false iff $\neg \phi'$ is
    false iff $\phi'$ is true.
  \end{proof}

  \begin{proposition}
    Let $\phi = \langle X, Q, D, C\rangle$ be a QCSP in which value $a
    \in D_{x_i}$ is dual-shallow-fixable for an universal variable $x_i$, 
    and let $\phi'$ denote the same
    QCSP in which value $a$ is effectively fixed (\ie\ $\phi' =
    \langle X, Q, D', C\rangle$ where $D'_{x_i} = \{a\}$ and $D'_{x_j} =
    D_{x_j}, \forall j \not= i$). Then $\phi$ is true iff $\phi'$ is
    true.
  \end{proposition}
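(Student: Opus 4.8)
The plan is to reduce this statement to Proposition \ref{prop:correctness-of-fix} by passing to the negation of the QCSP, exactly in the spirit of the proof just given for dual-shallow-removability (Proposition \ref{prop:correctness-of-dual-removability}). The guiding observation is that everything about universal variables can be handled by ``dualizing'': a universal variable of $\phi$ becomes an existential variable of $\neg\phi$, and the game-theoretic machinery we built for existential variables then applies directly.

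First I would unfold the hypothesis: by definition, $a$ being \emph{dual}-shallow-fixable for $x_i$ in $\phi$ means precisely that $a$ is shallow-fixable for $x_i$ in the negated QCSP $\neg\phi$. Since $x_i$ is universal in $\phi$, the negation turns its quantifier into an existential one, so $x_i$ is an \emph{existential} variable of $\neg\phi$. This is exactly the situation required by Proposition \ref{prop:correctness-of-fix}, which I would apply to $\neg\phi$: fixing the domain of $x_i$ to $\{a\}$ preserves truth, giving that $\neg\phi$ is true iff $(\neg\phi)'$ is true, where $(\neg\phi)'$ denotes $\neg\phi$ with $D_{x_i}$ replaced by $\{a\}$.

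The remaining step, and the only place where any care is needed, is to check that restricting the domain of $x_i$ to $\{a\}$ commutes with negating the QCSP, i.e. that $(\neg\phi)' = \neg(\phi')$. This holds because the two operations act on disjoint components of the tuple $\langle X, Q, D, C\rangle$: domain restriction only edits $D_{x_i}$ and leaves the quantifier prefix, the variable ordering, and the constraints untouched, whereas negation only flips the quantifiers in $Q$ and complements the constraint relation. Granting this identity, the chain of equivalences closes as in the removable case: $\phi$ is true iff $\neg\phi$ is false iff $(\neg\phi)'$ is false iff $\neg(\phi')$ is false iff $\phi'$ is true.

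I expect no genuine obstacle here; the proof is essentially bookkeeping once the dualization is set up. The only subtlety worth stating explicitly is the commutation $(\neg\phi)' = \neg(\phi')$, which requires nothing more than having the definition of the negation of a QCSP precise enough that the domain-restriction of a single variable is manifestly independent of the quantifier flipping and constraint complementation.
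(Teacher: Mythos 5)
Your proof is correct and takes essentially the same route as the paper's: it reduces the statement to Prop.~\ref{prop:correctness-of-fix} applied to the negated QCSP $\neg\phi$, in which the universal variable $x_i$ becomes existential, and then chains the equivalences $\phi$ true iff $\neg\phi$ false iff $(\neg\phi)'$ false iff $\phi'$ true. The only difference is presentational: you make explicit the commutation $(\neg\phi)' = \neg(\phi')$ between domain restriction and negation, which the paper's proof silently assumes by writing $\neg\phi'$.
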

  
  \begin{proof}
    Direct consequence of Prop. \ref{prop:correctness-of-fix}: 
    the hypothesis is that the dual-shallow-fixability holds, \ie\ 
    $a$ is fixable for $x_i$ \wrt\ the negated QCSP $\neg \phi$; 
    then $\phi$ is true iff $\neg \phi$ is false iff $\neg \phi'$ is
    false iff $\phi'$ is true.
  \end{proof}

  \begin{proposition}
    Let $\phi = \langle X, Q, D, C\rangle$ be a QCSP.
    Given a tuple $t \in \prod_{x \in X} D_x$,
    we denote by $B$ the conjunction of constraints:
    \begin{equation}
    \bigwedge_{x_i \in E}
    \left(
      \left(\bigwedge_{y \in A_{i-1}} y = t_y\right) 
      \rightarrow (x_i = t_{x_i})
    \right) 
    \label{eq-additional-constraints-for-outcome-bis}
    \end{equation}
    The QCSP $\psi = \langle X, Q, D, B \cup C\rangle$ is true
    iff $t \in \out^\phi$.
  \end{proposition}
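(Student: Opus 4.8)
The plan is to reduce the claim to Proposition~\ref{prop-qcsp-true-iff-win} together with the simple observation that the scenarios of a strategy depend only on $X$, $Q$ and $D$, and not on the constraints. First I would record that $\phi$ and $\psi$ have the same variables, quantifiers and domains, so they admit exactly the same strategies, and for every strategy $s$ the two scenario sets coincide; write $\sce(s)$ for this common set. It follows that $s$ is winning for $\phi$ iff every $\tau \in \sce(s)$ satisfies $C$, whereas $s$ is winning for $\psi$ iff every $\tau \in \sce(s)$ satisfies both $C$ and the block $B$ of \eqref{eq-additional-constraints-for-outcome-bis}.

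The heart of the proof is the following equivalence, which I would isolate as the key step: for any strategy $s$, every scenario of $s$ satisfies $B$ if and only if $t \in \sce(s)$. For the ``if'' direction, suppose $t \in \sce(s)$ and take any $\tau \in \sce(s)$; to verify the conjunct of $B$ attached to an existential $x_i$, assume $\tau|_{A_{i-1}} = t|_{A_{i-1}}$. By the definition of a scenario, $\tau_{x_i} = s_{x_i}(\tau|_{A_{i-1}}) = s_{x_i}(t|_{A_{i-1}}) = t_{x_i}$, where the last equality uses $t \in \sce(s)$; hence $\tau$ satisfies that conjunct, and so all of $B$. For the ``only if'' direction, assume every scenario of $s$ satisfies $B$, and let $\tau$ be the unique scenario of $s$ whose universal part agrees with $t$, i.e.\ $\tau|_A = t|_A$ (such a $\tau$ exists because a strategy extends any assignment of the universal variables to a scenario, and $t|_A$ is a legal assignment since $t \in \prod_{x \in X} D_x$). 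Since $\tau|_{A_{i-1}} = t|_{A_{i-1}}$ for every $x_i \in E$, the antecedent of each conjunct of $B$ holds, so $B$ forces $\tau_{x_i} = t_{x_i}$; combined with $\tau|_A = t|_A$ this gives $\tau = t$, whence $t = \tau \in \sce(s)$.

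Granting this equivalence, the result follows by chaining. By Proposition~\ref{prop-qcsp-true-iff-win}, $\psi$ is true iff some strategy $s$ is winning for $\psi$, i.e.\ iff some $s$ has all its scenarios satisfying $C$ and $B$. By the key step this is the same as requiring that all scenarios of $s$ satisfy $C$ (i.e.\ $s$ is winning for $\phi$) \emph{and} that $t \in \sce(s)$. The existence of such an $s$ is exactly the statement that $t$ is a scenario of some winning strategy of $\phi$, which is the definition of $t \in \out^\phi$. I expect the main obstacle to be stating the key equivalence cleanly and, in its ``only if'' direction, being careful that fixing the universal coordinates to $t|_A$ together with the constraints $B$ pins the entire scenario down to $t$ itself; the remainder is routine bookkeeping with the definitions of scenario, winning strategy and outcome.
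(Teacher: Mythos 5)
Your proof is correct and follows essentially the same route as the paper's: the ``only if'' direction of your key equivalence is the paper's argument that a winning strategy of $\psi$ must have $t$ as its scenario agreeing with $t$ on the universal variables, and the ``if'' direction is the paper's verification that every scenario of a strategy containing $t$ satisfies $B$ (your uniform treatment via the scenario property even streamlines the paper's case analysis and replaces its induction by a direct argument). The only difference is organizational: you isolate the equivalence as a statement about arbitrary strategies and then chain it with Proposition \ref{prop-qcsp-true-iff-win}, but the underlying ideas are identical.
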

    
    \begin{proof}
      Assume that $\psi$ is true. Then it has a non empty set of winning strategies; 
      let $s$ be one of them, picked arbitrarily. Let $t'$ be the scenario of $s$ 
      that is such that $t'|_A = t|_A$, \ie\ that assigns the same values
      as $t$ on the universal variables. Because $s$ is a winning strategy, $t'$ 
      is a solution, and it satisfies the constraint given 
      by \eqref{eq-additional-constraints-for-outcome-bis}. 
      A straightforward induction
      on the indices of the existential variables shows that $t$ is indeed 
      identical to $t'$, which implies $t \in \out^\phi$. 

      Assume now that $t \in \out^\phi$, \ie\ there exists a winning strategy $s$
      for $\phi$ such that $t \in \sce(s)$. Every scenario $t' \in \sce(s)$ satisfies
      $C$. Let us prove by case that each $t' \in \sce(s)$ also satisfies $B$. 
      If we consider the scenario $t'$ which is such that $t'|_A = t|_A$, then 
      this scenario is indeed $t$ (a strategy defines a unique outcome for each
      assignment of the universal variables), which satisfies $B$. 
      On the other hand, $B$ is satisfied also
      if we consider
      any tuple $t'$ which is such that $t'|_A \not= t|_A$. 
      To see this, 
      let $j$ be the lowest index such that $t'|_{A_{j-1}} \not= t|_{A_{j-1}}$.
      Constraints of $B$ with $i < j$ are satisfied because $t'|_{A_{i-1}} = t|_{A_{i-1}}$;
      the others because 
      %
      %then 
      the left-hand side
      of the implications $\left(\bigwedge_{y \in A_{i-1}} t'_y = t_y\right) 
      \rightarrow (t'_{x_i} = t_{x_i})$ are false. %, and $B$ is therefore verified.
      Every scenario of $s$ therefore satisfies $B \wedge C$, in other words this
      strategy is winning for $\psi$.
    \end{proof}

  \begin{proposition}
    Given a QCSP $\phi = \langle X, Q, D, C \rangle$, the problems of
    deciding whether:
    
    \begin{itemize}
    
    \item value $a \in D_{x_i}$ is \deepfixable, \deepremovable,
    \inconsistent, \implied\ for variable $x_i \in X$,
    
    \item value $a \in D_{x_i}$ is \deepsubstitutable\ to or
    \deepinterchangeable\ with $b \in D_{x_i}$ for variable $x_i \in X$,
    
    \item variable $x_i \in X$ is \dependent\ on variables $V \subseteq X$,
    or is \deepirrelevant,
    
    \end{itemize}

    \noindent
    are PSPACE-complete.
  \end{proposition}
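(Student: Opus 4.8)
The statement has two halves---membership in PSPACE and PSPACE-hardness---and the plan is to treat them separately, reusing the outcome-recognition machinery of Prop.~\ref{prop-trick-for-pspace-upper-bound} for the upper bound and a pair of uniform gadget reductions for the lower bound. For membership I would observe that each property in the list is a first-order statement over $\out^\phi$ with only a constant number of tuple quantifiers, each ranging over the finite product $\prod_{x\in X}D_x$, whose matrix is a Boolean combination of value (in)equalities and of membership tests of the form $t\in\out$ or $t[x_i:=c]\in\out$. By Prop.~\ref{prop-trick-for-pspace-upper-bound} every such membership test is decidable in polynomial space (it is the truth of the auxiliary QCSP~$\psi$), and since PSPACE is closed under complement and under polynomially bounded quantification, one simply enumerates the relevant tuples lexicographically, reusing space across iterations, and evaluates the matrix using the membership test as a subroutine. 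Concretely, $\deepfixable(x_i,a)$ loops over all $t\in\prod_{x\in X}D_x$, keeps those with $t\in\out$, and checks $t[x_i:=a]\in\out$; $\dependent$ uses two nested loops over tuples; $\deepremovable$ and $\deepirrelevant$ add one polynomial loop over $D_{x_i}$. All run in polynomial space.

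For hardness I would reduce from deciding whether a QCSP is \emph{false}, which is PSPACE-complete because QCSP truth is PSPACE-complete and PSPACE is closed under complement. Given $\phi=Q_{x_1}x_1\dots Q_{x_n}x_n.\,F$ with matrix $F$, the key device is a ``forced-value'' gadget: append a fresh innermost existential variable $z$ with domain $\{a,b\}$ and conjoin the literal $z=a$ to $F$, yielding $\psi$. A short analysis via Prop.~\ref{prop-qcsp-true-iff-win} shows that any winning strategy of $\psi$ must set $z$ to $a$ on every branch and must restrict to a winning strategy of $\phi$ on the original variables; hence $\out^\psi$ is exactly the set of outcomes of $\phi$ each extended by $z=a$, and in particular $\out^\psi\neq\varnothing$ iff $\phi$ is true. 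The reduction then queries the \emph{forbidden} value $b$: since no outcome ever carries $z=b$, each of $\deepfixable(z,b)$, $\deepsubstitutable(z,a,b)$, $\deepremovable(z,a)$, $\deepinterchangeable(z,a,b)$ and $\deepirrelevant(z)$ fails whenever $\out^\psi\neq\varnothing$ and holds vacuously otherwise, and symmetrically $\inconsistent(z,a)$ and $\implied(z,b)$ hold exactly in the vacuous case; so all seven of these properties hold iff $\phi$ is false. The remaining property, $\dependent$, is handled by the variant in which $z$ is left unconstrained with domain $\{0,1\}$: then $\out^\psi$ contains every outcome of $\phi$ extended by \emph{both} $z=0$ and $z=1$, and $\dependent(\varnothing,z)$---the requirement that all outcomes agree on $z$---again holds iff $\phi$ is false. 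Each construction is polynomial and expressible in each of the five encodings of Section~\ref{subsec-encoding-issues} (one appends a variable and a literal-style constraint), so hardness is uniform; combined with membership this gives PSPACE-completeness.

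The step I expect to be the main obstacle is the gadget design for the ``flexibility'' properties, \deepfixable\ and \deepirrelevant\ in particular. The tempting idea of making $z$ a genuinely free variable floods $\out^\psi$ with spurious tuples and makes these properties hold trivially, independently of~$\phi$; this is exactly the kind of trap the introduction warns about. The forced-value gadget, together with the trick of querying the \emph{unreachable} value~$b$, is what circumvents the flooding, turning the vacuous-truth phenomenon (empty $\out$ precisely when $\phi$ is false) into a faithful encoding. I would therefore concentrate the care on pinning down the exact shape of $\out^\psi$ for the two gadgets via Prop.~\ref{prop-qcsp-true-iff-win}; once that is established, each of the eight equivalences is a one-line verification.
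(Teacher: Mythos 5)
Your proposal is correct and follows essentially the same route as the paper: membership via the outcome-recognition test of Prop.~\ref{prop-trick-for-pspace-upper-bound} inside a lexicographic enumeration of tuples, and hardness by reducing QCSP falsity through a fresh existential variable whose value is forced, so that each property holds exactly when $\out$ is empty. The only cosmetic difference is that your single forced-value gadget (domain $\{a,b\}$ with constraint $z=a$) covers the seven properties for which the paper uses two separate gadgets (a singleton-domain variable for inconsistency/removability, and a $\{0,1\}$ variable with constraint $x=0$ for the rest), while your free-variable gadget for dependence matches the paper's.
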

    
    \begin{proof}(membership in PSPACE)
      The membership in PSPACE relies essentially on 
      Prop. \ref{prop-trick-for-pspace-upper-bound} and its immediate consequence,
      mentioned in the main text, that testing whether $t \in \out$ can be done in
      polynomial space. All properties hold iff some statement is verified for all 
      $t \in \out$, so the idea is then to loop over each tuple $t$, 
      determine whether it belongs to $\out$ and, if this is the case,
      check whether it satisfies the statement. 
      For inconsistency we check whether $t_{x_i} \not= a$. We return false
      as soon as we have a tuple $t \in \out$ for which this is not the case.
      For implication we test whether $t_{x_i} = a$ and similarly return false if
      one tuple does not verify that. The same idea works for all properties:
      for fixability we test whether $t[x_i := a] \in \out$;
      for substitutability we check whether       
      $(t_{x_i} = a) \implication (t[x_i:=b] \in \out)$;
      for removability we check whether 
      $(t_{x_i} = a)  \implication (\exists b \neq a.
          ~ t[x_i := b] \in \out)$;
      for determinacy we check whether 
      $\forall b \not= t_{x_i}. ~t[x_i := b] \not\in \out$;
      for irrelevance we check whether 
      $\forall b \in D_{x_i}. ~ t[x_i := b] \in \out$.
      For dependency we have to do a double loop in lexicographical order, check
      whether both tuples $t, t'$ belong to $\out$ and, if, so, check whether 
      $(\forall x_j \in V. ~ t_{x_j} = t'_{x_j})
      \implication (t_{x_i} = t'_{x_i})$. 
      In any case, at the end of the loop, we return true if no counter-example
      to the property has been found.
      It is clear that these algorithms use polynomial space and return true
      iff the considered property holds.
    \end{proof}
    
    \begin{proof}(hardness for PSPACE)
      For all properties we reduce the problem of deciding whether a 
      QCSP $\phi = \langle X, Q, D, C\rangle$ is false 
      to the problem of testing whether the considered property holds. 
      
      The reductions work as follows. For \textbf{inconsistency} we simply construct the
      QCSP $\psi = \langle X \cup \{x\}, Q', D', C\}\rangle$, where:
      \begin{itemize}
      
      \item $x$ is a fresh variable, \ie\ $x \not\in X$;
      
      \item $Q'$ is similar to $Q$ except that the new variable $x$ 
      is quantified existentially, \ie\
      $Q'_{y} = Q_{y}, ~ \forall y \not= x$ and $Q'_{x} = \exists$;
      
      \item $D'$ is similar to $D$ except that the domain of the new variable 
      $x$ is a singleton, \ie\
      $D'_{y} = D_{y}, ~ \forall y \not= x$ and $D'_{x} = \{a\}$ for some arbitrary $a$. 
      
      \end{itemize}
      
      It is straightforward that $\phi$ has a winning strategy iff $\psi$ also does. 
      Let us verify that $\phi$ is false iff value $a$ is inconsistent for
      variable $x$ in $\psi$: if $\phi$ is false then $\out^\phi$ is empty, and so
      is $\out^\psi$, and then it is true that
      $\forall t \in \out^\psi. ~ t_{x_i} \not= a$; if $a$ is inconsistent for 
      $x$ in $\psi$ then $\forall t \in \out^\psi. ~ t_{x_i} \not= a$, but no outcome can
      assign a value different from $a$ to variable $x_i$, hence $\out^\psi$ is empty
      and $\out^\phi$ is also empty.
      
      The same reduction works directly for \textbf{removability}: $\phi$ is false iff
      $a$ is removable from $x$ in $\psi$. 
      
      For \textbf{fixability}, \textbf{implication}, \textbf{substitutability},
      \textbf{interchangeability} and \textbf{irrelevance}, 
      the reduction is only slightly different; now we construct the QCSP:
      \[
      \psi = \langle X \cup \{x\}, Q', D', C \cup \{x = 0\}\}\rangle
      \] 
      in which the new variable $x$ is existential and ranges over $\{0, 1\}$. 
      Note that the constraint $x = 0$ can be expressed directly in each and every 
      of our 5 formalisms. We can check that $\phi$ is false iff:
      
      \begin{itemize}
      
      \item variable $x$ is fixable to value 1 in $\psi$: 
      if $\phi$ is false then $\out^\phi$ is empty and so is $\out^\psi$ and we trivially
      have $\forall t \in \out^\psi. ~t[x := 1] \in \out^\psi$; if 
      $x$ is fixable to 1 in $\psi$ then $\forall t \in \out^\psi. ~t[x := 1] \in \out^\psi$,
      but there is no $t$ is such that $t[x := 1] \in \out^\psi$ and $\out^\psi$ and
      $\out^\phi$ are empty. 
      
      \item value $1$ is implied for variable $x$ in $\psi$: similarly to
      fixability we have $\out^\phi = \emptyset$ iff 
      $\forall t \in \out^\psi. ~ t_{x} = 1$.
      
      \item value 0 is substitutable to value 1 for variable $x$ in $\psi$
      ($\out^\phi = \emptyset$ holds iff $\forall t \in \out^\psi. ~ 
      (t_{x} = 0) \implication (t[x:=1] \in \out^\psi)$).
      
      \item value 0 is interchangeable with value 1 for variable $x$ in $\psi$:
      ($\out^\phi = \emptyset$ holds iff $\forall t \in \out^\psi. ~ 
      (t_{x} = 0) \leftrightarrow (t[x:=1] \in \out^\psi)$).
      
      \item variable $x$ is irrelevant in $\psi$: if
      $\forall t \in \out^\psi. ~\forall b \in \{0,1\}. ~ t[x := b] \in \out^\psi$,
      then any $t \in \out^\psi$ is in particular such that $t[x := 1] \in \out^\psi$
      so no such $t$ exists and $\out^\psi = \emptyset$. (The other direction is
      trivial.)
      
      \end{itemize}
      
      For \textbf{determinacy} and \textbf{dependence}, 
      the reduction consists in constructing the QCSP
      $\psi = \langle X \cup \{x\}, Q', D', C\}\rangle$,
      in which the new variable $x$ is existential and ranges over $\{0, 1\}$. 
      
      We check that $\phi$ is false if $x$ is determined in $\psi$. Assume that
      $\forall t \in \out^\psi. ~\forall b \not= t_{x}. ~t[x := b] \not\in \out^\psi$,
      and let us consider an arbitrary $t \in \out^\psi$. Its value on $x$ is either 0 or
      1 (say 0). Then it is such that $t[x := 1] \not\in \out^\psi$. Because values 0 and 1 
      play a symmetric role, this cannot be, and $\out^\phi = \emptyset$.
      (The other implication is trivial.)
      
      We last check that $\phi$ is false if variable $x$ is dependent on the set of
      variables $X$ in $\psi$. Assume that
      $\forall t, t' \in \out. ~
         (t|_X = t'|_X) \implication (t_{x} = t'_{x})$.
      Let us consider an arbitrary tuple $t \in \out^\psi$ with (say) $t_x = 0$.
      If we consider the tuple $t' = t[x := 1]$, then this tuple is
      such that $t'|_X = t|_X$, and therefore does not belong to $\out^\psi$
      (if it did, then we'd have $t'_x = t_x$). 
      Because values 0 and 1 play a symmetric role, this cannot be, and  
      $\out^\phi = \emptyset$.
      (The other implication is trivial.)
      
      In all our reductions, we can start from any of the 5 formalisms listed in Sec. 
      \ref{subsec-encoding-issues}, and the resulting QCSP is expressed in the same formalism.
      It is well-known that deciding the truth of a QCSP in any of these formalisms is
      PSPACE-complete and the hardness result therefore holds in all 5 cases.
    \end{proof}

  \begin{proposition}
  
    Given a QCSP $\phi = \langle X, Q, D, C \rangle$, the problems of
    deciding whether:
    
    \begin{itemize}
    \item value $a \in D_{x_i}$ is \shallowfixable, \shallowremovable\ for
    variable $x_i \in X$,

    \item value $a \in D_{x_i}$ is \shallowsubstitutable\ to or
    \shallowinterchangeable\ with $b \in D_{x_i}$ for variable $x_i \in X$,

    \item variable $x_i \in X$ is \shallowirrelevant,
    
    \end{itemize}

    \noindent
    are PSPACE-complete.
  \end{proposition}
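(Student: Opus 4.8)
The plan is to mirror the two-part structure of the proof of Prop.~\ref{prop:dproperties-PSPACEc}: first establish membership in PSPACE, then PSPACE-hardness. For membership, I would rely on the corollary of Prop.~\ref{prop-trick-for-pspace-upper-bound} that testing whether a given tuple belongs to $\out$ is feasible in polynomial space. Each shallow property has the shape ``for all outcomes $t$ (and possibly for all values $b$, or under a premise on $t_{x_i}$), there exists an outcome $t'$ agreeing with $t$ on $X_{i-1}$ and taking a prescribed value at $x_i$''. I would therefore decide it by a doubly-nested loop that enumerates all candidate tuples $t$ in lexicographic order and, for each $t$ that is an outcome and satisfies the premise, enumerates all candidate tuples $t'$ in lexicographic order searching for a witness; the algorithm rejects as soon as some $t$ admits no witness $t'$. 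Only the two current tuples and a polynomial-space outcome-membership subroutine need to be stored, so the whole procedure runs in polynomial space. The extra existential quantifier over $t'$, which distinguishes the shallow from the deep properties, costs only an additional nested loop and does not break the PSPACE bound.

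For hardness, I would reuse the reductions given in the proof of Prop.~\ref{prop:dproperties-PSPACEc}, which reduce the (PSPACE-complete) problem of deciding that a QCSP $\phi$ is false to the question of whether a property holds for a fresh existential variable $x$ in an augmented QCSP $\psi$. The key observation that lets these reductions carry over unchanged is that the fresh variable $x$ is appended \emph{last} in the variable ordering of $\psi$. Consequently, for $x$ the index set $X_{i-1}$ is exactly the original set $X$, so the conjunct $t|_{X_{i-1}} = t'|_{X_{i-1}}$ in each shallow definition forces the existential witness $t'$ to agree with $t$ on every variable except $x$ itself; since $t'$ is then pinned down to the single tuple obtained by instantiating $x$, the shallow property on $\psi$ \emph{coincides} with the corresponding deep property on $\psi$.

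It remains to invoke what the deep reductions already prove, namely that in each case the property holds for $x$ in $\psi$ if and only if $\out^\psi$ is empty, which in turn holds if and only if $\phi$ is false. For instance, for shallow fixability the augmented QCSP forces $x=0$ over domain $\{0,1\}$ and one checks shallow fixability to $1$: no outcome can set $x=1$, so a witness $t'$ exists for some outcome $t$ only if no outcome exists at all; the singleton-domain construction settles shallow removability, and the $x=0$ construction settles shallow substitutability, interchangeability and irrelevance by the same emptiness argument. Since every reduction preserves the encoding formalism and QCSP truth is PSPACE-complete in each of the five formalisms, hardness follows for all of them.

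I expect the only genuinely delicate point to be the collapse argument in the second paragraph: one must verify carefully that placing $x$ last makes $X_{i-1}=X$ and hence uniquely determines the witness tuple, so that the biconditional established for the deep case transfers \emph{in both directions} to the shallow case. Everything else is routine: membership is a straightforward adaptation of the deep membership proof, and the emptiness checks are immediate once the collapse is in place.
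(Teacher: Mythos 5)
Your proposal is correct and follows essentially the same two-part route as the paper: membership via the polynomial-space outcome-membership test of Prop.~\ref{prop-trick-for-pspace-upper-bound} combined with (nested) lexicographic enumeration of candidate tuples, and hardness by reusing the deep-property reductions of Prop.~\ref{prop:dproperties-PSPACEc} with the fresh existential variable placed at the tail of the ordering, where the shallow and deep properties coincide. Your explicit justification of that coincidence --- that $X_{i-1}$ then covers all other variables, pinning the witness $t'$ down to $t[x:=\cdot]$ --- is exactly the fact the paper asserts, just spelled out in more detail.
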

  
    \begin{proof}
      For membership in PSPACE the algorithm is similar to 
      Prop. \ref{prop:dproperties-PSPACEc}: we use the fact that
      testing whether $t \in \out$ can be done in polynomial space by
      Prop. \ref{prop-trick-for-pspace-upper-bound}. To check 
      whether a property of the form $\forall t \in \out. ~\gamma$ is true,
      we loop over all tuples in lexicographical order, test whether
      the current tuple is an outcome and, if so, verify that it satisfies $\gamma$.
      For properties of the form $\exists t \in \out. ~\gamma$, we do a similar loop 
      and return true iff one of the outcomes met during the loop satisfied $\gamma$. 
      This works in polynomial space for all properties.
      
      The hardness is a direct consequence of the fact that shallow properties
      are equivalent to the deep ones in the particular case when the variable on which the
      property is asserted is at the tail of the linearly ordered set of variables.
      In all the reductions used in the proof of Prop. \ref{prop:dproperties-PSPACEc},
      note that we introduce a variable that can be introduced \emph{at an arbitrary place}. 
      The reductions can therefore be directly adapted to the shallow definitions. 
      
      For instance, in the case of fixability, the reduction consisted, starting
      from a QCSP $\phi = \langle X, Q, D, C\rangle$, to construct the QCSP 
      $\psi = \langle X \cup \{x\}, Q', D', C \cup \{x=0\}\rangle$, with $D'_{x} = \{0,1\}$. 
      We consider the same reduction and impose that $x$ be placed at the end of 
      the ordered set $X$. Then $x$ is shallow-fixable to 1 iff it is deep-fixable to 1. 
      We have proved that $\phi$ is false if variable $x$ is deep-fixable to 1 in $\psi$,
      which is true if it is shallow-fixable to 1 in $\psi$. Similarly in all cases of 
      Prop. \ref{prop:dproperties-PSPACEc} the reduction directly
      applies to shallow property as long as we impose that the new variable $x$ be
      put at the end of the quantifier prefix.
    \end{proof}

  \begin{proposition}
    Given a $\Sigma_k$QCSP $\phi = \langle X, Q, D, C \rangle$ encoded using
    Formalism (I), the problems of deciding whether:
    
    \begin{itemize}
    
    \item value $a \in D_{x_i}$ is \textsl{deep-fixable},
    \textsl{deep-removable}, \inconsistent, \implied\ for
    variable $x_i \in X$,
    
    \item value $a \in D_{x_i}$ is \textsl{deep-substitutable} to or
    \textsl{deep-interchangeable} with $b \in D_{x_i}$ for variable
    $x_i \in X$,
    
    \item variable $x_i \in X$ is \dependent\ on variables $V \subseteq X$,
    or is \textsl{deep-irrelevant},
    
    \end{itemize}

    \noindent
    are $\Pi_k^p$-hard and belong to $\Pi_{k+1}^p$. Moreover, for deep inconsistency, 
    implication, determinacy and dependence, the problems are more precisely 
    $\Pi_k^p$-complete.
  \end{proposition}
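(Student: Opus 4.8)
The plan is to treat membership and hardness separately, using the outcome-recognition trick of Prop.~\ref{prop-trick-for-pspace-upper-bound} to pin down the exact level of the hierarchy. The starting observation is that, for a $\Sigma_k$QCSP $\phi$ encoded in Formalism~(I), testing whether a given tuple $t$ belongs to $\out^\phi$ is a $\Sigma_k^p$ predicate: by Prop.~\ref{prop-trick-for-pspace-upper-bound} this amounts to deciding the truth of the QCSP $\psi = \langle X, Q, D, B \cup C\rangle$, whose quantifier prefix is identical to that of $\phi$ (only the circuit is enlarged by the polynomial-size constraints $B$), so that $\psi$ is again a $\Sigma_k$QCSP over Formalism~(I); deciding the truth of such a formula is the canonical $\Sigma_k^p$-complete problem. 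Dually, testing $t \notin \out^\phi$ is a $\Pi_k^p$ predicate.

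First I would settle the four properties for which a sharper bound is claimed. Inconsistency, implication, determinacy and dependence are each a universal quantification ($\forall t$, or $\forall t\,\forall t'$) of a matrix that, crucially, refers to $\out$ only through \emph{negative} membership tests $t \notin \out$ (together with polynomial-time comparisons of components): inconsistency is $\forall t.\,(t \notin \out \lor t_{x_i} \neq a)$, determinacy is $\forall t\,\forall b.\,(t \notin \out \lor b = t_{x_i} \lor t[x_i:=b] \notin \out)$, and implication and dependence have the same shape. Each matrix is thus a $\Pi_k^p$ predicate, and prepending the outer universal quantifiers merges them into the leading $\forall$-block, so the whole statement remains in $\Pi_k^p$.

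Next I would handle the remaining five properties. Deep fixability, removability, substitutability, interchangeability and irrelevance all refer to $\out$ through a \emph{positive} test $t[x_i:=a] \in \out$ (or $t[x_i:=b] \in \out$), which is $\Sigma_k^p$ rather than $\Pi_k^p$. Their matrices are therefore Boolean combinations of $\Sigma_k^p$ and $\Pi_k^p$ predicates (for removability one first absorbs an inner $\exists b$ into the $\Sigma_k^p$ test), hence computable in $\mathrm{P}^{\Sigma_k^p} = \Delta_{k+1}^p$; a universal quantification over such a matrix then lies in $\mathrm{coNP}^{\Sigma_k^p} = \Pi_{k+1}^p$. This positive-versus-negative distinction is exactly the main obstacle: it is the reason the bound cannot in general be lowered to $\Pi_k^p$ for these five, and it mirrors the remark already made in the main text that determinacy behaves differently from fixability precisely because its right-hand side carries an implicit negation.

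It remains to prove $\Pi_k^p$-hardness for all the properties. Here I would revisit the reductions of Prop.~\ref{prop:dproperties-PSPACEc}, which reduce ``$\phi$ is false'' to each property by adjoining one fresh existential variable $x$ (with a singleton domain, or ranging over $\{0,1\}$ together with a constraint $x=0$, or with no constraint). Placing $x$ in the leading existential block leaves the constructed QCSP a $\Sigma_k$QCSP in Formalism~(I), and the added constraints fold into the Boolean circuit; the correctness arguments are position-independent since they only use that every outcome is forced on $x$. Because deciding the falsity of a $\Sigma_k$QCSP is $\Pi_k^p$-complete, each of these reductions starts from a $\Pi_k^p$-complete problem and hence establishes $\Pi_k^p$-hardness. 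Combining hardness with the membership analysis closes the argument: inconsistency, implication, determinacy and dependence are $\Pi_k^p$-complete, while the remaining five are $\Pi_k^p$-hard and belong to $\Pi_{k+1}^p$.
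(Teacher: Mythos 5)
Your proposal is correct, and its hardness half is essentially identical to the paper's: both re-use the reductions of Proposition~\ref{prop:dproperties-PSPACEc}, observe that the fresh existential variable can be inserted into the leading block so that the $\Sigma_k$ alternation pattern (and Formalism~(I)) is preserved, and reduce from falsity of a $\Sigma_k$QCSP, which is $\Pi_k^p$-complete. Where you genuinely diverge is the membership half. The paper reasons about the \emph{negations} of the properties and builds explicit QBFs: the candidate outcome is existentially quantified as fresh variables $v_1,\dots,v_n$, the constraint $B$ of Proposition~\ref{prop-trick-for-pspace-upper-bound} is conjoined to the circuit, and when two tuples are involved the two quantifier prefixes are merged block by block over disjoint variable sets; this stays in $\Sigma_k$ form when both tuples are asserted to \emph{be} outcomes (non-inconsistency, non-implication, non-determinacy, non-dependence), but climbs to $\Sigma_{k+1}$ when one tuple must be asserted \emph{not} to be an outcome, since that assertion carries the dual prefix $\forall M_1.\ \exists M_2 \cdots \overline{Q_k} M_k$. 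You instead work with the properties directly and replace the explicit constructions by closure properties of the hierarchy: $t \in \out$ is a $\Sigma_k^p$ predicate by Proposition~\ref{prop-trick-for-pspace-upper-bound} (hence $t \notin \out$ is $\Pi_k^p$), $\Pi_k^p$ is closed under disjunction and polynomially bounded universal quantification (settling the four sharp properties), and a universal quantifier over a matrix mixing $\Sigma_k^p$ and $\Pi_k^p$ tests lands in $\mathrm{coNP}^{\Sigma_k^p} = \Pi_{k+1}^p$ (settling the other five). The two arguments are mirror images---your positive-versus-negative membership-test criterion is exactly what determines whether the paper's prefix merge crosses a level---but yours is shorter and avoids the renaming and prefix bookkeeping, at the price of invoking standard oracle and closure facts, whereas the paper's construction is self-contained and exhibits concrete QBF instances witnessing each bound.
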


    The use of formalism (1) means that $\phi$ is a Quantified Boolean Formula of the form:
    \[
    \phi :~~ 
    \exists M_1. ~ \forall M_2. \cdots Q_k M_k. ~ C
    \]
    where the $M_i$s are blocks of variables of alternating quantification, 
    $C$ is a Boolean circuit built on these variables,
    and the last block $M_k$ is quantified universally ($Q_k = \forall$) if 
    $k$ is even, and existentially ($Q_k = \exists$) if $k$ is odd.
    Consistently with previous notation, the 
    linearly ordered set $X = \{x_1 \dots x_n\}$ denotes the union of all variables 
    of the prefix, and the notations $E_j$, $A_j$, \etc, are defined as in Section 
    \ref{subsec:Definition-of-QCSP}. 
    
    For technical reasons it is more convenient to analyze the complexity of the
    \emph{negations} of these properties, \ie\ we focus on the complexity of determining
    whether the property \emph{does not hold}. So we prove that the negations
    are $\Sigma_k^p$-hard and belong to $\Sigma_{k+1}^p$. (The problem of testing whether
    a $\Sigma_k^p$QBF is false is $\Pi_k^p$-complete.)

    \begin{proof}(membership results)
      For \textbf{consistency}, membership in $\Sigma_{k}^{p}$ is shown as follows: we 
      are given a formula $\phi$ of the aforementioned form, as well as $a$ and $x_i$, and we
      want to test whether $\exists t \in \out^\phi. ~ t_{x_i} = a$. 
      We use a reduction similar to the one used by 
      Prop. \ref{prop-trick-for-pspace-upper-bound}, and construct a formula which
      is true iff the property holds.
      The formula used in Prop. \ref{prop-trick-for-pspace-upper-bound} imposes
      additional constraints whose role is to make sure that the outcome belongs to
      the set of scenarios of any winning strategy of the produced formula. In our case the
      outcome in question is quantified existentially and is of the form
      $\langle v_1, \dots, v_n \rangle$ with $v_i = a$. 
      We obtain the formula:
      \begin{equation}
        \psi : ~~ 
        \exists v_1, \dots v_n. ~
        \exists M_1. ~ \forall M_2 \cdots Q_k M_k.  ~ 
        (B \wedge C \wedge v_i = a)
        \label{eq-ph-main-formula}
      \end{equation}
      
      \noindent where each variable $v_i$ ranges over $D_{x_i}$ and $B$ is 
      the conjunction:
      \[
        \bigwedge_{x_i \in E}
        \left(
          \left(\bigwedge_{y_j \in A_{i-1}} y_j = v_j \right) 
          \rightarrow (x_i = v_i)
        \right) 
      \]
      
      Note that the existentially quantified variables
      $\langle v_1, \dots, v_n \rangle$ are not redundant with the $x_j$s: we want to 
      impose that \emph{at least} one of the outcomes of $\psi$ assign $x_i$ to $a$, whereas
      simply adding the constraint $x_i = a$ would enforce it for every scenario of 
      any strategy. Formula $\psi$ is true iff there exists a tuple 
      $t \in \out^\phi$ such that $t_{x_i} = a$ is a direct consequence of 
      Prop. \ref{prop-trick-for-pspace-upper-bound}. Formula $\psi$ is itself
      a $\Sigma_k$-QBF and we can therefore determine whether it is true in $\Sigma_k^p$.
      
      Non-\textbf{implication} ($\exists t \in \out. ~ t_{x_i} \not= a$),
      Eq. \ref{eq-ph-main-formula} is simply replaced by:
      \[
        \psi : ~~ 
        \exists v_1, \dots v_n. ~
        \exists M_1. ~ \forall M_2 \cdots Q_k M_k.  ~ 
        (B \wedge C \wedge \fbox{$v_i \not= a$})
      \]
      
      Non-\textbf{determinacy} is expressed as
      $\exists t \in \out. ~\exists b \not= t_{x_i}. ~t[x_i := b] \in \out$
      or, equivalently, as
      $\exists t \in \out. ~\exists t' \in \out. ~ 
      t'|_{X \setminus\{x_i\}} = t|_{X \setminus\{x_i\}}  
      \wedge t'_{x_i} \not= t_{x_i}$. 
      We have to assert the joint existence of the two outcomes $t$ and $t'$,
      whose values on variables $x_1 \dots x_n$ are noted 
      $\langle v_1, \dots, v_n \rangle$ and $\langle v'_1, \dots, v'_n \rangle$,
      respectively. We obtain:
      \[
        \psi : ~~ 
        \exists v_1, \dots v_n. ~
        \exists v'_1, \dots v'_n. ~
        \left(\begin{array}{rl}
          & \bigwedge_{j \not= i} v'_j = v_j 
          ~~ \wedge ~~
           v'_i \not= v_i
        \\ \wedge 
          & \exists M_1. ~ \forall M_2 \cdots  Q_k M_k. ~ 
            (B \wedge C )
        \\ \wedge 
          & \exists M'_1. ~ \forall M'_2 \cdots Q'_k M'_k. ~ 
            (B' \wedge C')
        \end{array}\right)
      \]
      Now we note that the two matrices $(B \wedge C)$ and $(B' \wedge C')$
      are imposed on disjoint sets of variables (the unprimed and the primes
      variables, respectively), so we can rewrite the 
      previous formula in a $\Sigma_k$ form, as follows:
      \[
        \psi : ~~ 
        \exists v_1, \dots v_n. ~
        \exists v'_1, \dots v'_n. ~
        \left(\begin{array}{l}
           \bigwedge_{j \not= i} v'_j = v_j 
          ~~ \wedge ~~
           v'_i \not= v_i
          ~~ \wedge 
        \\
           \exists M_1, M'_1. ~ \forall M_2, M'_2 
        \\  ~~~~~~~
           \cdots 
            Q_k M_k, M'_k. ~ 
            (B \wedge C \wedge B' \wedge C')
        \end{array}\right)
      \]
      
      Non-\textbf{dependence} can be stated as
      $\exists t \in \out. ~\exists t' \in \out. ~
           (t|_V = t'|_V) \wedge (t_{x_i} \not= t'_{x_i})$,
      relying on the fact that the domain only has two values;
      the proof is similar except that $\psi$ has the following form:
      \[
        \psi : ~~ 
        \exists v_1, \dots v_n. ~
        \exists v'_1, \dots v'_n. ~
        \left(\begin{array}{l}
          \fbox{$\bigwedge_{x_j \in V} v'_j = v_j$}
          ~~ \wedge ~~
           v'_i \not= v_i
          ~~ \wedge ~~
        \\
          \exists M_1, M'_1. ~ \forall M_2, M'_2 
        \\ ~~~~~~~
           \cdots 
            Q_k M_k, M'_k. ~ 
            (B \wedge C \wedge B' \wedge C')
        \end{array}\right)
      \]

      For the other properties it is less obvious to see whether the 
      upper bound of $\Sigma_k^p$ holds, because      
      their negations are defined as follows:

      \begin{itemize}
      
      \item Non-\textbf{fixability} can be expressed as 
      $\exists t \in \out. ~ t[x_i := a] \not\in \out$ or, equivalently, 
      $\exists t \in \out. ~ \exists t' \not\in \out. ~ 
      t'|_{X \setminus\{x_i\}} = t|_{X \setminus\{x_i\}}  
      \wedge t'_{x_i} = a$;
 
      \item Non-\textbf{substitutability} as 
      $\exists t \in \out. ~ (t_{x_i} = a) \wedge (t[x_i:=b] \not\in \out)$ or, equivalently, 
      $\exists t \in \out. ~ \exists t' \not\in \out. ~ 
      t'|_{X \setminus\{x_i\}} = t|_{X \setminus\{x_i\}}  
      \wedge t_{x_i} = a \wedge t'_{x_i} = b$;
      
      \item Non-\textbf{removability} as 
      $\exists t \in \out. ~(t_{x_i} = a) \wedge (\forall b \neq a. ~ t[x_i := b] \not\in \out)$
      or, equivalently, as
      $\exists t \in \out. ~ \forall t' \not\in \out. ~ 
      (t'|_{X \setminus\{x_i\}} = t|_{X \setminus\{x_i\}})  \rightarrow t_{x_i} = t'_{x_i}$;
%      (here again we use the fact that the domain the domain has only two values);

      \item Non-\textbf{irrelevance} is expressed as
      $\exists t \in \out. ~\exists t' \not\in \out. ~ 
      t'|_{X \setminus \{x_i\}} = t|_{X \setminus \{x_i\}}$.
      
      \end{itemize}
      
      The problem is that in each case we need to find both an outcome $t$ and another
      tuple $t'$ which is not an outcome. The quantifier pattern for asserting that
      $t$ is not an outcome is now of the form 
      $\forall M_1. ~ \exists M_2 \cdots \overline{Q_k} M_k$, where $\overline{Q_k}$ is 
      the dual quantifier to $Q_k$. For instance for non-irrelevance the obtained 
      formula has the following form:
      \[
        \exists v_1, \dots v_n. ~
        \exists v'_1, \dots v'_n. ~
        \left(\begin{array}{rl}
          & \bigwedge_{j \not= i} v'_j = v_j 
        \\ \wedge 
          & \exists M_1. ~ \forall M_2 \cdots  Q_k M_k. ~ 
            (B \wedge C )
        \\ \wedge 
          & \forall M'_1. ~ \exists M'_2 \cdots \overline{Q_k}_k M'_k. ~ 
            (\neg B' \vee \neg C')
        \end{array}\right)
      \]
      
      Similarly to before, the variables involved in the matrics
      $(B \wedge C )$ and $(\neg B' \vee \neg C')$ are disjoint and we can merge them
      into one prefix. We rename the indexing of the primed blocks as follows:
      \[
        \exists v_1, \dots v_n. ~
        \exists v'_1, \dots v'_n. ~
        \left(\begin{array}{rl}
          & \bigwedge_{j \not= i} v'_j = v_j 
        \\ \wedge 
          & \exists M_1. ~ \forall M_2 \cdots  Q_k M_k. ~ 
            (B \wedge C )
        \\ \wedge 
          & \forall M'_2. ~ \exists M'_3 \cdots Q_{k+1} M'_{k+1}. ~ 
            (\neg B' \vee \neg C')
        \end{array}\right)
      \]
      and obtain:
      \[
        \exists v_1, \dots v_n. ~
        \exists v'_1, \dots v'_n. ~
        \left(\begin{array}{l}
           \bigwedge_{j \not= i} v'_j = v_j 
           ~~ \wedge 
        \\
           \exists M_1. ~ \forall M_2, M'_2.  ~\exists M_3, M'_3. 
        \\ ~~~~~~~
            \cdots  Q_k M_k. ~ Q_{k+1} M_{k+1}~ 
            ((B \wedge C ) \wedge
            (\neg B' \vee \neg C'))
        \end{array}\right)
      \]
      which is in $\Sigma_{k+1}^p$ form.
      We obtain a similar formula with minor changes for  fixability,
      substitutability and removability.
    \end{proof}
      
    \begin{proof}(hardness)
      The hardness part is easy, because in all the reductions used in 
      Prop. \ref{prop:dproperties-PSPACEc}
      to show the PSPACE-hardness of the properties, we reduced the problem 
      of determining whether a QCSP $\phi$ is false to the problem of checking the 
      considered property for a new formula $\psi$. The new formula $\psi$ was 
      constructed by introducing a new existential variable and this variable could be
      added into any quantifier block. Because of that, we can always make sure
      that the quantifier prefix of $\psi$ follows the same alternation as the one of
      $\phi$, and we can therefore reduce the problem of determining whether a 
      $\Sigma_k^p$QBF is false to the problem of testing the considered property
      is verified by a $\Sigma_k^p$QBF. 
      
      For instance the reduction used to prove that inconsistency is PSPACE-complete
      was as follows: we reduced any QCSP 
      $\phi: ~ \exists M_1. ~ \forall M_2. ~ \cdots Q_k M_k. ~ C$ to the QCSP 
      $\psi: ~ \exists M'_1, \{x\}. ~ \forall M_2. ~ \cdots Q_k M_k. ~ C$ with $D_x = \{a\}$.
      We had not specified the precise existential block in which the new variable $x$
      was added because the proof was precisely independent of that. We can now impose that
      it be inserted in the first block $M_1$. This shows that we can reduce the problem of 
      falsity for $\Sigma_k^p$QBFs to the problem of inconsistency for $\Sigma_k^p$QBFs. 
      Similarly, all the other proofs can be directly adapted to bounded quantifier alternations.
    \end{proof}

  \begin{proposition}
    Let $\phi = \langle X, Q, D, C \rangle$ be a QCSP where $C =
    \{c_1, \dots, c_m\}$. We denote by $\phi_{k}$ the QCSP $\langle
    X, Q, D, \{c_k\} \rangle$ in which only the $k$-th constraint is
    considered. We have, for all $x_i \in X$, $V \subseteq X$, 
    and $a, b \in D_{x_i}$:

    \begin{itemize}
    
    \item $\betweenparenth{\bigvee_{k \in 1 .. m} \inconsistent^{\phi_k}(x_i, a)}
      \implication  \inconsistent^{\phi}(x_i, a)$;

    \item $\betweenparenth{\bigvee_{k \in 1 .. m} \implied^{\phi_k}(x_i, a)}
      \implication  \implied^{\phi}(x_i, a)$;
      
    \item $\betweenparenth{\bigwedge_{k \in 1 .. m} \deepfixable^{\phi_k}(x_i, a)}
      \implication  \deepfixable^{\phi}(x_i, a)$;
      
    \item $\betweenparenth{\bigwedge_{k \in 1 .. m} \deepsubstitutable^{\phi_k}(x_i, a, b)}
      \implication  \deepsubstitutable^{\phi}(x_i, a, b)$;
      
    \item $\betweenparenth{\bigwedge_{k \in 1 .. m} \deepinterchangeable^{\phi_k}(x_i, a, b)}
      \implication  \deepinterchangeable^{\phi}(x_i, a, b)$;
      
    \item $\betweenparenth{\bigvee_{k \in 1 .. m} \determined^{\phi_k}(x_i)}
      \implication  \determined^{\phi}(x_i)$;
      
    \item $\betweenparenth{\bigwedge_{k \in 1 .. m} \deepirrelevant^{\phi_k}(x_i)}
      \implication  \deepirrelevant^{\phi}(x_i)$;
      
    \item $\betweenparenth{\bigvee_{k \in 1 .. m} \dependent^{\phi_k}(V, x_i)}
      \implication  \dependent^{\phi}(V, x_i)$.
      
    \end{itemize} 
  \end{proposition}

    \begin{proof}
      These propositions rely on the following
      \emph{monotonicity} property of the set of outcomes: if we have
      two QCSPs $\phi_1 = \langle X, Q, D, C_1\rangle$ and $\phi_2 =
      \langle X, Q, D, C_2\rangle$ (with the same quantifier prefix)
      and if $\sol^{\phi_1} \subseteq \sol^{\phi_2}$ 
      then 
      $\out^{\phi_1} \subseteq \out^{\phi_2}$. This is easy to see:
      any winning strategy $s$ for $\phi_1$ is such that 
      $\sce(s) \subseteq \sol^{\phi_1}$. Then it is also such that 
      $\sce(s) \subseteq \sol^{\phi_2}$ and it is a winning strategy for 
      $\phi_2$. 
      
      The proofs for inconsistency, implication and determinacy 
      directly follow:
      
      \begin{itemize}
      
      \item For inconsistency: if for some $k$ we have $\forall t \in
      \out^{\phi_k}. ~ t_{x_i} \not= a$, then we also have $\forall
      t \in \out^{\phi}. ~ t_{x_i} \not= a$,
      because $\out^{\phi} \subseteq \out^{\phi_k}$. 

      \item
      For implication: if for some $k$ we have $\forall t \in
      \out^{\phi_k}. ~ t_{x_i} = a$, then we also have $\forall
      t \in \out^{\phi}. ~ t_{x_i} = a$,
      because $\out^{\phi} \subseteq \out^{\phi_k}$. 
      
      \item
%       For determinacy: if for some $k$ we have $\forall t \in
%       \out^{\phi_k}. ~ t_{x_i} \not\in \out^{\phi_k}$, then we also have $\forall
%       t \in \out^{\phi}. ~ t_{x_i} \not\in \out^{\phi_k} \supseteq \out^{\phi}$. 
%      
      For determinacy: if for some $k$ we have $\forall t \in
      \out^{\phi_k}. ~ \forall b \not= t_{x_i}. ~ t[x_i := b] \not\in \out^{\phi_k}$, 
      then we also have $\forall
      t \in \out^{\phi}. ~ \forall b \not= t_{x_i}. ~ t[x_i := b] \not\in \out^{\phi} \subseteq \out^{\phi_k}$.

      \item
      For dependence: if for some $k$ we have $\forall t, t' \in \out^{\phi_k}. ~
      t|_V = t'|_V \implication\ t_{x_i} = t'_{x_i}$, then we also have
      $\forall t, t' \in \out^{\phi}. ~
      t|_V = t'|_V \implication\ t_{x_i} = t'_{x_i}$
      because $\out^{\phi} \subseteq \out^{\phi_k}$. 

      \end{itemize}

      Consider now deep fixability. We assume that forall $k$ and forall
      $t \in \out^{\phi_k}$ we have $t[x_i:=a] \in \out^{\phi_k}$. We consider
      a tuple $t \in \out^{\phi}$; since $\out^{\phi} \subseteq \out^{\phi_k}$ for all $k$,
      $t$ belongs to every $\out^{\phi_k}$, 
      and therefore $t[x_i := a]$ belongs to every $\out^{\phi_k}$ and therefore to every
      $\sol^{\phi_k}$. We conclude that $t[x_i := a] \in \sol^{\phi} = \bigcap_k \sol^{\phi_k}$. 
      We have seen in Prop. \ref{prop-no-outcome} that deep fixability
      can be stated as $\forall t \in \out^\phi. ~ t[x_i := a] \in \sol^\phi$, 
      which completes the proof.    

      For deep substitutability. We assume that forall $k$ and forall
      $t \in \out^{\phi_k}$ we have $t_{x_i} = a \implication\ t[x_i := b] \in \out^{\phi_k}$.
      We consider a tuple $t \in \out^{\phi}$ such that $t_{x_i}=a$; 
      since $\out^{\phi} \subseteq \out^{\phi_k}$ for all $k$,
      $t$ belongs to every $\out^{\phi_k}$, 
      and therefore $t[x_i := b]$ belongs to every $\out^{\phi_k}$ and therefore to every
      $\sol^{\phi_k}$. We conclude that $t[x_i := b] \in \sol^{\phi} = \bigcap_k \sol^{\phi_k}$. 
      We have seen in Prop. \ref{prop-no-outcome} that deep substitutability
      can be stated as $\forall t \in \out^\phi. ~ t_{x_i} = a \implication t[x_i := b] \in \sol^\phi$, 
      which completes the proof. 

      For deep interchangeability the result follows since two values $a$ and $b$
      are interchangeable iff $a$ is substitutable to $b$ and $b$ is substitutable to $a$.

      For deep irrelevance we use a result of Prop. \ref{prop-classifications}: 
      variable $x_i$ is irrelevant iff it is fixable to any value $a \in D_{x_i}$. 
      If forall $k$ we have $\deepirrelevant^{\phi_k}(x_i)$ then we have,
      forall $k$ and forall $a \in D_{x_i}$, $\deepfixable^{\phi_k}(x_i, a)$. It follows that,
      forall $a \in D_{x_i}$, $\deepfixable^{\phi}(x_i, a)$.
      This is equivalent to $\deepirrelevant^{\phi}(x_i)$.
    \end{proof}

\end{document}